\documentclass[a4paper,abstract=true,headings=normal,11pt]{scrartcl}
\setkomafont{author}{\large}
\usepackage{authblk}

\usepackage[utf8]{inputenc}
\usepackage{latexsym}
\usepackage{amsmath}
\usepackage{amsthm}
\usepackage{amssymb}
\usepackage{mathtools}
\usepackage{mathabx}
\usepackage{graphicx}
\usepackage{ifthen}
\usepackage{calc}
\usepackage{stmaryrd}
\usepackage{enumitem}
\usepackage{tikz}
\usepackage[colorlinks,linkcolor=red!80!black,urlcolor=blue!80!black,citecolor=green!70!black]{hyperref}
\usepackage{BOONDOX-cal}

\usepackage[backend=bibtex,style=numeric]{biblatex}
\renewbibmacro*{doi+eprint+url}{\iftoggle{bbx:url}
  {\iffieldundef{doi}{\usebibmacro{url+urldate}}{}}
  {}\newunit\newblock
  \iftoggle{bbx:eprint}
  {\usebibmacro{eprint}}
  {}\newunit\newblock
  \iftoggle{bbx:doi}
  {\printfield{doi}}
  {}}

\allowdisplaybreaks

\makeatletter

\newsavebox{\tempbox}
\renewcommand{\@makecaption}[2]{
  \vspace{10pt}
  \sbox{\tempbox}{\textbf{#1.} #2}
  \ifthenelse{\lengthtest{\wd\tempbox > \linewidth}}{
    \textbf{#1.} #2\par
  }{
    \begin{center}
      \textbf{#1.} #2
    \end{center}
  }
}

\makeatother

\numberwithin{equation}{section}

\numberwithin{figure}{section}

\newtheoremstyle{mythm}{}{}{\itshape}{}{\bfseries}{.}{.5em}{\thmname{#1}~\thmnumber{#2}\ifthenelse{\equal{\thmnote{#3}}{}}{}{~(\thmnote{#3})}}

\newtheoremstyle{mydefn}{}{}{\upshape}{}{\bfseries}{.}{.5em}{\thmname{#1}~\thmnumber{#2}\ifthenelse{\equal{\thmnote{#3}}{}}{}{~(\thmnote{#3})}}

\newtheoremstyle{myremark}{}{}{\upshape}{}{\itshape}{.}{.5em}{\thmname{#1}~\thmnumber{#2}\ifthenelse{\equal{\thmnote{#3}}{}}{}{~(\thmnote{#3})}}

\theoremstyle{mythm}
\newtheorem{theorem}{Theorem}[section]
\newtheorem{lemma}[theorem]{Lemma}

\newtheorem{corollary}[theorem]{Corollary}

\theoremstyle{mydefn}
\newtheorem{definition}[theorem]{Definition}
\newtheorem{example}[theorem]{Example}

\theoremstyle{myremark}
\newtheorem{remark}[theorem]{Remark}
\theoremstyle{mythm}

\newcommand{\uend}{\hfill$\lrcorner$}

\newcounter{claimcounter}
\newenvironment{claim}[1][]{
  \renewcommand{\proof}{\smallskip\par\noindent\textit{Proof. }}
  \medskip\par\noindent \ifthenelse{\equal{#1}{}}{\setcounter{claimcounter}{0}\refstepcounter{claimcounter}\textit{Claim~\arabic{claimcounter}.}
  }{\ifthenelse{\equal{#1}{resume}}{\refstepcounter{claimcounter}\textit{Claim~\arabic{claimcounter}.}
    }{\textit{Claim~#1.}
    }
  }
}{
  \par\medskip
}

\newlist{caselist}{description}{10}
\setlist[caselist]{font=\itshape\mdseries}

\setenumerate[1]{label=(\arabic*)}
\newlist{eroman}{enumerate}{2}
\setlist[eroman,1]{label=(\roman*)}
\setlist[eroman,2]{label=(\alph*)}
\newlist{ealph}{enumerate}{1}
\setlist[ealph]{label=(\Alph*)}
\newlist{mea}{enumerate}{1}
\setlist[mea]{label=(\alph*)}

\newcounter{nlistcounter}

\usepackage{color}
\definecolor{blau}{RGB}{0,84,159}
\definecolor{hellblau}{RGB}{142,168,229}
\definecolor{petrol}{RGB}{0,97,101}
\definecolor{tuerkis}{RGB}{0,152,161}
\definecolor{gruen}{RGB}{87,171,39}
\definecolor{maigruen}{RGB}{189,205,0}
\definecolor{gelb}{RGB}{255,237,0}
\definecolor{orange}{RGB}{255,128,0}
\definecolor{magenta}{RGB}{227,0,102}
\definecolor{rot}{RGB}{204,7,30}
\definecolor{bordeaux}{RGB}{161,16,53}
\definecolor{violett}{RGB}{97,33,88}
\definecolor{lila}{RGB}{122,111,172}
\definecolor{grey}{gray}{0.7}
\definecolor{mittelblau}{RGB}{0,128,255}

\newcommand{\bigmid}{\mathrel{\big|}}
\newcommand{\Bigmid}{\mathrel{\Big|}}

\renewcommand{\tilde}{\widetilde}
\renewcommand{\hat}{\widehat}

\renewcommand{\vec}[1]{\boldsymbol{#1}}

\DeclareMathOperator{\rg}{rg}
\DeclareMathOperator{\dom}{dom}

\newcommand{\Fraisse}{Fra\"{\i}ss{\'e}}

\renewcommand{\phi}{\varphi}
\renewcommand{\epsilon}{\varepsilon}

\newcommand{\Nat}{{\mathbb N}}
\newcommand{\PNat}{{\mathbb N}_{\geq 1}}
\newcommand{\NN}{\Nat}
\newcommand{\NNpos}{\PNat}

\newcommand{\CA}{{\mathcal A}}
\newcommand{\CB}{{\mathcal B}}
\newcommand{\CC}{{\mathcal C}}

\newcommand{\CG}{{\mathcal G}}

\newcommand{\CN}{{\mathcal N}}

\newcommand{\CS}{{\mathcal S}}

\newcommand{\CX}{{\mathcal X}}

\usepackage[textwidth=2cm]{todonotes}

\usepackage{algorithmic}

\algsetup{linenodelimiter=.}

\usepackage{float}
\newfloat{algorithm}{ht}{alg}
\floatname{algorithm}{Algorithm}

\DeclareMathOperator{\dist}{dist}

\DeclareMathOperator{\tp}{tp}
\DeclareMathOperator{\ltp}{ltp}
\DeclareMathOperator{\itp}{itp}

\DeclareMathOperator{\ar}{ar}
\DeclareMathOperator{\qr}{qr}

\DeclareMathOperator{\rk}{rk}

\newcommand{\free}{\ensuremath\textup{free}}

\newcommand{\logic}[1]{\textsf{\upshape #1}}

\newcommand{\FO}{\logic{FO}}

\newcommand{\set}[1]{\ensuremath{\{#1\}}}
\newcommand{\setc}[2]{\ensuremath{\{#1 \mid #2\}}}
\newcommand{\deff}{\ensuremath{\coloneqq}}

\newcommand{\normalised}[1]{\ensuremath{\tilde{#1}}}

\newcommand{\FOplus}{\ensuremath{\FO^+}}
\newcommand{\normFOplus}{\normalised{\FO}{}^+}
\newcommand{\LocFOplus}{\ensuremath{\logic{loc}\FOplus}}
\newcommand{\normLocFOplus}{\ensuremath{\logic{loc}\normFOplus}}

\newcommand{\IS}{\ensuremath{\logic{IS}}}
\newcommand{\normIS}{\ensuremath{\normalised{\IS}}}

\renewcommand{\mid}{:}
\renewcommand{\bigmid}{\, : \,}
\renewcommand{\Bigmid}{\; : \; }

\newcommand{\myrho}{\delta}
\newcommand{\rankletter}{\ensuremath{\varrho}}
\newcommand{\rHalbe}{\ensuremath{\frac{r}{2}}}
\newcommand{\psibl}{\ensuremath{\lambda}} \newcommand{\vvar}{\ensuremath{x}}

\newcommand{\conn}{c}

\begin{document}
\title{A Rank-Preserving Gaifman Normal Form}
\author[1]{Martin Grohe}
\affil[1]{RWTH Aachen University}
\author[2]{Nicole Schweikardt}
\affil[2]{Humboldt-Universität zu Berlin}
\date{}
\maketitle

\begin{abstract}
We introduce a rank measure for first-order logic and prove a ``rank-preserving'' version of Gaifman's theorem. Compared to earlier ``rank-preserving locality theorems'' (in particular, [Grohe, Kreutzer, Siebertz, JACM 2017]), our theorem is not only much simpler, but also yields formulas in exactly the same normal form as Gaifman's original theorem. 

As an application of this theorem, we give a simplified proof of the
main result of [Grohe, Kreutzer, Siebertz, JACM 2017] that first-order
properties of nowhere dense structures can be decided in almost linear time.
\end{abstract}

\section{Introduction}
\label{sec:intro}

\emph{Locality} is a fundamental property of first-order logic with many applications, mainly in proving expressivity lower bounds (e.g., \cite{Fagin75,Hanf65,KuskeS17,Libkin97,Libkin00}) and algorithmic meta theorems (e.g., \cite{DreierMS23,FrickG01,GroheKS17,SchweikardtSV22,Seese96}).
Locality comes in different forms (cf.~\cite{Libkin97,Libkin04}), but arguably the most important is Gaifman's Theorem \cite{Gaifman82}. We say that a formula of first-order logic is in \emph{Gaifman normal form} if it is a Boolean combination of
local formulas and of basic local sentences, i.e., sentences of the form
\begin{equation}\label{eq:intro-bl}
  \exists x_1\cdots\exists x_m\left(\bigwedge_{1\le i<j\le
      m}\dist(x_i,x_j)>2r\ \wedge\bigwedge_{1\le i\le m}\psibl(x_i)\right),
\end{equation}
where $\dist(x,y)>2r$ is a formula expressing that the distance between $x$ and $y$ is greater than $2r$ and the formula $\psibl(x)$ is $r$-local around $x$ (cf.\  Sections~\ref{sec:prel} and~\ref{sec:Gaifman} for all necessary definitions). Then Gaifman's Theorem just states that every formula of first-order logic is equivalent to a formula in Gaifman normal form.
The currently known proofs of this theorem (cf.,
\cite{EbbinghausF99,Gaifman82,KeislerLotfallah2004})
transform $\phi$ into a formula in Gaifman normal form that does not preserve the
quantifier rank, i.e., the local formulas and the formulas $\psibl(x)$ in the basic local
sentences may have a larger quantifier rank than $\phi$. In fact, we give an example---to the best of our knowledge the first such example---showing that an increase in the quantifier rank is necessary~(cf.\  Example~\ref{ex:1}).

For some applications of Gaifman's Theorem, specifically for proving
algorithmic meta theorems for nowhere dense structures
\cite{DreierMS23,GroheKS17,GroheS18,SchweikardtSV22}, this increase in
quantifier rank is a serious problem. The reason is that Gaifman's
Theorem needs to be applied recursively, and the depth of the
recursion depends on the quantifier rank of the formulas that still
need
to be processed. To resolve this issue, Grohe, Kreutzer, and Siebertz
\cite{GroheKS17} introduced a new, two-parameter rank measure
(``$q$-rank at most $\ell$'') and proved a locality theorem
preserving the rank of these formulas with respect to this rank
measure. Unfortunately, this theorem is far more complicated than
Gaifman's original theorem and quite tedious to apply.
The locality theorem of \cite{GroheKS17} only applies to
formulas with at most one free variable. In \cite{GroheS18}
it was extended to arbitrary formulas, and this
was used in \cite{GroheS18,SchweikardtSV22} 
to obtain algorithmic meta theorems for counting and enumeration
problems, respectively.
This extension was even more complicated and,
unfortunately, it turned out to be flawed:~Charlotte Lenz
 found a subtle error in the proof of
\cite[Theorem~7.1]{GroheS18}. The present paper grew out of
our effort to fix this error, which ultimately led to a much simpler
rank-preserving locality theorem. 

We introduce a new measure for the rank of first-order formulas. It is related to the notion of
``$q$-rank at most $\ell$'' of \cite{GroheKS17,GroheS18}, yet no
longer requires the two parameters $q$ and $\ell$. That is, in our
setting the rank of a formula is just one nonnegative integer. Then we
prove a rank-preserving version of Gaifman's original theorem: every
first-order formula $\phi$ is equivalent to a formula $\phi'$
in Gaifman normal form where the rank of all local formulas in
$\phi'$ and the rank of all local formulas $\lambda(x)$ in the
basic local sentences \eqref{eq:intro-bl} in $\phi'$ is at most
the rank of $\phi$
(cf.\  Theorem~\ref{thm:rankGaifman} for the precise statement).

As an application of this theorem, we give a simplified proof of the
main result of \cite{GroheKS17} that first-order properties of nowhere
dense structures can be decided in almost linear time.
We also provide a refined version of the theorem (Theorem~\ref{thm:forGroheS18}) that fixes the error of \cite[Theorem~7.1]{GroheS18}.

The rest of the paper is organised as follows.
Section~\ref{sec:prel} fixes notation on relational structures,
first-order logic $\FO$ and its syntactic extension by distance atoms
$\FOplus$, and provides a normalisation procedure for these formulas.

Section~\ref{sec:Gaifman} provides the necessary background on
Gaifman's normal form theorem for $\FO$ and proves that this normal
form does not preserve the quantifier rank of formulas.

Section~\ref{sec:FOpluspq} introduces the subset $\FOplus[p,q]$ of
$\FOplus$, provides an Ehrenfeucht-\Fraisse\ game  for this logic, and
proves an Ehrenfeucht-\Fraisse\ theorem which states that
indistinguishability by formulas in $\FOplus[p,q]$ coincides with
Duplicator having a winning strategy for this game (Theorem~\ref{thm:EF}).

Section~\ref{sec:local} introduces the subset $\LocFOplus[p,q]$ of
\emph{syntactically local} formulas in $\FOplus[p,q]$ and proves that
each such formula is $r$-local for a particular number $r=\myrho(p,q)$ that only
depends on $p,q$ (Lemma~\ref{lemma:locality}).

Section~\ref{sec:main} contains our technical main result. It proves a
Gaifman normal form theorem  which enables us to transform every
formula $\phi$ in $\FOplus[p,q]$ into an equivalent formula $\phi'$ in Gaifman
normal form such that each of the local formulas in $\phi'$ belongs to
$\LocFOplus[p,q]$ and each formula $\psibl(x)$ of a basic local sentence
in $\phi'$ belongs to $\LocFOplus[p{-}1,q]$
(Theorem~\ref{thm:rpGaifman}).

Section~\ref{sec:FixingThm71} presents a refined version of
Theorem~\ref{thm:rpGaifman}, which
goes beyond Theorem~\ref{thm:rpGaifman} in two ways, both of which are important
for applications such as counting and enumeration: first, the 
normal form is a disjunction of mutually exclusive formulas, and second, the free
variables in the local formulas are always interpreted by elements
that are close to each other in the Gaifman graph of the structure.
Specifically, Theorem~\ref{thm:forGroheS18} fixes the flaw in \cite[Theorem~7.1]{GroheS18}.

Section~\ref{sec:rank-new} introduces a new rank measure $\rk(\phi)$ for
formulas $\phi$ in $\FOplus$. As a corollary to
Theorem~\ref{thm:rpGaifman}, we obtain a rank-preserving Gaifman
normal form theorem
for $\FOplus$, which enables us to transform every $\phi$ in $\FOplus$
of rank $\rankletter\deff\rk(\phi)$ into an equivalent $\FOplus$ formula $\phi'$ in
Gaifman normal form such that each of the local formulas in $\phi'$
as well as each formula $\psibl(x)$ of a basic local sentence in $\phi'$   
is $r$-local and has rank at most $\rankletter$
(Theorem~\ref{thm:rankGaifman}).
Here, $r$ can be chosen to be a particular number that only depends on~$\rankletter$.

Section~\ref{sec:app}, as an application of our rank-preserving Gaifman normal form theorem,
provides a simplified proof of the main result of \cite{GroheKS17},
i.e., it shows that model-checking of first-order sentences on nowhere
dense classes of structures can be done in time $O(n^{1+\epsilon})$
for every $\epsilon>0$.

Section~\ref{sec:conclusion}
provides a short summary and points out an open research question.

\section{Preliminaries}
\label{sec:prel}
We denote the set of nonnegative integers by $\Nat$ and the set of
positive integers by $\PNat$. For $k,\ell\in \NN$ we let
$[k,\ell]\deff\setc{i\in\NN}{k\leq i\leq \ell}$, and we let
$[k]\deff[1,k]$.

We denote tuples by bold-face letters, as in $\vec x=(x_1,\ldots,x_k)$. For a subset $I\subseteq[k]$ of the index set, by $\vec x_I$ we denote the projection of $\vec x$ on the indices in $I$, i.e., if $I=\{i_1,\ldots,i_\ell\}$ with $i_1<\cdots<i_\ell$ then $\vec x_I\coloneqq(x_{i_1},\ldots,x_{i_\ell})$.

We assume that reader is familiar with basic notions of graph theory.
Graphs are always undirected and simple (no loops or
parallel edges) in this paper.
For a graph $G$ we write $V(G)$ and $E(G)$ to denote the set of
vertices and the set edges of $G$. For graphs $G$ and $H$ we write
$H\subseteq G$ and, equivalently, $G\supseteq H$ to indicate that $H$
is a subgraph of $G$. 
For a set $X\subseteq V(G)$, we denote the subgraph induced by $G$ on $X$ by $G[X]$, and we
let $G\setminus X\coloneqq G[V(G)\setminus X]$.

\subsection{Relational Structures}
A \emph{vocabulary} is a finite set $\sigma$ of relation symbols, each
relation symbol $R$ coming with an arity $\ar(R)\in\Nat$. Note that we
admit $0$-ary relation symbols, which is convenient in some
applications (cf.\ e.g.\ Section~\ref{sec:app}
and \cite{GroheS18}).
A \emph{$\sigma$-structure} is a tuple
$\CA=\big(A,\big(R^\CA\big)_{R\in\sigma}\big)$, where $A$, the \emph{universe} of $\CA$, is a non-empty set  and 
$R^\CA\subseteq A^{\ar(R)}$ for every $R\in\sigma$. 
We define
\emph{induced substructures} (denoted $\CA[X]$)
in the usual
way, i.e., for a non-empty set $X\subseteq A$, we let $\CA[X]$ be the
$\sigma$-structure with universe $X$ and relations $R^{\CA[X]}\deff
R\cap X^{\ar(R)}$ for every $R\in\sigma$.
For a vocabulary $\sigma'\supseteq \sigma$, a
\emph{$\sigma'$-expansion} of a $\sigma$-structure $\CA$ is a
$\sigma'$-structure $\CA'$ with $A'=A$ and $R^{\CA'}=R^{\CA}$ for all $R\in\sigma$.
The \emph{order} $|\CA|$ of a
structure $\CA$ is the cardinality of its universe $A$.

An \emph{isomorphism} between two $\sigma$-structures $\CA$ and $\CB$
is a bijective mapping $\pi\colon A\to B$ such that for all
$R\in\sigma$, for $r\deff \ar(R)$ and all tuples $(a_1,\ldots,a_r)\in
A^r$ we have $(a_1,\ldots,a_r)\in R^{\CA} \iff
(\pi(a_1),\ldots,\pi(a_r))\in R^{\CB}$. 

The \emph{Gaifman graph} $G_{\CA}$ of a $\sigma$-structure $\CA$ is
the graph with vertex set $A$ and edges $\set{a,b}$ for all
distinct $a,b\in A$ such that $a,b\in\{a_1,\ldots,a_{\ar(R)}\}$ for
some tuple $(a_1,\ldots,a_{\ar(R)})\in R^{\CA}$ for some
$R\in\sigma$. We generalise graph-theoretic notions such as
connectivity and degree to $\sigma$-structures via their Gaifman
graphs. Most importantly, the distance $\dist^\CA(a,b)$ between two
elements $a,b\in A$ is the length (i.e., the number of edges) of a shortest path from $a$ to $b$ in the Gaifman graph $G_\CA$, or $\infty$ if no such path exists.

For every $r\in\Nat$, the \emph{$r$-neighbourhood} of an element $a\in
A$ in a $\sigma$-structure $\CA$ is the set $N_r^\CA(a)\coloneqq\{b\in
A\mid\dist^\CA(a,b)\le r\}$. More generally, for a $k\in\NNpos$ and a tuple $\vec
a=(a_1,\ldots,a_k)\in A^k$, we let $N_r^\CA(\vec a)\coloneqq\bigcup_{i\in[k]}N^\CA_r(a_i)$. Moreover, the \emph{$r$-neighbourhood structure of $\vec a$} is the induced substructure $\CN_r^\CA(\vec a)\coloneqq\CA\big[N_r^\CA(\vec a)\big]$.

\emph{For the rest of this paper, we fix a
  vocabulary $\sigma$. Unless explicitly stated otherwise, all structures and all logical formulas are of vocabulary $\sigma$.}

\subsection{First-order logic $\FO$}
We assume that the reader is familiar with first-order logic $\FO$ (for background, cf.\ \cite{EbbinghausFT21}). The set of free variables of a formula $\phi$ is denoted by $\free(\phi)$. We write $\phi(x_1,\ldots,x_k)$ to stipulate
that $\free(\phi)\subseteq\{x_1,\ldots,x_k\}$. For a structure $\CA$
and elements $a_1,\ldots,a_k$ we write
$\CA\models\phi[a_1,\ldots,a_k]$ if $\CA$ with the variable assignment
$x_i\mapsto a_i$ for $i\in [k]$ satisfies $\phi$. The notation
$\phi(x_1,\ldots,x_k)$ is also convenient for substitutions; by
$\phi(y_1,\ldots,y_k)$ we denote the formula obtained from $\phi$ by
substituting $x_i$ with $y_i$, for all $i$ (and renaming bound
variables if necessary, cf.~\cite[Section~III.8]{EbbinghausFT21}).
The \emph{quantifier rank} $\qr(\phi)$ of a formula $\phi$ is defined in the usual way as the maximum nesting depth of quantifiers in $\phi$.

\subsection{The Ehrenfeucht-\Fraisse\  Game for $\FO$}
Let $\CA,\CB$ be $\sigma$-structures. A \emph{local
  isomorphism} from $\CA$ to $\CB$ is a mapping $\pi$
with domain $\dom(\pi)\subseteq A$ and range $\rg(\pi)\subseteq B$
that is an isomorphism between the induced substructures $\CA[\dom(\pi)]$ and $\CB[\rg(\pi)]$. 

Let $k,p\in\Nat$ and $\vec a=(a_1,\ldots,a_k)\in A^k,\vec
b=(b_1,\ldots,b_k)\in B^k$. The \emph{$p$-round Ehrenfeucht-\Fraisse\ game} on $(\CA,\vec a)$ and
$(\CB,\vec b)$ (for short: \emph{$p$-round EF game} on $(\CA,\vec a)$,
$(\CB,\vec b)$) is played by two players, \emph{Spoiler} and
\emph{Duplicator}, in $p$ rounds. For each $\ell\in[0,p]$, the
\emph{position} of a play after round $\ell$ is a pair $\vec
a^\ell,\vec b^\ell$, where $\vec a^\ell=(a_1,\ldots,a_{k+\ell})$
consists of the elements $a_1,\ldots,a_k$ of the tuple $\vec a$ and
additional elements
$a_{k+1},\ldots,a_{k+\ell}\in A$, and similarly $\vec b^\ell=(b_1,\ldots,b_{k+\ell})$
consists of the elements $b_1,\ldots,b_k$ of the tuple $\vec b$ and
additional elements $b_{k+1},\ldots,b_{k+\ell}\in B$.
In round $\ell{+}1$, for $\ell\in [0,p{-}1]$, Spoiler picks
\begin{itemize}
\item either an $a_{k+\ell+1}\in A$, and Duplicator answers by picking
  a $b_{k+\ell+1}\in B$,
\item or a $b_{k+\ell+1}\in B$, and Duplicator answers by picking
  an $a_{k+\ell+1}\in A$.
\end{itemize}
Duplicator wins the game if the mapping
$a_j\mapsto b_j$ for $j\in[k{+}p]$ is a local
isomorphism.

The well-known \emph{Ehrenfeucht-\Fraisse\ Theorem} (cf., e.g.,
\cite{EbbinghausF99,Libkin04}) states that Duplicator has a winning
strategy for the $p$-round EF game on $(\CA,\vec a)$,
    $(\CB,\vec b)$ if and only if 
  for all $\FO$ formulas $\phi(x_1,\ldots,x_k)$ of quantifier rank
  $\qr(\phi)\leq p$ we have
    $
      \CA\models\phi[\vec a]\iff\CB\models\phi[\vec b].
    $

\subsection{First-order logic with distance atoms $\FOplus$}\label{subsec:FOplus}
Note that for every $d\in\Nat$ there is an $\FO$ formula
$\dist_{\leq d}(x,y)$ such that for all
$\sigma$-structures $\CA$ and $a,b\in
A$ we have $\CA\models\dist_{\leq d}[a,b]\iff\dist^\CA(a,b)\le
d$. The formula $\dist_{\leq d}(x,y)$ depends on the vocabulary
$\sigma$ and can be chosen to be 
of quantifier rank $\lceil \log_2 d \rceil + \alpha-2$
for $\alpha\deff\max\setc{\ar(R)}{R\in\sigma}$, if $d\geq 1$ and
$\alpha\geq 2$, and of quantifier rank 0 if $d=0$ or $\alpha\leq 1$.
We will use the extension $\FOplus$ of $\FO$
that has additional atomic formulas
\[
  \dist(x,y)\le d\,,
\]
  for all variables $x,y$ and all $d\in\Nat$, with the obvious
  semantics. We call these the \emph{distance atoms}. The
  \emph{maximum distance}
  of an $\FOplus$ formula $\phi$ is the least $d\in\Nat$ such that for all distance atoms $\dist(x,y)\le d'$ appearing in $\phi$ it holds that $d'\le d$. 

Note that $\FOplus$ is a purely syntactic extension of $\FO$: if we
replace each distance atom $\dist(x,y)\le d$ in an $\FOplus$ formula
by the formula $\dist_{\leq d}(x,y)$, we obtain an equivalent $\FO$ formula. 
The point of this syntactic extension is that we treat the distance atoms really as atomic formulas rather than just abbreviations. In particular, they do not contribute to the quantifier rank, and they are not affected by the following normalisation procedure.

\subsection{Normalisation}\label{sec:normalisation}
Let us fix a total order of the relation symbols in $\sigma$. We also
fix a total order on the (countable) set of variables. Assume that
$\vvar_1,\vvar_2,\vvar_3,\ldots{}$ is an enumeration of the variables in this
order. We view $\FOplus$ formulas as strings over an alphabet
consisting of the symbols in $\sigma$, the variables $\vvar_i$ for
$i\in\NNpos$, the symbols $\dist$ and $\le$, a symbol $d$
for each $d\in \Nat$, symbols for the Boolean connectives and quantifiers, and parentheses and commas. Ordering this alphabet in some way based on the orders of $\sigma$ and the variables, we can order formulas lexicographically.
By induction on the quantifier rank, we define a \emph{normalised formula} $\tilde\phi$ for each $\FOplus$ formula $\phi$.  If $\phi$ is quantifier-free, then $\tilde\phi$ is obtained by transforming $\phi$ into conjunctive normal form, ordering the literals in each clause lexicographically and eliminating duplicates, then ordering the clauses lexicographically and eliminating duplicates. To make this procedure deterministic, we must define the conjunctive normal form in a canonical way. It is not important how exactly we do this. 

If $\phi$ has quantifier rank $p\ge 1$, then it is a Boolean
combination of atomic formulas and formulas $Qx\,\psi$ where
$\qr(\psi)\le p{-}1$ and $Q\in\{\exists,\forall\}$ and $x$ is a
variable. Towards defining $\tilde\phi$, the first thing we do is
normalise each of the formulas $\vartheta\coloneqq Qx\,\psi$. We choose the
minimum $i$ such that $\vvar_i\not\in\free(\vartheta)$, and we
let $\psi'$ be the formula obtained from $\psi$ by substituting
$x_i$ for $x$. Then, $\vartheta$ is equivalent to
$Q\vvar_i\,\psi'$. 
Afterwards, we normalise $\psi'$ by induction, and we let $\tilde\vartheta\coloneqq Q\vvar_i\,\tilde\psi'$. 
In the general case that $\phi$ is a Boolean combination of atomic
formulas and quantified formulas $\vartheta$ of the form $Qx\,\psi$, we first replace each quantified formula $\vartheta$ by $\tilde\vartheta$ and then proceed as for quantifier-free formulas, i.e., bring the Boolean combination into conjunctive normal form, order and eliminate duplicates.

\begin{remark}\label{rem:norm-local}
There is one technical issue we need to attend to: we want our normal
form to be compatible with restricted quantification of the form
$\exists y(\dist(x,y)\le d\wedge\psi)$. Therefore, for a formula
$\phi$ of the form $\exists y\big(\dist(x,y)\le d\wedge\psi(x,y,\vec
z)\big)$, we define $\tilde\phi$ to be the formula $\exists
\vvar_i\big(\dist(x,\vvar_i)\leq d\wedge\tilde\psi(x,\vvar_i,\vec
z)\big)$ obtained by first substituting the bound variable $y$ by the variable $\vvar_i$
for the minimum $i$ such that $\vvar_i\not\in\free(\phi)$
and then normalising $\psi(x,\vvar_i,\vec z)$.
\uend
\end{remark}

\noindent
Observe that $\tilde\phi$ has the same set of free variables, the same quantifier rank, and the same maximum distance as $\phi$. Furthermore, if $\phi$ is an $\FO$ formula (i.e., contains no distance atoms) then $\tilde\phi$ is an $\FO$ formula as well.

A \emph{normalised formula} is a formula $\tilde\phi$ for some 
$\FOplus$ formula $\phi$. Observe that for every normalised formula $\tilde\phi$ it holds that 
$\tilde{\tilde\phi}=\tilde\phi$.
It is easy to see that for every (finite) vocabulary $\sigma$, every
fixed finite set $X$ of variables, and all $p,d\in\Nat$ there are only
finitely many normalised formulas of vocabulary $\sigma$, quantifier
rank at most $p$, maximum distance at most $d$, and with all free variables in $X$; and the set of all these formulas is computable from $\sigma,p,d$, and $X$.

\section{Gaifman's Theorem}
\label{sec:Gaifman}

Let $r\geq 0$.
For a formula $\phi$ and a non-empty tuple $\vec z=(z_1,\ldots,z_k)$
of variables with $\free(\phi)\subseteq\set{z_1,\ldots,z_k}$, 
we say that $\phi(\vec z)$ is \emph{$r$-local} (around $\vec z$) if for all
$\sigma$-structures $\CA$ and all $\vec a=(a_1,\ldots,a_k)\in A^k$ we have
\begin{equation}\label{eq:DefLocal}
  \CA\models\phi[\vec a] \ \ \iff \ \ \CN_r^\CA(\vec a)\models\phi[\vec a];
\end{equation}
and $\phi(\vec z)$ is \emph{local} if it is $r$-local for some $r\geq 0$.

A \emph{basic local sentence} is an $\FO$ sentence of the form
\begin{equation}\label{eq:basic-local-Gaifman}
  \exists x_1\cdots\exists x_m\;\Bigl(\bigwedge_{1\le i<j\le m}\dist(x_i,x_j)>2r\;\wedge\;\bigwedge_{1\le i\le m}\psibl(x_i)\Bigr),
\end{equation}
where $r\geq 0$,
$m\in\PNat$, and $\psibl(x)$ is an $r$-local $\FO$ formula with one free variable $x$.
Here $\dist(x_i,x_j)>2r$ abbreviates the $\FO$ formula
$\neg\dist_{\leq \lfloor 2r \rfloor}(x_i,x_j)$; this condition ensures that the
$r$-neighbourhoods $N_r^\CA(x_i)$ and $N_r^\CA(x_j)$ are pairwise disjoint for $i\ne j$.

An $\FO$ formula is said to be in \emph{Gaifman normal form} if it is a
Boolean combination of local formulas and basic local sentences.
Gaifman's Theorem reads as follows.

\begin{theorem}[Gaifman \cite{Gaifman82}]\label{thm:gaifman}
  Every $\FO$ formula $\phi$ is equivalent to a
  formula $\phi'$ in Gaifman normal form.
  Moreover, $\free(\phi')=\free(\phi)$, and there is an algorithm
  that, upon input of $\phi$, 
  computes such a $\phi'$.
\end{theorem}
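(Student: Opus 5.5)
The plan is to prove Theorem~\ref{thm:gaifman} by induction on the structure of $\phi$, using the Ehrenfeucht-\Fraisse\ Theorem only implicitly through the finiteness of normalised formulas (Section~\ref{sec:normalisation}). Atomic formulas are $0$-local. Boolean combinations of formulas in Gaifman normal form are again in Gaifman normal form: a Boolean combination of local formulas around subtuples of $\vec z$ is local around $\vec z$ with the maximum radius. Hence the only real case is $\phi(\vec z)=\exists y\,\psi(\vec z,y)$ with $\psi$ already in Gaifman normal form.

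First, I bring $\psi$ into disjunctive normal form over its constituents. Each basic local sentence is a sentence, so it commutes with $\exists y$. After distributing, it therefore suffices to handle $\exists y\,\chi(\vec z,y)$ where $\chi$ is $r$-local around $(\vec z,y)$.

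Next, I split on whether $y$ is close to $\vec z$. Write
\[
\exists y\,\chi \equiv \exists y\bigl(\dist(y,\vec z)\le 2r{+}1\wedge\chi\bigr)\vee\exists y\bigl(\dist(y,\vec z)>2r{+}1\wedge\chi\bigr).
\]
The first disjunct is $(3r{+}1)$-local around $\vec z$. In the second disjunct the $r$-neighbourhoods of $y$ and of $\vec z$ are disjoint with no edges between them. By locality and a standard decomposition (the Feferman--Vaught theorem for disjoint unions, or directly via EF games), $\chi$ is there equivalent to a finite disjunction $\bigvee_j \alpha_j(\vec z)\wedge\beta_j(y)$ of $r$-local formulas. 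The disjuncts can be taken mutually exclusive, and finiteness comes from enumerating normalised formulas of bounded quantifier rank. This reduces the problem to the formula
\[
\exists y\bigl(\dist(y,\vec z)>2r{+}1\wedge\beta(y)\bigr),
\]
with $\beta$ $r$-local.

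The main obstacle is exactly this ``far-away witness'' formula, which is not local. I handle it by a case distinction on how many scattered witnesses exist. Let $k=|\vec z|$. If there are at least $k{+}1$ witnesses of $\beta$ with pairwise distance $>4r{+}2$, then at most $k$ of them can be near $\vec z$, since each $z_i$ is close to at most one of them. This condition is the basic local sentence with $m=k{+}1$ and radius $2r{+}1$, where the local formula is $\beta$ (which is $r$-local, hence $(2r{+}1)$-local). Otherwise, every scattered set of witnesses has size at most $k$, and I disjoin over $\ell\le k$. For each $\ell$, the statement that there is a maximal scattered set of witnesses of size exactly $\ell$ is expressed by basic local sentences for $\ell$ and the negation of the one for $\ell{+}1$. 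Every witness then lies within distance $4r{+}2$ of some element of such a set, so the whole question becomes a property of the bounded-radius neighbourhood of $\vec z$ together with this set.

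The cleanest way to make this last step rigorous, which I would try first, is to choose the scattered sets inside $N_R(\vec z)$ rather than globally. Witnesses far from $\vec z$ exist iff either (a) some witness lies at distance in $(2r{+}1,R]$ from $\vec z$, which is a local condition around $\vec z$, or (b) some witness lies beyond distance $R$. For (b), I count globally scattered witnesses $\ge m$ via basic local sentences and compare with the number of scattered witnesses near $\vec z$, the latter being a local formula. Choosing $R$ large in terms of $k$ and $r$ makes the count of at most $k$ suffice. The final formula is then a Boolean combination of local formulas and basic local sentences, and every step is effective, which gives the algorithm. Free variables are preserved because all introduced local formulas use only $\vec z$.
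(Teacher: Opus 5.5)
Your proposal is correct in outline, but it follows the classical syntactic proof (Gaifman's original argument, as in Ebbinghaus--Flum), not the paper's route. The paper never proves Theorem~\ref{thm:gaifman} directly. It obtains it from Theorem~\ref{thm:rankGaifman} once distance atoms are replaced by the $\FO$ formulas $\dist_{\le d}$, and that proof is semantic and not by induction on $\phi$. Its ingredients are an Ehrenfeucht-\Fraisse\ theorem for $\FOplus[p,q]$ (Theorem~\ref{thm:EF}), a syntactically local fragment $\LocFOplus[p,q]$ (Lemma~\ref{lemma:locality}), and Lemma~\ref{lem:main}. That lemma shows, via a Duplicator strategy, that the independence type together with the local types of the $r$-components determines the full type. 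The normal form is then a disjunction over realised pairs of types. The algorithm uses the completeness theorem to search for a provable equivalence with a formula of that syntactic shape.

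So the paper needs no Feferman--Vaught decomposition and no explicit elimination of $\exists y(\dist(y,\vec z)>2r{+}1\wedge\beta(y))$. The counterpart of your far-witness step is Case~2 of Lemma~\ref{lem:main}. It uses only the one-variable local formula $\exists y(\dist(x,y)\le r_{\ell+1}\wedge\psi(y))$ and compensates with the stabilisation over the radii $s^{(i)}$ in Claim~\ref{claim:lem:main}, instead of counting witnesses near the whole tuple.

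Your route gives explicit formulas and a direct algorithm. Its cost is that relativisation to neighbourhoods, the distance formulas, and your annulus and counting formulas all raise the quantifier rank. So it cannot give the rank-preserving statement that is the point of the paper; by Example~\ref{ex:1}, the outer quantifier rank must in any case increase in Gaifman's setting.

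Two details need tightening. First, your attempt via a globally maximal scattered set is not local as written, since that set may lie far from $\vec z$. Your refined version does work, but the count comparison in (b) is valid only when (a) fails. Put $D\coloneqq 2r{+}1$, require pairwise distance $>2t$ for some $t\ge D$, and take $R\ge D+2t$.
\begin{itemize}
\item If (a) fails, every witness lies within $D$ of $\vec z$ or beyond $R$. Then any far witness extends every scattered set of near witnesses, and such near sets have size at most $k$.
\item Conversely, a global scattered set larger than the near maximum must contain a witness outside $N_D(\vec z)$.
\end{itemize}
Hence the far-witness formula is equivalent to the disjunction of (a) with $\bigvee_{m\le k}(\nu_m\wedge\neg\nu_{m+1}\wedge\chi_{m+1})$. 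Here $\nu_m(\vec z)$ is the local formula asserting $m$ witnesses within distance $D$ of $\vec z$ with pairwise distance $>2t$, and $\chi_{m+1}$ is the basic local sentence of radius $t$ and width $m{+}1$. In particular $R$ depends only on $r$, not on $k$.

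Second, for effectiveness of the Feferman--Vaught step, enumerating normalised formulas does not by itself tell you which pairs of types satisfy $\chi$. You also need that composition of types under disjoint union is computable, which holds, e.g.\ via Hintikka formulas.
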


We fix some more notation.
Consider a basic local sentence $\xi$ of the form \eqref{eq:basic-local-Gaifman}. 
The \emph{inner quantifier rank} of $\xi$ is the quantifier rank of
$\psibl(x)$,  the \emph{width} of $\xi$ is $m$, and the \emph{radius} of
$\xi$ is $r$.

Let $\phi'$ be a formula in Gaifman normal form, and let $S$ and $L$
be the sets of basic local sentences and of local formulas,
respectively, such that $\phi'$ is a Boolean combination of the
formulas in $S\cup L$.
The \emph{inner quantifier rank} (the \emph{width}, respectively) of  $\phi'$
is the maximum of the
inner quantifier ranks (the widths, respectively) of all sentences in $S$.
The \emph{outer quantifier rank} of $\phi'$ is the maximum of the
quantifier ranks of all the formulas in $L$.
The \emph{radius} of $\phi'$ is the minimum $r\in\Nat$ such that all
sentences in $S$ have radius  $\leq r$ and all formulas in $L$
 are $r$-local. To make the distinction between the different quantifier ranks clearer, we refer to the quantifier rank of $\phi'$ as the \emph{total quantifier rank}.
Observe that if $\phi'$ has outer quantifier rank $q'$, inner
quantifier rank $q$,  width $m$, and radius $r$ then $\phi'$ has total
quantifier rank at most $\max\{q',m{+}q,m{+}\ell\}$,
where $\ell=\lceil
\log_2(2r)\rceil + O(1)$ is the quantifier rank needed for expressing
``$\dist(x,y)>2r$'' by a first-order formula (the $O(1)$-term depends on
the vocabulary $\sigma$, cf.\  the  beginning of Section~\ref{subsec:FOplus}).

The following example shows that Gaifman's normal form theorem does not preserve
the outer quantifier rank: it provides an $\FO$ formula $\phi(x)$ of quantifier rank $2$
that is not equivalent to any Boolean combination
of local formulas of quantifier rank at most~$2$ and
of basic local sentences.

\begin{example}\label{ex:1}
  Let $\sigma=\set{E,\mathsf{red},\mathsf{blue}}$
  with $\ar(E)=2$ and $\ar(\mathsf{red})=\ar(\mathsf{blue})=1$,
  and consider the $\FO$ formula
  \[
  \phi(x) \ \ \coloneqq \ \ \exists y \Big(\mathsf{red}(y)\wedge\neg\exists z\big(E(x,z)\wedge E(z,y)\big)\wedge\neg\exists z\big(\mathsf{blue}(z)\wedge E(y,z)\big)\Big).
\]
  This formula states that there exists a red node $y$ which is not
  reachable from $x$ by a walk
  of length 2
  and which does not have a blue neighbour.
  
\begin{figure}
  \centering
  \begin{tikzpicture}[
    scale=0.9,
    bvert/.style={circle, fill=black, draw=black, inner sep=1pt, minimum size=16pt,font=\footnotesize,text=white},
    rvert/.style={circle, fill=red!80!black, draw=black, inner sep=1pt, minimum size=16pt,font=\footnotesize,text=white},
    blvert/.style={circle, fill=blue!80!black, draw=black, inner sep=1pt, minimum size=16pt,font=\footnotesize,text=white},
  ]

  \fill[orange!50] (-7:7cm) arc[radius=7cm, start angle = -7, end angle = 187];
  \node[orange!80!black] at (5.5,6.1)
    {$N^\CG_7(v) = N^\CG_7(v')$};

  \draw[thick] (-2,0) -- (5,0);
  \draw[thick] (-2,0) -- (5,0);
  \draw[thick] (-2,0) -- (0,1.5) -- (2,0);
  \draw[thick] (0,1.5) -- (0,7.5);
  \draw[thick] (0,1.5) -- (0,7.5);

  \foreach \x in {1,2,4,5,6,8} {
    \node[bvert] at (\x-3, 0) {$\x$};
  }
  \node[below=17pt,anchor=base] at (-2,0) {$v$};
  \node[below=17pt,anchor=base] at (2,0) {$v'$};

  \node[rvert] at (0, 0) {$3$};
  \node[rvert] at (4, 0) {$7$};

  \node[bvert] at (-1, 0.75) {$9$};
  \node[bvert] at (1, 0.75) {$10$};
 
  \foreach \y in {11,...,15} {
     \node[bvert] at (0, \y-9.5) {$\y$};
  }

  \node[rvert] at (0,6.5) {$16$};
  \node[blvert] at (0,7.5) {$17$};
  \node[rvert] at (0,6.5) {$16$};
  \node[blvert] at (0,7.5) {$17$};

  \end{tikzpicture}
  \caption{Structure $\CG=\CG_7$ of Example~\ref{ex:1}. 
   Undirected edges represent directed edges in both directions, and
   $\mathsf{red}^{\CG}=\set{3,7,16}$ and $\mathsf{blue}^{\CG}=\set{17}$.
  }
  \label{fig:ex1}
\end{figure}

  Let $\CG$ be the $\sigma$-structure  depicted in Figure~\ref{fig:ex1}, and let $v\coloneqq 1$ and $v'\coloneqq 5$ (as indicated in Figure~\ref{fig:ex1}).
  Clearly, $\CG\models\phi[v]$ and $\CG\not\models\phi[v']$.
  Observe that $N_7^{\CG}(v)=N_7^\CG(v')$,
  and let $\CN_7\coloneqq\CN_7^\CG(v)=\CN_7^\CG(v')$. In Figure~\ref{fig:ex1}, $\CN_7$
  is the substructure of $\CG$ induced on the orange part.

  Now, consider an arbitrary $r\in\NN$ with $r\geq 7$, and
  let $\CG_r$ be the $\sigma$-structure that looks exactly like $\CG$,
  with the only difference that the
  path from vertex $11$ to the top red vertex (vertex $16$ for $r=7$) has length $r{-}2$  (hence, the 
  distance between the top red vertex and $v$ as well as $v'$ is
  $r$). Note that the structure $\CG$ depicted in Figure~\ref{fig:ex1}
  is $\CG_7$.
  It is easy to see that $\CG_r\models\phi[v]$ and $\CG_r\not\models\phi[v']$.
  Furthermore, $N_r^{\CG_r}(v)=N_r^{\CG_r}(v')$. Let
  $\CN_r\deff\CN_r^{\CG_r}(v) = \CN_r^{\CG_r}(v')$.
  Below, we will prove the following claim.
  
  \begin{claim}\label{claim:GaifmanCounterExample}
    For every $r\in\NN$ with $r\geq 7$, 
    Duplicator has a winning strategy for the 2-round
    Ehrenfeucht-\Fraisse\ game on $(\CN_r,v)$ and $(\CN_r,v')$.    
  \end{claim}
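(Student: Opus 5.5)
The plan is to give an explicit Duplicator strategy. Since only two rounds are played, the final position consists of $v$ (resp.\ $v'$) plus two pebbled elements. Duplicator wins iff the colours, equalities and $E$-edges among these three elements agree. So it suffices to show the following. After round~1 the pair $(v,u)$ and $(v',u')$ satisfy the same atomic facts. Moreover, for every element $c$ on one side there is an element $d$ on the other side whose atomic $1$-type over $(v,u)$, resp.\ $(v',u')$, is the same: same colour, equality pattern, and adjacency to $v$ and to $u$.

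First I will record the structure of $\CN_r$. It consists of the horizontal path $1\text{-}2\text{-}\cdots\text{-}8$, the path $1\text{-}9\text{-}11\text{-}10\text{-}5$, and the vertical path from $11$ up to the top red vertex. The blue vertex (17 for $r=7$) is at distance $r{+}1$ from $v$, so it lies outside $N_r$, and the top red vertex has degree~$1$ in $\CN_r$. There are no triangles, so the only relevant adjacency data for a third element are: adjacent to $v$, adjacent to $u$, or a common neighbour of both. Common neighbours exist only when $u$ is at distance exactly $2$ from $v$.

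Round~1: Duplicator answers $u$ by an element $u'$ with the same colour, the same equality/adjacency relation to the initial element, and the same distance from it capped at $3$ (distance $\ge3$ counts as ``far''). She also keeps the degree comparable, namely degree $1$ versus $\ge2$, and ``has a red neighbour'' versus not. I will check that each such class is realised on both sides:
\begin{itemize}
\item black neighbours: $2,9$ versus $4,6,10$;
\item distance-$2$ vertices: red $3$ and black $11$ versus red $3,7$ and black $11$;
\item far black vertices, including the degree-$1$ vertex $8$;
\item far red vertices, including the degree-$1$ top red vertex.
\end{itemize}
So Duplicator can always choose a concrete pairing, for example $4\mapsto 2$, $10\mapsto 9$, $11\mapsto 11$, $3\mapsto3$, $7\mapsto 3$, and far vertices to far vertices of the same colour and degree class.

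Round~2: for each possible new element $c$ (Spoiler may play on either side), Duplicator needs $d$ with the same colour and the same adjacency to the pebbles. I will run the case analysis by the type of $u$:
\begin{itemize}
\item If $u$ is far, the needed witnesses are neighbours of $u$ of a given colour, neighbours of $v$ (always black, with at least $2$ available), and non-neighbours of both of either colour. All of these exist on both sides by the degree/colour matching.
\item If $u$ is a neighbour of $v$, Duplicator additionally needs neighbours of $u$ other than $v$ with the right colour. Here the pairing must respect ``$u$ has a red neighbour'': for instance $2$ and $4$ both have red neighbour $3$, while $9$ and $10$ both have only the black neighbour $11$.
\item If $u$ is at distance $2$, Duplicator needs a common neighbour of $v$ and $u$, which is black on both sides.
\end{itemize}

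The main obstacle is exactly this bookkeeping. The Round-1 pairing must be chosen so that every ``neighbour of $u$ with colour $X$'' requirement is matched. Vertex $v'$ has degree $3$ and $v$ degree $2$, but two rounds cannot count to three neighbours besides the pebble, so this asymmetry never matters. I will carry out the check once for $r=7$. Since only $r$ changes the length of the vertical path, and that path's internal vertices are black of degree $2$ and far from $v,v'$, the same strategy works uniformly for all $r\ge7$.
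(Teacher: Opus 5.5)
Your reduction is the right one and matches the paper's: after round~1 it suffices that $(v,u)$ and $(v',u')$ agree atomically, and that every element on either side has a partner with the same colour and the same equality/adjacency pattern to the two pebbles. The paper simply lists such ``good pairs''. However, the Round-1 invariant you propose cannot be maintained, so your strategy as stated has a hole.

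Suppose Spoiler opens with the red vertex $7$ in $(\CN_r,v)$. It is at distance $6$ from $v=1$, has degree $2$, and has no red neighbour. In $(\CN_r,v')$ the red vertices are $3$ and $7$, both at distance $2$ from $v'=5$, and the top red vertex ($16$ for $r=7$), which is at distance $r$ and has degree $1$. So there is no red vertex at distance $\ge 3$ from $v'$ with degree $\ge 2$. The class ``far, red, degree $\ge 2$'' is realised only on the $v$-side, and your claim that every class is realised on both sides is false. Your rule ``far vertices to far vertices of the same colour and degree class'' then gives Duplicator no answer.

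The repair is easy, and your own Round-2 analysis already explains it. With one round left, Duplicator only has to match the \emph{colours} occurring among the neighbours of $u$ that are neither $v$ nor adjacent to $v$; their number is irrelevant. So replace ``degree $1$ versus $\ge 2$'' and ``has a red neighbour'' by ``the set of colours of the neighbours of $u$ that are neither $v$ nor adjacent to $v$''.

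For black vertices this coincides with your proxy: it separates $8$, whose only neighbour is red, from $2,4,6$ and the vertex just below the top red vertex. For red vertices it does not coincide. Both $7$ and the top red vertex have only black neighbours, so Duplicator should answer $7$ with the top red vertex; this is the pair $(7,16)$ in the paper's list. With this corrected invariant every class is realised on both sides, and your case analysis goes through, giving essentially the paper's list of good pairs.
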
  

  Combining this claim with
  the Ehrenfeucht-\Fraisse\ Theorem yields that for all $\FO$ formulas
  $\psi(x)$ of quantifier rank at most $2$ we have $\CN_r\models\psi[v]\iff\CN_r\models\psi[v']$. 
  Thus, for all $r$-local $\FO$ formulas $\psi(x)$ of
  quantifier rank at most $2$ we have
  $\CG_r\models\psi[v]\iff\CG_r\models\psi[v']$.

  This implies that $\phi(x)$ is not equivalent to a Boolean
  combination $\phi'(x)$ of local formulas
  of quantifier rank at most $2$ and of basic local sentences.
  Indeed, suppose for contradiction it was.
  Then let $L$ and $S$ be the sets of  local
  formulas and basic local sentences, respectively, such that $\phi'(x)$ is a Boolean
  combination of the formulas in $L\cup S$.
  Let $r\in\NN$ with $r\geq 7$ be such that each formula in $L$ is $r$-local. Then for
  each $\psi(x)$ in $L$ we have
  $\CG_r\models\psi[v]\iff\CG_r\models\psi[v']$. This implies that
  $\CG_r\models\phi'[v]\iff\CG_r\models\phi'[v']$. But this contradicts
  $\phi'(x)$ being equivalent to $\phi(x)$, because we know that
  $\CG_r\models\phi[v]$ and $\CG_r\not\models\phi[v']$.
  All that still remains to be done is to prove
  Claim~\ref{claim:GaifmanCounterExample}.

  \begin{proof}[Proof of Claim~\ref{claim:GaifmanCounterExample}]
    Let us fix an $r\ge 7$ and let $\CN\coloneqq\CN_r$. We assume that the vertex set of $\CN$ is $N\coloneqq\{1,\ldots,r{+}9\}$, where the vertices are numbered as indicated in Figure~\ref{fig:ex1} for $r=7$; for general $r$, the vertices on the path from node $11$ to the top red node are $11,\ldots,r{+}9$.

    We call a pair $(w,w')\in N^2$ \emph{good} if Duplicator has a
    winning strategy for the 1-round EF game on $(\CN,(v,w))$, $(\CN,(v',w'))$. 
    We need to prove that for every $w\in N$ there is a $w'\in N$
    such that $(w,w')$ is good, and for every $w'\in N$ there is
    a $w\in N$
    such that $(w,w')$ is good.

    We give a list of good pairs witnessing this; for each pair on the list it is straightforward to verify that it is good.
    \begin{itemize}
    \item $(x,x)$ for each $x\ge 11$;
      \item $(1,5),(9,10),(2,4),(3,3)$ and the reversed pairs $(5,1),(10,9),(4,2)$;
      \item $(6,15), (7,16)$;
      \item $(2,6),(3,7), (8,8)$.
    \end{itemize}
    Note that indeed every $x\in N$ occurs in the first and in the
    second component of a some pair in this list. This completes the
    proof of Claim~\ref{claim:GaifmanCounterExample}.
  \end{proof}
\end{example}

\section{The Subset $\FOplus[p,q]$ of $\FOplus$}\label{sec:FOpluspq}
Recall that we fixed a vocabulary $\sigma$ and only consider formulas
and structures of this vocabulary.
Throughout the rest of this paper we will use an arbitrary but fixed
(computable)
function
$\myrho\colon\Nat^2\to\Nat$ that satisfies the following constraints for
all $p,q,q'\in\Nat$ with $p\leq q\leq q'$:
\begin{enumerate}
\item\label{myrho:constraint:zero}
  $\myrho(0,q)\geq 1$, 
\item\label{myrho:constraint:inc}
  $\myrho(p,q)\geq \myrho(p{-}1,q) \cdot \big(2+4(q{-}p)\big) $, for
  $p\geq 1$,
  \item\label{myrho:constraint:q}
    $\myrho(p,q)\leq\myrho(p,q')$,  \ and, for $p\geq 1$, \ 
    $\myrho(p,q)-\myrho(p{-}1,q)\leq \myrho(p,q')-\myrho(p{-}1,q')$.
\end{enumerate}  
\begin{example}\label{example:myrho}
The conditions \ref{myrho:constraint:zero}--\ref{myrho:constraint:q} are satisfied when choosing $\myrho$ to be
any of the following functions.
\begin{itemize}
\item $\myrho_1(p,q)\deff 2^p \cdot
  \prod_{j\in[q-p,q-1]}(2j{+}1)$.

  Then, $\myrho_1(0,q)=1$, and for $p\geq 1$ we have
  $\myrho_1(p,q)=\myrho_1(p{-}1,q)\cdot \big(2+ 4(q{-}p)\big)$.
  In particular, conditions \ref{myrho:constraint:zero} and
  \ref{myrho:constraint:inc} are satisfied. Concerning condition
  \ref{myrho:constraint:q}, $\myrho_1(p,q)\leq \myrho_1(p,q')$
  obviously holds for all $p,q,q'\in\NN$ with $p\leq q\leq q'$.
  Furthermore, for each $q''\in\set{q,q'}$ we have
  $\myrho_1(p,q'')-\myrho_1(p{-}1,q'')=\myrho_1(p{-}1,q'')\cdot
  \big(1+4(q''{-}p)\big)$.
  Thus, using $q\leq q'$ and $\myrho_1(p{-}1,q)\leq
  \myrho_1(p{-}1,q')$, we obtain that
  $\myrho_1(p,q)-\myrho_1(p{-}1,q)\leq
  \myrho_1(p,q')-\myrho_1(p{-}1,q')$.
  Hence, $\myrho_1$ also satisfies condition \ref{myrho:constraint:q}.
  
\item $\myrho_2(p,q)\deff (4q)^p$.

  Then, $\myrho_2(0,q)=1$, and for $p\geq 1$ we have $\myrho_2(p,q)=\myrho_2(p{-}1,q)\cdot 4q >
  \myrho_2(p{-}1,q)\cdot \big(2+4(q{-}p)\big)$.
  In particular, conditions \ref{myrho:constraint:zero} and
  \ref{myrho:constraint:inc} are satisfied. Concerning condition
  \ref{myrho:constraint:q}, $\myrho_2(p,q)\leq \myrho_2(p,q')$
  obviously holds for all $p,q,q'\in\NN$ with $p\leq q\leq q'$.
  Furthermore, for each $q''\in\set{q,q'}$ we have
  $\myrho_2(p,q'')-\myrho_2(p{-}1,q'')=\myrho_2(p{-}1,q'')\cdot (4q{-}1)$.
  Thus, using $q\leq q'$ and $\myrho_2(p{-}1,q)\leq
  \myrho_2(p{-}1,q')$, we obtain that
  $\myrho_2(p,q)-\myrho_2(p{-}1,q)\leq
  \myrho_2(p,q')-\myrho_2(p{-}1,q')$.
  Hence, $\myrho_2$ also satisfies condition \ref{myrho:constraint:q}.

\item $\myrho_3(p,q)\deff (4q)^{q+p}$.

  Then, $\myrho_3(0,q)=(4q)^q\geq 1$, and for $p\geq 1$ we have
  $\myrho_3(p,q)=\myrho_3(p{-}1,q)\cdot 4q > \myrho_3(p{-}1,q)\cdot
  \big(2+4(q{-}p)\big)$.
  Thus, conditions \ref{myrho:constraint:zero} and
  \ref{myrho:constraint:inc} are satisfied. Condition
  \ref{myrho:constraint:q} can be verified for $\myrho_3$ in the same
  way as for $\myrho_2$.
  \uend
\end{itemize}  
\end{example}  

\begin{definition}\label{def:FOpluspq}
For $p,q\in\Nat$ with $p\leq q$, we define the following sets
$\FOplus[p,q]$:
\begin{itemize}
\item $\FOplus[0,q]$ is the set of all formulas $\phi$ such that
  $|\free(\phi)|\le q$
  and $\phi$ is a Boolean combination of atomic $\FO$ formulas and
  distance atoms $\dist(x,y)\le d$ with $d\le \myrho(0,q)$.
\item For $p>0$, $\FOplus[p,q]$ is the set of all formulas $\phi$ with
  $|\free(\phi)|\le q-p$
   such that
   $\phi$ is a Boolean combination of formulas in $\FOplus[p{-}1,q]$, distance atoms $\dist(x,y)\le d$ with
  $d\le \myrho(p,q)$, formulas $\exists y\, \psi$
  where $\psi\in\FOplus[p{-}1,q]$,
  and formulas
  \begin{equation}
    \label{eq:1}
    \exists y\; \big(\dist(x,y)\le
    d\ \wedge \psi\big),
  \end{equation}
  where
  $\psi\in\FOplus[p{-}1,q]$
  and $d\le \myrho(p,q)-\myrho(p{-}1,q)$.
  \uend
\end{itemize}
\end{definition}

Observe that $\FOplus=\bigcup_{p,q\in\Nat,0\le p\le q}\FOplus[p,q]$.
Furthermore, every $\phi\in\FOplus[p,q]$ has quantifier rank
$\qr(\phi)\leq p$ and has at most $q{-}p$ free variables.

\begin{remark}\label{remark:DefOfFOpq}
\begin{mea}
\item\label{item:remark:DefOfFOpq:q}
  For $p,q,q'\in\NN$ with $p\leq q\leq q'$ we have $\FOplus[p,q]\subseteq\FOplus[p,q']$.
  This easily follows from the fact that $\myrho$ satisfies condition \ref{myrho:constraint:q}.
\item\label{item:remark:DefOfFOpq:p}
  Let $p,p',q\in\NN$ with $p<p'\leq q$. Then,
$\FOplus[p,q]\nsubseteq\FOplus[p',q]$, because
$\FOplus[p,q]$ contains formulas with exactly $q{-}p$ free variables
--- and these formulas do not belong to $\FOplus[p',q]$.
But every formula $\phi$ in $\FOplus[p,q]$ that
satisfies
$|\free(\phi)|\leq q-p'$ also belongs to $\FOplus[p',q]$ (this
immediately follows from Definition~\ref{def:FOpluspq}).
\uend
\end{mea}
\end{remark}

\begin{remark}
  The definition of $\FOplus[p,q]$ is related to the
notion ``$q$-rank at most $\ell$'' defined in
\cite{GroheKS17,GroheS18} in the following way. If we did not admit
formulas of the form \eqref{eq:1} in the definition of $\FOplus[p,q]$
(or only such formulas with $d\le\myrho(p{-}1,q)$) and used the
particular function
$\myrho(p,q)\coloneqq (4q)^{q+p}$ (cf., Example~\ref{example:myrho}),
which is the function $f_q(p)$ of
\cite{GroheKS17,GroheS18}, then $\FOplus[p,q]$ would be exactly the
class of formulas $\phi$ of $q$-rank at most $p$, as defined in
\cite{GroheKS17,GroheS18},
where $|\free(\phi)|\leq q{-}p$. 
\uend
\end{remark}

We often need to work with normalised formulas (cf.\ 
Section~\ref{sec:normalisation}). When normalising, we need to pay
attention to the free variables; we will often assume them to be in an
initial segment $x_1,\ldots,x_k$ of the canonical enumeration of the
variables. We define $\normFOplus[p,q]$ to be the set of normalised
formulas
$\tilde\phi$ for all $\phi\in\FOplus[p,q]$ with
$\free(\phi)\subseteq\{x_1,\ldots,x_{q-p}\}$. Note that for all $p,q$
the set $\normFOplus[p,q]$ is finite.  
Using Remark~\ref{rem:norm-local}, one obtains that for every
$\phi\in\FOplus[p,q]$ with
$\free(\phi)\subseteq\set{x_1,\ldots,x_{p-q}}$, the normalised formula
$\tilde\phi$ also belongs to
$\FOplus[p,q]$. Hence,
$\normFOplus[p,q]\subseteq\FOplus[p,q]$.

\subsection{Types}

Let $k,p,q\in\Nat$ with $k+p\leq q$.
A \emph{$(k,p,q)$-type}
is a set $\Theta\subseteq\normFOplus[p,q]$
such that for each
$\phi(x_1,\ldots,x_k)\in\FOplus[p,q]$, either $\normalised{\phi}\in\Theta$ or
$\normalised{\neg\phi}\in\Theta$. Recall that the notation $\phi(x_1,\ldots,x_k)$ stipulates $\free(\phi)\subseteq\{x_1,\ldots,x_k\}$ and thus $\tilde\phi\in\normFOplus[p,q]$.
Since $\normFOplus[p,q]$ is finite, it is straightforward
to see that there is a formula $\psi(x_1,\ldots,x_k)\in\normFOplus[p,q]$ such that for all $\sigma$-structures
$\CA$ and tuples $\vec a\in A^k$ we have
\[
  \CA\models\psi[\vec a]\ \ \iff \ \ \CA\models\theta[\vec a]\text{ for all
  }\theta\in\Theta.
\]
Indeed, we can take $\psi$ to be the normalisation of the conjunction of all
formulas in 
$\Theta$,
and we will henceforth denote $\psi$ as $\bigwedge\Theta$. Note that
$\free(\psi)\subseteq\{x_1,\ldots,x_k\}$. It will sometimes be
convenient to use different variable names, say, $y_1,\ldots,y_k$. We
write $\psi'(y_1,\ldots,y_k)\coloneqq\bigwedge\Theta$ to denote that
$\psi'$ is the formula obtained from $\psi\deff\bigwedge\Theta$ by
substituting $x_i$ with $y_i$ for all $i\in[k]$
(this includes, if necessary, a renaming of bound variables to ensure that
$\psi'$ has the same meaning for $(y_1\ldots,y_k)$ that
$\psi$ has for $(x_1,\ldots,x_k)$).
We use similar conventions for the local types that we will introduce later.

For a $\sigma$-structure $\CA$ and a tuple $\vec a=(a_1,\ldots,a_k)\in
A^k$, the \emph{$(p,q)$-type} of $\vec a$ in $\CA$ is the set
$\tp_{p,q}(\CA,\vec a)$ of
all formulas $\phi(x_1,\ldots,x_k)\in\normFOplus[p,q]$ such that
$\CA\models\phi[\vec a]$. Then for all $\sigma$-structures
$\CB$ and tuples $\vec b\in B^k$ we have
\begin{equation}\label{eq:tpVsFormulas}
  \begin{array}{cl}
&\tp_{p,q}(\CA,\vec a)=\tp_{p,q}(\CB,\vec b)
\medskip\\
   \iff
& \text{for all }\phi(x_1,\ldots,x_k)\in\FOplus[p,q]:\;\big(\CA\models\phi[\vec a]\iff\CB\models\phi[\vec
b]\big).
\end{array}
 \end{equation}
Also note that for $\theta\coloneqq\bigwedge\tp_{p,q}(\CA,\vec a)$ we have
\[
  \CB\models\theta[\vec b] \ \ \iff \ \ \tp_{p,q}(\CA,\vec
  a)=\tp_{p,q}(\CB,\vec b).
\]

\subsection{An Ehrenfeucht-\Fraisse\ Game for $\FOplus[p,q]$}\label{subsec:pqgame}
Let $\CA,\CB$ be $\sigma$-structures, and let $d\in\Nat$. A \emph{local
  $d$-isomorphism} from $\CA$ to $\CB$ is a mapping $\pi$
with domain $\dom(\pi)\subseteq A$ and range $\rg(\pi)\subseteq B$
such that (i) for all 
$a,a'\in\dom(\pi)$
either
$\dist^\CA(a,a')=\dist^\CB\big(\pi(a),\pi(a')\big)\leq d$ or
$\dist^\CA(a,a')>d$ and $\dist^\CB\big(\pi(a),\pi(a')\big)>d$, and
(ii) for all $R\in\sigma$, for $r\deff\ar(R)$ and all
$(a_1,\ldots,a_r)\in \dom(\pi)^{r}$ we have
$(a_1,\ldots,a_r)\in R^\CA\iff (\pi(a_1),\ldots,\pi(a_r))\in R^\CB$. Observe that condition
(i) implies that $\pi$ is bijective and then condition (ii) implies that $\pi$
is a local isomorphism in the usual sense.
Futhermore, if $\pi$ is a local $d$-isomorphism, then it also is a local
$d'$-isomorphism for every $d'\leq d$.

Let $k,p,q\in\Nat$ such that $k+p\le q$. Furthermore, let $\CA,\CB$ be
$\sigma$-structures and $\vec a=(a_1,\ldots,a_k)\in A^k,\vec
b=(b_1,\ldots,b_k)\in B^k$. The \emph{$(p,q)$-game} on $(\CA,\vec a)$,
$(\CB,\vec b)$ is played by two players, \emph{Spoiler} and
\emph{Duplicator}, in $p$ rounds. For each $\ell\in [0,p]$, the
\emph{position} of a play after round $\ell$ is a pair $\vec
a^\ell,\vec b^\ell$, where $\vec a^\ell=(a_1,\ldots,a_{k+\ell})$
consists of the elements $a_1,\ldots,a_k$ of the tuple $\vec a$ and
additional elements
$a_{k+1},\ldots,a_{k+\ell}\in A$, and similarly $\vec b^\ell=(b_1,\ldots,b_{k+\ell})$
consists of the elements $b_1,\ldots,b_k$ of the tuple $\vec b$ and
additional elements $b_{k+1},\ldots,b_{k+\ell}\in B$.
In round $\ell{+}1$, for $\ell\in[0,p{-}1]$, Spoiler decides if he
makes an \emph{$i$-near move} for some $i\in[k{+}\ell]$ or a \emph{far move}.
\medskip

\noindent
\textit{$i$-near move:} Spoiler picks
\begin{itemize}
\item either an $a_{k+\ell+1}\in A$ such that
  $d\coloneqq\dist^\CA(a_i,a_{k+\ell+1})\le \myrho(p{-}\ell,q)-\myrho(p{-}(\ell{+}1),q)$, and Duplicator answers by picking
  a $b_{k+\ell+1}\in B$ such that $\dist^\CB(b_i,b_{k+\ell+1})\le d$,
\item or a $b_{k+\ell+1}\in B$ such that
  $d\coloneqq\dist^\CB(b_i,b_{k+\ell+1}) \le \myrho(p{-}\ell,q)-\myrho(p{-}(\ell{+}1),q)$, and Duplicator answers by picking
  an $a_{k+\ell+1}\in A$ such that $\dist^\CA(a_i,a_{k+\ell+1})\le d$.
\end{itemize}
\smallskip

\noindent
\textit{Far move:} Spoiler picks
\begin{itemize}
\item either an $a_{k+\ell+1}\in A$ such that
  $\dist^\CA(a_i,a_{k+\ell+1})> \myrho(p{-}\ell,q)-\myrho(p{-}(\ell{+}1),q)$ for
  all $i\in[k{+}\ell]$, and Duplicator answers by picking
  a $b_{k+\ell+1}\in B$,
\item or a $b_{k+\ell+1}\in B$ such that
  $\dist^\CB(b_i,b_{k+\ell+1})> \myrho(p{-}\ell,q)-\myrho(p{-}(\ell{+}1),q)$ for
  all $i\in[k{+}\ell]$, and Duplicator answers by picking
  an $a_{k+\ell+1}\in A$.
\end{itemize}
Duplicator wins the game if for each $\ell\in[0,p]$ the mapping
$a_j\mapsto b_j$ for $j\in[k{+}\ell]$ is a local
$\myrho(p{-}\ell,q)$-isomorphism.

Note that in her answer to a far move, Duplicator does not have to
satisfy any distance constraints except for those implicitly imposed
by the winning condition.

In the following,
we argue that if Duplicator has a winning strategy for the
$(p{+}1,q)$-game on $(\CA,\vec a)$, $(\CB,\vec b)$, then she also has
a winning strategy for the $(p,q)$-game on $(\CA,\vec a)$, $(\CB,\vec b)$.
To verify this,
recall the conditions \ref{myrho:constraint:zero} and
\ref{myrho:constraint:inc} formulated at the beginning of
Section~\ref{sec:FOpluspq}. They imply
that $\myrho(p{+}1{-}\ell,q)\geq 2\myrho(p{-}\ell,q)$ holds for all
$\ell\leq p$. Thus, every local $\myrho(p{+}1{-}\ell,q)$-isomorphism is
also a local $\myrho(p{-}\ell,q)$-isomorphism. Furthermore, for $\ell\in[0,p{-}1]$,
an $i$-near move of Spoiler in round $\ell{+}1$ of the $(p,q)$-game
also is an $i$-near move for Spoiler in round $\ell{+}1$ of the
$(p{+}1,q)$-game, because
$\myrho(p{-}\ell,q)-\myrho(p{-}(\ell{+}1),q) \leq
\myrho(p{+}1{-}\ell,q)-\myrho(p{+}1{-}(\ell{+}1),q)$ (this easily
follows by noting that
$\myrho(p{+}1{-}(\ell{+}1),q)=\myrho(p{-}\ell,q)$ and using the fact
that $\myrho(p{+}1{-}\ell,q)\geq 2\myrho(p{-}\ell,q)$).

Next, we prove an \emph{Ehrenfeucht-\Fraisse\
  Theorem} for the $(p,q)$-game and  $\FOplus[p,q]$.

\begin{theorem}\label{thm:EF}
  Let $k,p,q\in\Nat$ such that $k+p\le q$. Then for all
  $\sigma$-structures $\CA,\CB$ and for all tuples $\vec a\in A^k,\vec
  b\in B^k$ the following are equivalent.
   \begin{enumerate}
  \item\label{item:EFThm:one} Duplicator has a winning strategy for the $(p,q)$-game on $(\CA,\vec a)$,
    $(\CB,\vec b)$.
  \item\label{item:EFThm:two} For all formulas $\phi(x_1,\ldots,x_k)\in\FOplus[p,q]$ we have 
      $\CA\models\phi[\vec a] \iff  \CB\models\phi[\vec b]$.
  \end{enumerate}
\end{theorem}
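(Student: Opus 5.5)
We prove both directions by induction on $p$, for all $k$ with $k+p\le q$ simultaneously, following the standard Ehrenfeucht-\Fraisse\ argument. The game has been set up to mirror Definition~\ref{def:FOpluspq}. Near moves correspond to the restricted quantifiers \eqref{eq:1}, far moves correspond to unrestricted quantifiers $\exists y\,\psi$, and the winning condition ``local $\myrho(p{-}\ell,q)$-isomorphism'' corresponds to the distance atoms $\dist(x,y)\le d$ with $d\le\myrho(p,q)$.

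\emph{Base case $p=0$.} Duplicator wins iff $a_j\mapsto b_j$ is a local $\myrho(0,q)$-isomorphism. This means precisely that $\vec a$ and $\vec b$ satisfy the same atomic formulas and the same distance atoms $\dist(x_i,x_j)\le d$ with $d\le\myrho(0,q)$. These are exactly the building blocks of $\FOplus[0,q]$.

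\emph{(1)$\Rightarrow$(2), inductive step.} Since $\FOplus[p,q]$ is closed only under Boolean combinations of its building blocks, it suffices to treat each block.
\begin{itemize}
\item Formulas in $\FOplus[p{-}1,q]$: a winning strategy for the $(p,q)$-game yields one for the $(p{-}1,q)$-game, as argued just before the theorem; then apply induction.
\item Distance atoms with $d\le\myrho(p,q)$: these are preserved because the initial map is a local $\myrho(p,q)$-isomorphism.
\item $\exists y\,\psi$ with $\psi\in\FOplus[p{-}1,q]$: take a witness $a$. If $a$ is far from all $a_i$, Spoiler makes a far move. Otherwise $\dist(a_i,a)\le\myrho(p,q)-\myrho(p{-}1,q)$ for some $i$, and Spoiler makes an $i$-near move. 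In both cases Duplicator's answer $b$ leads to a position from which she wins the $(p{-}1,q)$-game on $(\CA,\vec a a)$, $(\CB,\vec b b)$. Here we use that the remaining rounds have exactly the bounds of a $(p{-}1,q)$-game, since $\myrho(p{-}\ell,q)$ re-indexes correctly, and that $k+1+(p-1)\le q$. Induction gives $\CB\models\psi[\vec b b]$.
\item Formulas \eqref{eq:1}: Spoiler makes an $i$-near move, where $x=x_i$. The rule that $\dist(b_i,b)\le d$ guarantees that Duplicator's answer satisfies the distance conjunct.
\end{itemize}

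\emph{(2)$\Rightarrow$(1).} We use types. Assume $\tp_{p,q}(\CA,\vec a)=\tp_{p,q}(\CB,\vec b)$ with $p\ge1$. First, the initial position is a local $\myrho(p,q)$-isomorphism, because the atoms and the distance atoms up to $\myrho(p,q)$ lie in $\FOplus[p,q]$.

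Suppose Spoiler plays an $i$-near move $a$ in $\CA$ with $d=\dist(a_i,a)$. Let $\theta(x_1,\ldots,x_{k+1})\deff\bigwedge\tp_{p-1,q}(\CA,\vec a a)$, which lies in $\FOplus[p{-}1,q]$ since $k+1\le q-(p-1)$. Then
\[
\exists x_{k+1}\big(\dist(x_i,x_{k+1})\le d\wedge\theta\big)
\]
is a formula of the form \eqref{eq:1} in $\FOplus[p,q]$ with at most $k\le q-p$ free variables. It holds at $\vec a$, hence at $\vec b$. This yields a $b$ with $\dist(b_i,b)\le d$ and the same $(p{-}1,q)$-type.

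For a far move we use $\exists x_{k+1}\,\theta$ instead. Moves in $\CB$ are handled symmetrically.

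Induction on the $(p{-}1,q)$-game then gives a winning strategy for the remaining rounds. The winning condition at the later levels $\ell\ge1$ is exactly the winning condition of that subgame.

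\emph{Main obstacle.} The main obstacle is bookkeeping rather than ideas. We must check three things:
\begin{itemize}
\item The round-$(\ell{+}1)$ bounds of the $(p,q)$-game, after the first round, coincide with those of the $(p{-}1,q)$-game started from the extended position.
\item The free-variable counts stay within $q-p$ and $q-(p-1)$.
\item Normalisation keeps $\bigwedge\Theta$ inside $\FOplus[p{-}1,q]$ (via Remark~\ref{rem:norm-local}), so that finitely many types exist and the existential formulas above are legitimate members of $\FOplus[p,q]$.
\end{itemize}
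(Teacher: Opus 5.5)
Your proposal is correct and follows essentially the same route as the paper. Both argue by induction on $p$. For (2)$\Rightarrow$(1) you use the formulas $\exists x_{k+1}\bigl(\dist(x_i,x_{k+1})\le d\wedge\theta\bigr)$ and $\exists x_{k+1}\theta$ with $\theta=\bigwedge\tp_{p-1,q}(\CA,\vec a a)$, and for (1)$\Rightarrow$(2) you split over the building blocks of $\FOplus[p,q]$, handling unrestricted quantifiers as near or far moves. One point in your favour: in (2)$\Rightarrow$(1) you explicitly check that the initial map is a local $\myrho(p,q)$-isomorphism, using the distance atoms up to $\myrho(p,q)$. The paper's proof leaves this implicit, although the induction hypothesis alone only gives the weaker $\myrho(p{-}1,q)$ condition on the extended position.
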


\begin{proof}
  Observe that by \eqref{eq:tpVsFormulas}, assertion \ref{item:EFThm:two} can be equivalently stated as $\tp_{p,q}(\CA,\vec a)=\tp_{p,q}(\CB,\vec b)$. We shall make use of this equivalence several times in the proof.

  Let $q\in\Nat$, and let $\CA,\CB$ be $\sigma$-structures.
  By induction on $p\le q$ we prove the equivalence ``\ref{item:EFThm:one}$\Longleftrightarrow$\ref{item:EFThm:two}'' for all $k\in\Nat$ such that $k\le q-p$ and all tuples $\vec a\in A^{k},\vec b\in B^{k}$.

  The base step $p=0$ follows immediately from the definitions.
  For the inductive step $p\to p{+}1$, let $p\in[0,q{-}1]$,
  consider an arbitrary $k\in\Nat$  such that $k\le q-(p{+}1)$, and
  consider tuples $\vec a\in A^k$, $\vec b\in B^k$.

  \medskip

  To prove 
  ``\ref{item:EFThm:two}$\Longrightarrow$\ref{item:EFThm:one}'' for $p{+}1$,
  assume that for all formulas $\phi(x_1,\ldots,x_k)\in
  \FOplus[p{+}1,q]$ we have $\CA\models\phi[\vec
  a]\iff\CB\models\phi[\vec b]$.
  We shall describe a winning strategy for Duplicator in the
  $(p{+}1,q)$-game on $(\CA,\vec a)$, $(\CB,\vec b)$.

  Suppose first that in the
  first round of the game, Spoiler makes an $i$-near move and, without
  loss of generality, picks an $a_{k+1}\in A$ such that
  $d\coloneqq \dist^\CA(a_i,a_{k+1})\le \myrho(p{+}1,q)-\myrho(p,q)$. Let
  $\theta\coloneqq\bigwedge\tp_{p,q}(\CA,(a_1,\ldots,a_{k+1}))$. Then
  $\theta(x_1,\ldots,x_{k+1})\in\FOplus[p,q]$ and hence $\exists x_{k+1}\big(\dist(x_i,x_{k+1})\le
  d\wedge\theta\big)\in\FOplus[p{+}1,q]$.
  Since
  \[
    \CA \ \models \ \exists x_{k+1}\big(\dist(x_i,x_{k+1})\le
    d\wedge\theta\big) \; [\vec a],
  \]
  we have $\CB\models\exists x_{k+1}\big(\dist(x_i,x_{k+1})\le
  d\wedge\theta\big) [\vec b]$. Hence there is a $b_{k+1}\in B$
  such that $\dist^\CB(b_i,b_{k+1})\le d$ and
  $\tp_{p,q}\big(\CB,(b_1,\ldots,b_{k+1})\big)=\tp_{p,q}\big(\CA,(a_1,\ldots,a_{k+1})\big)$. Duplicator
  answers by choosing $b_{k+1}$. By the induction hypothesis, 
  Duplicator has a winning strategy for the $(p,q)$-game on
  $(\CA,(a_1,\ldots,a_{k+1}))$,
  $(\CB,(b_1,\ldots,b_{k+1}))$. Following this strategy during the
  remaining $p$ rounds ensures that she wins the $(p{+}1,q)$-game on
  $(\CA,\vec a)$, $(\CB,\vec b)$. 

  Suppose next that in the first round of the $(p{+}1,q)$-game on
  $(\CA,\vec a)$, $(\CB,\vec b)$ Spoiler makes a
  far move and, without
  loss of generality, that he picks an $a_{k+1}\in A$ such that
  $\dist^\CA(a_i,a_{k+1})> \myrho(p{+}1,q)-\myrho(p,q)$ for all
  $i\in[k]$. We let
  $\theta\coloneqq\bigwedge\tp_{p,q}\big(\CA,(a_1,\ldots,a_{k+1})\big)$
  and observe that $\exists x_{k+1}\theta\in\FOplus[p{+}1,q]$. Since
  \[
    \CA \ \models \ \exists x_{k+1}\theta\; [\vec a],
  \]
  we have $\CB\models\exists x_{k+1}\theta\,[\vec b]$. Hence there is a $b_{k+1}\in B$
  such that 
  \[
    \tp_{p,q}(\CB,(b_1,\ldots,b_{k+1})) \ = \ \tp_{p,q}(\CA,(a_1,\ldots,a_{k+1})).\]
  Duplicator
  answers by choosing such a $b_{k+1}$. By the induction hypothesis,
  Duplicator has a winning strategy for the $(p,q)$-game on
  $(\CA,(a_1,\ldots,a_{k+1}))$, $(\CB,(b_1,\ldots,b_{k+1}))$. Following this strategy during the
  remaining $p$ rounds ensures that she wins the $(p{+}1,q)$-game on
  $(\CA,\vec a)$, $(\CB,\vec b)$.

  \medskip
  
  To prove the implication ``\ref{item:EFThm:one}$\Longrightarrow$\ref{item:EFThm:two}'' for $p{+}1$, assume that Duplicator
  has a winning strategy for the $(p{+}1,q)$-game on $(\CA,\vec a)$,
  $(\CB,\vec b)$. By the definition of $\FOplus[p{+}1,q]$ it suffices to prove \ref{item:EFThm:two} for all formulas
  $\phi(x_1,\ldots,x_k)\in\FOplus[p{+}1,q]$ that are (a) in
  $\FOplus[p,q]$, or (b) distance atoms $\dist(x_i,x_j)\le d$ with
  $d\le \myrho(p{+}1,q)$ and $i,j\in[k]$, or (c) formulas $\exists x_{k+1}\,\psi$
  where $\psi(x_1,\ldots,x_{k+1})\in\FOplus[p,q]$, 
  or (d) formulas
  \begin{equation}
    \label{eq:2}
    \exists x_{k+1}\,\big(\dist(x_i,x_{k+1})\le
    d\ \wedge \psi\big),
  \end{equation}
  where $i\in[k]$, $d\le \myrho(p{+}1,q)-\myrho(p,q)$, and
  $\psi(x_1,\ldots,x_{k+1})\in\FOplus[p,q]$.

  For formulas of type (a), the assertion holds by the induction
  hypothesis (here we use that Duplicator having a winning strategy
  for the $(p{+}1,q)$-game on $(\CA,\vec a)$, $(\CB,\vec b)$ implies
  that she also has a winning strategy for the $(p,q)$-game on
  $(\CA,\vec a)$, $(\CB,\vec b)$).

  For formulas of type (b), the assertion follows from the fact that
  the mapping $a_j\mapsto b_j$ for $j\in[k]$ is a local
  $\myrho(p{+}1,q)$-isomorphism.

  For formulas of type (d), suppose that $\phi(x_1,\ldots,x_k)$ is the
  formula in \eqref{eq:2}. By symmetry, we may assume without loss of
  generality that $\CA\models\phi[a_1,\ldots,a_k]$. Let $a_{k+1}\in A$
  such that $\dist^\CA(a_i,a_{k+1})\le d\le \myrho(p{+}1,q)-\myrho(p,q)$ and
  $\CA\models\psi[a_1,\ldots,a_{k+1}]$. Let $b_{k+1}\in B$ be
  Duplicator's answer if Spoiler makes an $i$-near move and picks
  $a_{k+1}$ in the first round of the $(p{+}1,q)$-game on $(\CA,\vec
  a)$, $(\CB,\vec b)$. Since Duplicator plays according to her winning
  strategy, we know that
  $\dist^{\CB}(b_i,b_{k+1})\le d$ and that Duplicator has a
  winning strategy for the $(p,q)$-game on
  $(\CA,(a_1,\ldots,a_{k+1}))$, $(\CB,(b_1,\ldots,b_{k+1}))$. By the
  induction hypothesis and the assumption 
  $\CA\models\psi[a_1,\ldots,a_{k+1}]$, this implies
  $\CB\models\psi[b_1,\ldots,b_{k+1}]$ and thus
  $\CB\models\phi[b_1,\ldots,b_{k}]$.

  Finally, for formulas of type (c), let
  $\phi(x_1,\ldots,x_k)$ be of the form $\exists x_{k+1}\,\psi$ where $\psi(x_1,\ldots,x_{k+1})\in\FOplus[p,q]$. Without loss of
  generality we assume that $\CA\models\phi[a_1,\ldots,a_k]$. Let $a_{k+1}\in A$
  such that 
  $\CA\models\psi[a_1,\ldots,a_{k+1}]$.
  If $\dist^\CA(a_i,a_{k+1})\le \myrho(p{+}1,q)-\myrho(p,q)$ for some $i\in[k]$, we argue as for
  formulas of type (d). Otherwise, let $b_{k+1}\in B$ be
  Duplicator's answer if Spoiler makes a far move and picks
  $a_{k+1}$ in the first round of the $(p{+}1,q)$-game on $(\CA,\vec
  a)$, $(\CB,\vec b)$.
  Since Duplicator plays according to her winning
  strategy, we know that she has  
  winning strategy for the $(p,q)$-game on
  $(\CA,(a_1,\ldots,a_{k+1}))$, $(\CB,(b_1,\ldots,b_{k+1}))$. By the
  induction hypothesis and the assumption 
  $\CA\models\psi[a_1,\ldots,a_{k+1}]$, this implies
  $\CB\models\psi[b_1,\ldots,b_{k+1}]$ and thus
  $\CB\models\phi[b_1,\ldots,b_{k}]$.
  
  This completes the proof of Theorem~\ref{thm:EF}.
\end{proof}

\section{Syntactically Local Formulas in $\FOplus[p,q]$}
\label{sec:local}
For all $p,q\in\Nat$ with
$p\le q$ we define a subclass
$\LocFOplus[p,q]\subseteq\FOplus[p,q]$ of \emph{syntactically
  local} formulas. The inductive definition of
$\LocFOplus[p,q]$ follows that of $\FOplus[p,q]$ except that
in the inductive step, no unrestricted quantification in formulas
$\exists y\,\psi$ is permitted. The precise definition is as follows.

\begin{definition}\label{def:LocFOpluspq}
  For $p,q\in\NN$ with $p\leq q$, we define the following sets
  $\LocFOplus[p,q]$:
  \begin{itemize}
    \item
      $\LocFOplus[0,q]\coloneqq\FOplus[0,q]$.
    \item
      For $p>0$, $\LocFOplus[p,q]$ is the set of
all formulas $\lambda$ with $|\free(\lambda)|\le q{-}p$ such that $\lambda$ is a
Boolean combination of formulas
in $\LocFOplus[p{-}1,q]$, distance atoms $\dist(x,y)\le d$ with
$d\le \myrho(p,q)$, and formulas
\begin{equation}\label{eq:locFO}
  \exists y\;\big(\dist(x,y)\le
  d\ \wedge\mu\big),
\end{equation}
where $\mu\in\LocFOplus[p{-}1,q]$ and $d\le
\myrho(p,q)-\myrho(p{-}1,q)$.
\uend
\end{itemize}
\end{definition}

We will usually denote formulas in $\LocFOplus[p,q]$ by
$\lambda,\mu,\nu$ to distinguish them from arbitrary formulas in $\FOplus[p,q]$, which we typically denote by $\phi,\psi,\chi$.

Observe that
the only \emph{sentences} (i.e., formulas with no free
variable) in $\LocFOplus[0,q]$ are Boolean combinations of formulas of
the form $R()$, where $R\in\sigma$ is a relation symbol of arity 0.
Whenever we introduce a quantifier in a formula of the form
\eqref{eq:locFO}, the variable $x$ remains free in the formula.
Thus, for all $p\leq q$, the only \emph{sentences} in
$\LocFOplus[p,q]$ are the sentences that are already present in
$\LocFOplus[0,q]$.
In particular, as for all $\lambda\in\LocFOplus[p,q]$ it holds that $|\free(\lambda)|+p\le
q$, this implies that $\LocFOplus[q,q]\subseteq\LocFOplus[0,q]$ for all $q\in\NN$.

The following lemma shows that all formulas in $\LocFOplus[p,q]$ are
$r$-local for $r\deff\myrho(p,q)$.

\begin{lemma}\label{lemma:locality}
  Let $k,p,q\in\Nat$ with $k\ge 1$ and $p\le q$. Then every $\lambda(z_1,\ldots,z_k)\in\LocFOplus[p,q]$ is
  $\myrho(p,q)$-local. 
\end{lemma}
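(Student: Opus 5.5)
We prove by induction on $p$ a slightly stronger statement that is easier to carry through the induction. For every $\lambda(z_1,\ldots,z_k)\in\LocFOplus[p,q]$, every structure $\CA$, every $\vec a\in A^k$ and every set $X\subseteq A$ with $N^\CA_{\myrho(p,q)}(\vec a)\subseteq X$, we show
\[
\CA\models\lambda[\vec a]\iff\CA[X]\models\lambda[\vec a].
\]
Taking $X=N^\CA_{\myrho(p,q)}(\vec a)$ gives the lemma. It is convenient to prove the stronger version because in the inductive step we will apply the hypothesis to the tuple $\vec a$ extended by a new element $c$, with the same set $X$.

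The base case $p=0$ covers Boolean combinations of atomic formulas and distance atoms $\dist(z_i,z_j)\le d$ with $d\le\myrho(0,q)$. Atomic formulas only involve elements of $\vec a$, and these lie in $X$, so they are preserved in $\CA[X]$. For distance atoms, distances in $\CA[X]$ are at least those in $\CA$. Conversely, if $\dist^\CA(a_i,a_j)\le d$, then a shortest path has every vertex within distance $\lfloor d/2\rfloor\le d$ of $a_i$ or of $a_j$. Since $d\le\myrho(0,q)$, the whole path lies in $N^\CA_{\myrho(0,q)}(\vec a)\subseteq X$, so the distance is the same in $\CA[X]$. The same argument, using the bound $d\le\myrho(p,q)$, handles the distance atoms that appear at level $p$.

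For the inductive step, Boolean combinations are trivial. Members of $\LocFOplus[p{-}1,q]$ are handled by the induction hypothesis together with $\myrho(p{-}1,q)\le\myrho(p,q)$, so that $N^\CA_{\myrho(p-1,q)}(\vec a)\subseteq X$. The main case is $\lambda=\exists y(\dist(z_i,y)\le d\wedge\mu)$ with $\mu\in\LocFOplus[p{-}1,q]$ and $d\le\myrho(p,q)-\myrho(p{-}1,q)$.

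\emph{From $\CA$ to $\CA[X]$.} Suppose a witness $c\in A$ satisfies $\dist^\CA(a_i,c)\le d$. Then $N^\CA_{\myrho(p-1,q)}(\vec a c)\subseteq N^\CA_{\myrho(p,q)}(\vec a)\subseteq X$ by the triangle inequality. In particular $c\in X$, and a shortest path from $a_i$ to $c$ lies inside $N_d(a_i)\subseteq X$, so $\dist^{\CA[X]}(a_i,c)\le d$. The induction hypothesis applied to $(\vec a,c)$ and $X$ then gives $\CA[X]\models\mu[\vec a,c]$.

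\emph{From $\CA[X]$ to $\CA$.} Suppose $c\in X$ has $\dist^{\CA[X]}(a_i,c)\le d$. Then $\dist^\CA(a_i,c)\le d$ as well, so the same inclusion holds and the induction hypothesis transfers $\mu$ back to $\CA$.

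A small point must be checked in the hypothesis application: the free variables of $\mu$ are among $\vec z,y$ and number at most $q-p+1$, which fits the bound required at level $p-1$. If some $z_j$ does not occur in $\mu$, this causes no problem, since the neighbourhood is taken over the whole tuple. The subtle step, and the reason for the strengthened formulation, is that restricting $\CA$ to $N_{\myrho(p,q)}(\vec a)$ does not equal restricting to $N_{\myrho(p-1,q)}(\vec a c)$. The statement therefore has to be phrased for arbitrary supersets $X$, and one must verify that distances up to $d$ (and up to $\myrho$) are preserved when passing to $\CA[X]$.
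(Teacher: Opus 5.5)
Your proof is correct and takes essentially the paper's route. It uses induction on $p$ with the same case split, and the same key inclusion $N^{\CA}_{\myrho(p-1,q)}(a_1,\ldots,a_k,c)\subseteq N^{\CA}_{\myrho(p,q)}(\vec a)$, which comes from $d\le\myrho(p,q)-\myrho(p{-}1,q)$ and the triangle inequality.

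The one difference is where closure under induced substructures sits. You build closure under arbitrary $\CA[X]$ with $X\supseteq N^{\CA}_{\myrho(p,q)}(\vec a)$ into the induction hypothesis. The paper keeps the plain $\CN_r$ formulation and applies the hypothesis twice, in $\CA$ and in $\CA'=\CN^{\CA}_{r_p}(\vec a)$. So your strengthening is equivalent to the original formulation and is a convenience rather than the necessity you describe. It does let you treat sentences and variables that do not occur in $\lambda$ uniformly, which the paper handles separately.
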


\begin{proof}
  Let $q\in\NNpos$.
  
  Above,
  we have already seen that every \emph{sentence} in $\LocFOplus[p,q]$
  is, in fact, a Boolean combination of
  atoms of the form $R()$, where $R\in\sigma$ is a relation symbol of
  arity 0. Such sentences are $0$-local around every non-empty tuple
  $\vec z$ of variables (cf.\
 the definition provided at the beginning of
Section~\ref{sec:Gaifman}). Since $\myrho(p,q)\geq 0$,
  they also are $\myrho(p,q)$-local, for all $p\leq q$.

  Therefore, in the following we can restrict attention to formulas $\lambda$ in
  $\LocFOplus[p,q]$ that have at least one free variable. It is straightforward
  to see that it suffices to consider numbers $p<q$ and variable
  tuples $\vec z=(z_1,\ldots,z_k)$ such that $\free(\lambda)=\set{z_1,\ldots,z_k}$.
  
  For every $p<q$ let $r_p\deff \myrho(p,q)$. 
  By induction on $p$ we show for all $p\in[0,q{-}1]$ that every
  formula $\lambda\in\LocFOplus[p,q]$ with $|\free(\lambda)|=k\geq 1$
  is $r_p$-local around its free variables, i.e.,
  for all $\sigma$-structures $\CA$ and all tuples $\vec a\in A^k$ we have
  \[
    \CA\models\lambda[\vec a] \ \ \iff \ \ \CN_{r_p}^{\CA}(\vec
    a)\models\lambda[\vec a].
  \]
  In the following, whenever we consider a formula
  $\lambda\in\LocFOplus[p,q]$, we let $k\deff
\free(\lambda)$ and $\set{z_1,\ldots,z_k}=\free(\lambda)$, and we let
$\vec z=(z_1,\ldots,z_k)$. Note that $1\leq k\leq q{-}p$.
 \medskip

  For the base step $p=0$,
  consider a $\lambda\in\LocFOplus[0,q]$. Then, $\lambda$ is a Boolean
  combination of atomic $\FO$ formulas with variables among
  $z_1,\ldots,z_k$ and of distance atoms $\dist(z_i,z_j)\leq d$ with
  $i,j\in[k]$ and
  $d\leq\myrho(0,q)$, and $r_0=\myrho(0,q)\geq 1$ (according to condition \ref{myrho:constraint:zero} from the
  beginning of Section~\ref{sec:FOpluspq}).

  Let $\CA$ be a $\sigma$-structure and $\vec a=(a_1,\ldots,a_k)\in
  A^k$.
  Obviously, for every atomic $\FO$ formula $\mu$ with free variables
  among $z_1,\ldots,z_k$ we have $\CA\models\mu[\vec a]$ $\iff$
  $\CN_{r_0}^{\CA}(\vec a)\models\mu[\vec a]$.
  Concerning the distance atoms that occur in $\lambda$, note that if
  $\dist^\CA(a_i,a_j)\le r_0$, then all vertices $a$ on a shortest path from
  $a_i$ to $a_j$ in the Gaifman graph $G_{\CA}$ are in $N_{r_0}^\CA(a_i)$ and 
  $\dist^\CA(a_i,a_j)=\dist^{\CN_{r_0}(\vec a)}(a_i,a_j)$.
  This proves that $\CA\models\lambda[\vec a]$ $\iff$ $\CN_{r_0}^{\CA}(\vec
    a)\models\lambda[\vec a]$ and completes the base case $p=0$.
\medskip

  For the inductive step from $p{-}1$ to $p$, suppose that the assertion is proved for $p{-}1$
  and let $\lambda\in \LocFOplus[p,q]$. Without loss of
  generality, we may assume that (a) $\lambda\in\LocFOplus[p{-}1,q]$, or
  (b) $\lambda$ is of the form $\dist(z_i,z_j)\le d$ for some
  $d\le \myrho(p,q)$ and for $i,j\in[k]$, or (c) $\lambda$ is of the form
  \begin{equation}
    \label{eq:3}
      \exists y\;\big(\dist(z_i,y)\le d\ \wedge\mu\big),
  \end{equation}
  where $i\in[k]$, $d\le \myrho(p,q)-\myrho(p{-}1,q)$, and
  $\mu\in\LocFOplus[p{-}1,q]$ with $\free(\mu)\subseteq\set{z_1,\ldots,z_k,y}$.

  For formulas of type (a), we apply the induction hypothesis (here we
  use the fact that condition \ref{myrho:constraint:inc} from the
  beginning of Section~\ref{sec:FOpluspq} implies that $r_p\geq r_{p-1}$), and for
  formulas of type (b) we argue as in the base step. So assume
  that $\lambda$ is the formula in \eqref{eq:3} of type
  (c). Let $\CA$ be a $\sigma$-structure and let $\vec
  a=(a_1,\ldots,a_k)\in A^k$. We need to show that
  $\CA\models\lambda[\vec a]  \iff \CN_{r_p}^{\CA}(\vec
    a)\models\lambda[\vec a]$.
  
  For the forward direction, suppose that $\CA\models\lambda[\vec
  a]$. Let $b\in A$ such that
  $d'\coloneqq\dist^{\CA}(a_i,b)\le d \leq r_{p}-r_{p-1}$ and
  $\CA\models\mu[\vec a']$, where $\vec
  a'\coloneqq(a_1,\ldots,a_k,b)$. Let $\CA'\coloneqq\CN^\CA_{r_{p}}(\vec a)$ and
 $\CA''\coloneqq\CN^\CA_{r_{p-1}}(\vec a')$. 
Since 
  $N^\CA_{r_{p-1}}(\vec a')\subseteq N^\CA_{r_{p}}(\vec a)$, we have
 $\CA''=\CN^{\CA'}_{r_{p-1}}(\vec a')$.

 By the induction hypothesis applied to $\CA$ we have
 $\CA''\models\mu[\vec a']$. Thus, by the induction hypothesis applied to $\CA'$ we have
 $\CA'\models\mu[\vec a']$. As
 $d'=\dist^\CA(a_i,b)=\dist^{\CA'}(a_i,b)$,
 we thus have $\CA'\models\lambda[\vec a]$.

 The backward direction can be proved similarly:
 suppose that $\CA'\models\lambda[\vec
 a]$ for $\CA'\coloneqq\CN^\CA_{r_{p}}(\vec a)$.
 Then, there exists a $b\in N^\CA_{r_{p}}(\vec a)$ such that
  $d'\coloneqq\dist^{\CA'}(a_i,b)\le d \leq r_{p}-r_{p-1}$ and
  $\CA'\models\mu[\vec a']$, where $\vec
  a'\coloneqq(a_1,\ldots,a_k,b)$.
  By the induction hypothesis applied to $\CA'$, we obtain that
  $\CA''\models\mu[\vec a']$, where $\CA''\coloneqq
  \CN^{\CA'}_{r_{p-1}}(\vec a')$.
  Since  $N^{\CA'}_{r_{p-1}}(\vec a')\subseteq  N^{\CA}_{r_{p}}(\vec a)$, we have
 $\CA''=\CN^\CA_{r_{p-1}}(\vec a')$.
  By the induction hypothesis applied to $\CA$, we obtain that
  $\CA\models\mu[\vec a']$.
  As
  $d'=\dist^{\CA'}(a_i,b)=\dist^{\CA}(a_i,b)$,
 we thus have $\CA\models\lambda[\vec a]$. This completes the
 inductive step.
 In summary, the proof of Lemma~\ref{lemma:locality} is complete.
\end{proof}

Regarding normalisation, we just inherit the definitions from
$\FOplus[p,q]$: we define $\normLocFOplus[p,q]$ to be the set of
normalised formulas $\tilde\phi$, for all $\phi\in\LocFOplus[p,q]$ with
$\free(\phi)\subseteq\{x_1,\ldots,x_{q-p}\}$. Again, it follows from Remark~\ref{rem:norm-local} that $\normLocFOplus[p,q]\subseteq\LocFOplus[p,q]$.

\subsection{Local Types}\label{sec:localTypes}

Let $k,p,q\in\Nat$ with $k+p\leq q$ and $k\geq 1$.
A \emph{local $(k,p,q)$-type} 
is a set $\Lambda$ of formulas $\mu(x_1,\ldots,x_k)\in\normLocFOplus[p,q]$ 
such that for each
$\lambda(x_1,\ldots,x_k)\in\LocFOplus[p,q]$, either $\normalised{\lambda}\in
\Lambda$ or $\normalised{\neg\lambda}\in\Lambda$, but not both. Since
$\normLocFOplus[p,q]$ is finite,
it is straightforward to see that
there is a formula $\lambda\in
\normLocFOplus[p,q]$ such that for all $\sigma$-structures
$\CA$ and tuples $\vec a\in A^k$ we have
\[
  \CA\models\lambda[\vec a]\ \ \iff \ \  \CA\models\mu[\vec a]\text{ for all
  }\mu\in\Lambda.
\]
Indeed, we can take $\lambda$ to be the normalisation of the conjunction
of all formulas in $\Lambda$, and we will henceforth denote $\lambda$ as
$\bigwedge\Lambda$.

For a $\sigma$-structure $\CA$ and a tuple $\vec a=(a_1,\ldots,a_k)\in
A^k$, the \emph{local $(p,q)$-type} of $\vec a$ in $\CA$ is the set
$\ltp_{p,q}(\CA,\vec a)$ of
all formulas $\lambda(x_1,\ldots,x_k)\in\normLocFOplus[p,q]$ such that
$\CA\models\lambda[\vec a]$. Since all
$\lambda(x_1,\ldots,x_k)\in\LocFOplus[p,q]$ are $\myrho(p,q)$-local
by Lemma~\ref{lemma:locality}, for all induced substructures
$\CA'$ of $\CA$ with $N^{\CA}_{\myrho(p,q)}(\vec a)\subseteq A'$ we have 
\[
  \ltp_{p,q}(\CA,\vec a)\ \ = \ \ \ltp_{p,q}(\CA',\vec a).
\]
In particular, this holds for $\CA'\coloneqq\CN^{\CA}_{\myrho(p,q)}(\vec a)$.
Moreover, for all $\sigma$-structures
$\CB$ and tuples $\vec b\in B^k$ we have
\begin{equation}\label{eq:localType}
\begin{array}{cl}
& \ltp_{p,q}(\CA,\vec a)=\ltp_{p,q}(\CB,\vec b)
                    \medskip\\
  \iff
  &  \text{for all } \lambda(x_1,\ldots,x_k)\in\LocFOplus[p,q] : \
    \big(\,
      \CA\models\lambda[\vec a]\iff\CB\models\lambda[\vec b]
    \,\big).
\end{array}  
\end{equation}
Also note that for $\lambda\coloneqq\bigwedge\ltp_{p,q}(\CA,\vec a)$ we have
\[
  \CB\models\lambda[\vec b] \ \ \iff \ \ \ltp_{p,q}(\CA,\vec
  a)=\ltp_{p,q}(\CB,\vec b).
\]

\subsection{Independence Sentences}\label{subsec:IndependenceSentences}
In \cite{GroheKS17,GroheS18}, an \emph{$(r,s)$-independence sentence} is defined to be an
$\FOplus$ sentence of the form
\[
  \exists x_1\cdots \exists x_{s'}\left(\bigwedge_{1\le i<j\le
      s'}\dist(x_i,x_j)>r'\wedge \bigwedge_{1\le i\le s'}\alpha(x_i)\right),
\]
where $r'\le r,s'\le s$, and $\alpha(x)$ is a quantifier-free
formula. Compare this to the basic local sentences in Gaifman's normal
form theorem (Theorem~\ref{thm:gaifman}). Those sentences have similar
structure, but instead of a quantifier free formula $\alpha(x)$ there
is an $\frac{r'}{2}$-local formula.
The following version of independence sentences generalises both
notions.

For numbers $p,q,r,s\geq 0$, a \emph{$(p,q,r,s)$-independence
  sentence} is an $\FOplus$ sentence of the form
\begin{equation}\label{eq:indepSentence}
\exists x_1\cdots\exists x_{s'}\left(\bigwedge_{1\le i<j\le s'}\dist(x_i,x_j)>2r'\wedge\bigwedge_{1\le i\le s'}\lambda(x_i)\right),
\end{equation}
where $p'\le p$, $q'\le q$, $\myrho(p',q')\le r'\le r$, $s'\le s$, and
$\lambda(x)\in\LocFOplus[p',q']$. Note that such a sentence is a basic
local sentence (cf.\ Section~\ref{sec:Gaifman}) because the
formula $\lambda(x)$ is $\myrho(p',q')$-local by
Lemma~\ref{lemma:locality}, and hence it is $r'$-local.

We let $\IS[p,q,r,s]$ be the set of all
$(p,q,r,s)$-independence sentences, and we let
$\normIS[p,q,r,s]$ be the set of all \emph{normalised} $(p,q,r,s)$-independence
sentences, i.e., sentences of the form \eqref{eq:indepSentence} where
$\lambda\in\normLocFOplus[p',q']$.
Note that $\normIS[p,q,r,s]$
is finite because $\normLocFOplus[p,q]$ is finite.
A (normalised) \emph{$(p,q,r,s)$-independence literal} is either a (normalised)
$(p,q,r,s)$-independence sentence or a  negated (normalised) $(p,q,r,s)$-independence
sentence.

A \emph{$(p,q,r,s)$-independence type} 
is a set $\Xi$ of normalised $(p,q,r,s)$-independence literals 
such that for each sentence $\xi\in\normIS[p,q,r,s]$, either $\xi\in\Xi$ or $\neg\xi\in\Xi$.
For a $(p,q,r,s)$-independence type $\Xi$, by $\bigwedge\Xi$ we denote the
conjunction of all literals in $\Xi$.

For a $\sigma$-structure $\CA$, the \emph{$(p,q,r,s)$-independence type} of $\CA$ is
the set $\itp_{p,q,r,s}(\CA)$ of all normalised
$(p,q,r,s)$-independence literals satisfied by $\CA$.

Note that for all $\sigma$-structures $\CA,\CB$ we have
\begin{equation}\label{eq:itp}
  \begin{array}{cl}
& \itp_{p,q,r,s}(\CA) = \itp_{p,q,r,s}(\CB)
\smallskip\\
    \iff
    & \text{for all } \xi\in\IS[p,q,r,s]\ : \ \
      \big( \CA\models\xi \ \iff \ \CB\models\xi \big),
  \end{array}
\end{equation}
and
\[
  \CB\models\bigwedge\itp_{p,q,r,s}(\CA) \quad \iff \quad \itp_{p,q,r,s}(\CB)=\itp_{p,q,r,s}(\CA).
\]

\section{A Gaifman Normal Form Theorem for $\FOplus[p,q]$}
\label{sec:main}

For $m,r\in\Nat$, a $\sigma$-structure $\CA$, and a tuple $\vec
a=(a_1,\ldots,a_m)\in A^m$ we let
$G^{\CA}_{\vec a,r}$ be the undirected graph with vertex set
$[m]$ and an edge between vertices $i,j$ if and only if
$i\neq j$ and $\dist^{\CA}(a_i,a_j)\leq r$.
An \emph{$r$-component of $\vec a$} is the vertex set of a connected
component of the graph $G^{\CA}_{\vec a,r}$. Note that for $m=0$,
$G^{\CA}_{\vec a,r}$ is the empty graph, which by definition has no
connected components.

We now are ready to state and prove our main technical lemma.

\begin{lemma}\label{lem:main}
  Let $k,p,q\in\Nat$ with $k+p\leq q$, and let
  $r\coloneqq \myrho(p,q)$.
  Let $\CA,\CB$ be
  $\sigma$-structures and $\vec a=(a_1,\ldots,a_k)\in A^k$, $\vec b=(b_1,\ldots,b_k)\in B^k$ such that the following conditions are satisfied.
  \begin{eroman}
  \item\label{item:i:lem:main}
    $G^{\CA}_{\vec a,r}=G^{\CB}_{\vec b,r}$ and for every
    $r$-component $J$ of $\vec a$ we have
    $\ltp_{p,q}(\CA,\vec a_J) = \ltp_{p,q}(\CB,\vec b_J)$.
  
  \item\label{item:ii:lem:main}
   $\itp_{p-1,q,\rHalbe,k+p}(\CA) = \itp_{p-1,q,\rHalbe,k+p}(\CB)$.
  \end{eroman}
  Then
  $\tp_{p,q}(\CA,\vec a) = \tp_{p,q}(\CB,\vec b)$.
\end{lemma}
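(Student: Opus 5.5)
We prove the lemma by induction on $p$ (for fixed $q$), simultaneously for all $k$ with $k+p\le q$. By Theorem~\ref{thm:EF}, it suffices to show that Duplicator has a winning strategy in the $(p,q)$-game on $(\CA,\vec a)$, $(\CB,\vec b)$. The induction hypothesis will tell us that after one round it is enough to reach a position satisfying conditions \ref{item:i:lem:main} and \ref{item:ii:lem:main} for $k{+}1$ and $p{-}1$.

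For the base case $p=0$, condition \ref{item:i:lem:main} gives everything. Atomic formulas and distance atoms $\dist(x_i,x_j)\le d$ with $d\le\myrho(0,q)=r$ either involve variables in a common $r$-component, where they are determined by the local type of that component, or they involve variables in different components. In the latter case the distance exceeds $r$ on both sides, so the atom is false in both structures. Relational atoms between different components are also false on both sides, since such elements are non-adjacent.

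For the inductive step, let $r'\coloneqq\myrho(p{-}1,q)$, and suppose Spoiler picks $a_{k+1}$ in round one. Let $c_i$ range over the tuple. Two cases arise.

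\emph{Case 1: $a_{k+1}$ is near some $a_i$}, meaning $\dist^\CA(a_i,a_{k+1})\le r-r'$, either through a near move or through a far move that happens to be near (the far-move threshold is exactly $r-r'$). Let $J$ be the $r$-component of $i$. The formula $\exists y(\dist(x_i,y)\le d\wedge\bigwedge\ltp_{p-1,q}(\CA,(\vec a_J,a_{k+1})))$ lies in $\LocFOplus[p,q]$. Since $\ltp_{p,q}(\CA,\vec a_J)=\ltp_{p,q}(\CB,\vec b_J)$, Duplicator finds $b_{k+1}$ with the same distance bound and the same local $(p{-}1,q)$-type for $(\vec b_J,b_{k+1})$. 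We then verify condition \ref{item:i:lem:main} for $r'$ and the extended tuples. The $r'$-components of $(\vec a,a_{k+1})$ refine the old $r$-components because $r'\le r/2$. Any new component containing $k{+}1$ lies within the old component $J$, since $a_{k+1}$ is within $r-r'$ of $a_i$ and $(r-r')+r'\le r$. Hence the $r'$-graph is determined by the distances inside $(\vec a_J,a_{k+1})$, which are recorded by distance atoms in the local type, and local types of sub-tuples of a component follow from the type of the whole component. Condition \ref{item:ii:lem:main} weakens monotonically in its parameters, so it carries over with the parameters for $p{-}1$.

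\emph{Case 2: $a_{k+1}$ is far from all $a_i$.} Let $\Lambda\coloneqq\ltp_{p-1,q}(\CA,a_{k+1})$, and let $\lambda\coloneqq\bigwedge\Lambda$. Duplicator must find $b_{k+1}$ of local type $\Lambda$ whose distance to all $b_i$ exceeds $r-r'$. Our main tool is an independence sentence. Let $m$ be the maximum number of points of type $\Lambda$ in $\CA$ that are pairwise more than $2\cdot\rHalbe=r$ apart, capped at $k+p$. Then $\CB$ agrees with $\CA$ on this number by condition \ref{item:ii:lem:main}, since $\lambda\in\LocFOplus[p{-}1,q]$ with $\myrho(p{-}1,q)\le\rHalbe$. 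If the cap $k+p$ is reached, some scattered witness in $\CB$ is far from all $b_i$: each $b_i$ has at most one witness nearby, because the witnesses are pairwise more than $r$ apart, and there are more witnesses than $b_i$'s. If the count is below the cap, we instead count the witnesses of type $\Lambda$ near the tuple. This count is expressible via local formulas $\exists y(\dist(x_i,y)\le \cdot\wedge\lambda(y))$ around the components, which have rank up to $p$ and so are covered by the $\ltp_{p,q}$ equality. We then argue that a maximal scattered set in $\CB$ must contain a point far from $\vec b$, because otherwise the corresponding count would differ from that in $\CA$.

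The main obstacle is this bookkeeping. Near-witnesses may lie at distance up to about $r$ from the tuple, while local formulas only reach radius $r-r'$ from a single variable, and disjointness between witnesses near different components must also be handled. I would first try to choose the far threshold and the radius $\rHalbe$ so that every witness within distance $r-r'$ of the tuple is captured by a local formula. After that, it remains to check that the new $r'$-graph is correct: $k{+}1$ becomes an isolated component, its local type is $\Lambda$, and condition \ref{item:ii:lem:main} is again inherited by monotonicity.
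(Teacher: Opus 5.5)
Your overall architecture is sound. Applying the lemma itself to $(k{+}1,p{-}1,q)$ is legitimate because condition~(ii) is monotone in its parameters, and together with Theorem~\ref{thm:EF} this just repackages the paper's invariants (I) and (II). Your Case~1, which uses the full type $\ltp_{p-1,q}(\CA,(\vec a_J,a_{k+1}))$, is correct and slightly streamlines the paper's Cases~1a/1b. The gap is in Case~2, which is the heart of the lemma.

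To begin with, your target there is wrong. You only need $\dist^\CB(b_i,b_{k+1})>r'$ for all $i$, so that $k{+}1$ is isolated in the $r'$-graph on both sides. The requirement $\dist^\CB(b_i,b_{k+1})>r-r'$ is in general unsatisfiable. For example, take $\myrho=\myrho_1$, $q=2$, $p=k=1$ (so $r=6$, $r'=1$) and $\sigma=\{E,P\}$. Let $\CA,\CB$ be long paths in which $P$ holds exactly at the two vertices at distance $4$ from $b_1$ in $\CB$, and at the two vertices at distance $4$ from $a_1$ plus one vertex at distance $8$ from $a_1$ in $\CA$. The $6$-neighbourhoods of $a_1$ and $b_1$ are isomorphic, and both structures satisfy the same $(0,2,3,2)$-independence sentences. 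Yet Spoiler's far move to the $P$-vertex at distance $8$ has no $P$-answer at distance $>5$ from $b_1$. (Your cap-case argument in fact only produces distance $>\rHalbe\ge r'$, which is enough for the corrected target.)

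More seriously, the non-cap case is not carried out, and the route you sketch cannot work. The radius $\rHalbe$ and the far threshold are fixed by the statement, not yours to choose. A formula in $\LocFOplus[p,q]$ can only assert that \emph{some} $\Lambda$-point lies within $r-r'$ of a given variable, since the conjunct $\bigwedge\Lambda$ already consumes rank $p{-}1$; so numbers of near witnesses are not expressible. Moreover, a single-scale independence count does not suffice, because every passage through ``$b_j$ has a $\Lambda$-point within $r'$'' costs $2r'$ of scattering distance.

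The missing idea is the paper's multi-scale pigeonhole argument:
\begin{itemize}
\item Assume for contradiction that every $\Lambda$-point of $\CB$ lies within $r'$ of some $b_j$. For $s^{(i)}\coloneqq r'(1+2i)$, $i\in[0,k]$, let $m^{(i)}$ be the maximum number of $\Lambda$-points of $\CA$ that are pairwise more than $2s^{(i)}$ apart.
\item Transferring the scale-$s^{(0)}$ sentence to $\CB$ shows $m^{(0)}\le k$, since points more than $2r'$ apart are attached to distinct $b_j$. Hence $m^{(i-1)}=m^{(i)}$ for some $i\in[k]$; call this value $m$.
\item The scale-$s^{(i)}$ sentence is a $(p{-}1,q,\rHalbe,k{+}p)$-independence sentence because $2s^{(i)}\le r'(2+4k)\le r$. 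This is exactly where the factor $2+4(q{-}p)$ of condition~\ref{myrho:constraint:inc}, together with $q-p\ge k$, is used.
\item Transferring this sentence to $\CB$ gives $m$ distinct $b_j$ pairwise more than $2s^{(i)}-2r'$ apart, and hence, by (i), the same holds for the corresponding $a_j$.
\item Via the local formula $\exists y\,(\dist(x,y)\le r'\wedge\bigwedge\Lambda(y))\in\LocFOplus[p,q]$, each such $a_j$ has a $\Lambda$-point $c_j$ within $r'$. Then $c_1,\ldots,c_m,a_{k+1}$ are pairwise more than $2s^{(i-1)}$ apart, contradicting $m^{(i-1)}=m$.
\end{itemize}
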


Note that condition \ref{item:i:lem:main} is trivially satisfied for $k=0$.

\begin{proof}[Proof of Lemma~\ref{lem:main}]
  For $\ell\in[0,p]$ let $r_\ell \coloneqq
  \myrho(p{-}\ell,q)$. Then, $r=r_0 \geq  r_1 \geq \cdots \geq 
  r_p=\myrho(0,q)\geq 1$ and we have $r_\ell\geq 2r_{\ell+1}$
  for all $\ell\in[0,p{-}1]$ (this holds because $\myrho$ satisfies the
  conditions \ref{myrho:constraint:zero} and
  \ref{myrho:constraint:inc} formulated at the beginning of Section~\ref{sec:FOpluspq}). 
  We shall prove that Duplicator has a winning strategy for the
  $(p,q)$-game on $(\CA,\vec a),(\CB,\vec b)$. Then the assertion follows
  from Theorem~\ref{thm:EF} and equation \eqref{eq:tpVsFormulas}.

  For $\ell\in[0,p]$, the position of the $(p,q)$-game on $(\CA,\vec
  a)$, $(\CB,\vec b)$ 
  after $\ell$ rounds will be $\vec a^\ell,\vec b^\ell$, where $\vec
  a^\ell=(a_1,\ldots,a_{k+\ell})\in A^{k+\ell}$ and $\vec
  b^\ell=(b_1,\ldots,b_{k+\ell})\in B^{k+\ell}$.

  We shall prove, by induction on $\ell$, that Duplicator can maintain
  the following invariants
  for all $\ell\in [0,p]$.

  \begin{enumerate}
  \item[(I)] For every $r_\ell$-component $J$ of $\vec a^\ell$
    we have
    $\ltp_{p-\ell,q}(\CA,\vec a^{\ell}_J)=\ltp_{p-\ell,q}(\CB,\vec b^{\ell}_J)$.
  \item[(II)] For all $i,j\in[k{+}\ell]$, either
    $\dist^\CA(a_i,a_j)=\dist^\CB(b_i,b_j)\leq r_\ell$    
    or $\dist^\CA(a_i,a_j)>r_\ell$
    and  $\dist^\CB(b_i,b_j)>r_\ell$.
  \end{enumerate}
  Observe that (II) implies that $G^{\CA}_{\vec a^\ell,r_\ell}=G^{\CB}_{\vec b^\ell,r_\ell}$ and hence that the tuples $\vec a^\ell$ and $\vec
  b^\ell$ have the same $r_\ell$-components.

  It is straightforward to
  verify that if (I) and (II) hold for all $\ell\in [0,p]$,
  then Duplicator's winning condition is satisfied, i.e., for each
  $\ell\in [0,p]$ the mapping $a_j\mapsto b_j$ for $j\in [k{+}\ell]$
  is a local $\myrho(p{-}\ell,q)$-isomorphism:
  from (II) and the fact that $r_\ell=\myrho(p{-}\ell,q)$, we obtain that all the
  distance conditions are satisfied. From (I) we obtain that the
    usual ``local isomorphism'' condition is satisfied for tuples
    built from indices in the same $r_p$-component of $\vec
    a^{p}$. And we know that $r_p=\myrho(0,q)\geq 1$ --- hence, tuples
    containing indices from different $r_p$-components of $\vec a^p$
    do not belong to any relation present in $\CA,\CB$.
  
  \medskip\noindent
  \textit{Base step: $\ell=0$.}\\
  If $k=0$, then (I) and (II) are trivially satisfied. If $k\geq 1$,
  then (I) immediately follows from assumption~\ref{item:i:lem:main}
  since $r_0=r$ and $\vec a^0=\vec a$,
  and (II) follows from assumption~\ref{item:i:lem:main} by using
  equation \eqref{eq:localType} and noting that $\LocFOplus[p,q]$
  contains all distance atoms $\dist(x_i,x_j)\leq d$ and their negations
  for $d\leq \myrho(p,q) = r_0$ and $i,j\in[k]$.
  
  \medskip\noindent
  \textit{Inductive step: $\ell\to \ell{+}1$.} 
  Let $\ell\in [0,p{-}1]$.\\
    Assume that (I) and (II) hold for the positions
  $\vec a^\ell=(a_1,\ldots,a_{k+\ell}),\vec b^\ell=(b_1,\ldots,b_{k+\ell})$
  reached after $\ell$ rounds.
  
  It will be convenient to let $\vec x=(x_1,\ldots,x_{k+\ell})$ and $\vec
  x'=(x_1,\ldots,x_{k+\ell},x_{k+\ell+1})$. 
  
  \medskip\noindent
  \textit{Case 1: }In round $\ell{+}1$ of the game, Spoiler decides to
  make an $i$-near move
  for some $i\in [k{+}\ell]$.\\
  By symmetry, we may assume without loss
  of generality that he picks an element
  $a_{k+\ell+1}\in A$ such that
$d\coloneqq\dist^{\CA}(a_i,a_{k+\ell+1})\le
 \myrho(p{-}\ell,q)-\myrho(p{-}(\ell{+}1),q)=
    r_{\ell}-r_{\ell+1}$. Let $\vec
    a^{\ell+1}\coloneqq(a_1,\ldots,a_{k+\ell},a_{k+\ell+1})$.
 
   \medskip\noindent \textit{Case 1a:}
   $\dist^\CA(a_j,a_{k+\ell+1})\le r_{\ell+1}$ for some
   $j\in[k{+}\ell]$.\\
   Pick an arbitrary $j\in[k{+}\ell]$ such that $\dist^\CA(a_j,a_{k+\ell+1})\le r_{\ell+1}$.
   Let $I'$ be the $r_{\ell+1}$-component of $\vec a^{\ell+1}$ that
   contains $k{+}\ell{+}1$. Then $j\in I'$ by
   our choice of $j$.
   Let $I$ be the $r_{\ell}$-component of $\vec a^\ell$ that contains
   $j$ (in particular, $I\subseteq[k{+}\ell]$).
   Then $I'\setminus\{k{+}\ell{+}1\}\subseteq I$, because
   $r_{\ell+1}\le r_\ell$.
   Furthermore, $i\in I$ because
   \[
     \dist^\CA(a_i,a_j)\ \le \
     \dist^\CA(a_i,a_{k+\ell+1})+\dist^\CA(a_j,a_{k+\ell+1}) \ \le \
     (r_{\ell}-r_{\ell+1})+r_{\ell+1} \ = \ r_\ell.
   \]
   Let $\psi(\vec
   x'_{I'})\coloneqq\bigwedge\ltp_{p-\ell-1,q}(\CA,\vec
   a^{\ell+1}_{I'})$. Then $\psi(\vec
   x'_{I'})\in\LocFOplus[p{-}\ell{-}1,q]$.
   Let
   \[
     \lambda(\vec x_I)\ \ \coloneqq \ \ \exists
     x_{k+\ell+1}\Big(\dist(x_i,x_{k+\ell+1})\le
       d\wedge\Big(\bigwedge_{j'\in I\setminus I'}\dist(x_{j'},x_{k+\ell+1})>r_{\ell+1}\wedge\psi(\vec
     x'_{I'})\Big)\Big).
   \]
   Then $\lambda\in\LocFOplus[p{-}\ell,q]$.
   Since $\CA\models\lambda[\vec a^\ell_I]$, we have
   $\CB\models\lambda[\vec b^\ell_I]$ by the induction hypothesis
   (I) and equation \eqref{eq:localType}.
   Hence, there is a
   $b_{k+\ell+1}\in B$ such that $\dist^\CB(b_i,b_{k+\ell+1})\le d$ and
   $\dist^{\CB}(b_{j'},b_{k+\ell+1})>r_{\ell+1}$ for all $j'\in
   I\setminus I'$ and
   $\CB\models \psi[\vec b^{\ell+1}_{I'}]$, where 
   $\vec b^{\ell+1}\coloneqq(b_1,\ldots,b_{k+\ell},b_{k+\ell+1})$. This element
   $b_{k+\ell+1}$ is Duplicator's answer in round $\ell{+}1$.

   To establish invariant  (I) for the
   new position $\vec a^{\ell+1},\vec b^{\ell+1}$, let $J'$
   be an $r_{\ell+1}$-component of $\vec a^{\ell+1}$.
   If $J'=I'$ then $\ltp_{p-\ell-1,q}(\CA,\vec
   a^{\ell+1}_{J'})=\ltp_{p-\ell-1,q}(\CB,\vec
   b^{\ell+1}_{J'})$ because $\CB\models \psi[\vec
   b^{\ell+1}_{J'}]$.
   So assume that $J'\neq I'$. Then $J'\subseteq [k{+}\ell]$, and thus
   there is an $r_\ell$-component $J$ of $\vec a^\ell$ such that
   $J'\subseteq J$. Then the induction hypothesis $\ltp_{p-\ell,q}(\CA,\vec
   a^\ell_J)=\ltp_{p-\ell,q}(\CB,\vec b^\ell_J)$ implies
   $\ltp_{p-\ell-1,q}(\CA,\vec
   a^{\ell+1}_{J'})=\ltp_{p-\ell-1,q}(\CB,\vec b^{\ell+1}_{J'})$.

   It remains to establish (II). Since $r_{\ell+1}\le r_\ell$ and by the
   induction hypothesis, it suffices to prove that for all
   $j'\in[k{+}\ell]$, either
   $\dist^\CA(a_{j'},a_{k+\ell+1})=\dist^\CB(b_{j'},b_{k+\ell+1})\leq r_{\ell+1}$ 
   or
   $\dist^\CA(a_{j'},a_{k+\ell+1})>r_{\ell+1}$ and
   $\dist^\CB(b_{j'},b_{k+\ell+1})>r_{\ell+1}$. 
   Consider an arbitrary $j'\in[k{+}\ell]$.
   
   Assume first that
   $\dist^\CA(a_{j'},a_{k+\ell+1})\le r_{\ell+1}$. 
   Then $j'\in I'$, and
   $\ltp_{p-\ell-1,q}(\CA,\vec
   a^{\ell+1}_{I'})=\ltp_{p-\ell-1,q}(\CB,\vec b^{\ell+1}_{I'})$
   implies
   $\dist^\CA(a_{j'},a_{k+\ell+1})=\dist^\CB(b_{j'},b_{k+\ell+1})$
   because for all $d'\leq r_{\ell+1}$ the type $\ltp_{p-\ell-1,q}(\CA,\vec
   a^{\ell+1}_{I'})$  either contains a formula
   expressing $\dist(x_{j'},x_{k+\ell+1})\leq d'$, or it contains a
   formula expressing that $\dist(x_{j'},x_{k+\ell+1})> d'$.

   Now
   assume $\dist^\CA(a_{j'},a_{k+\ell+1})> r_{\ell+1}$. 
   If $j'\in I'$,
   then $\dist^\CB(b_{j'},b_{k+\ell+1})>r_{\ell+1}$ because $\ltp_{p-\ell-1,q}(\CA,\vec
   a^{\ell+1}_{I'})=\ltp_{p-\ell-1,q}(\CB,\vec b^{\ell+1}_{I'})$
   and $\ltp_{p-\ell-1,q}(\CA,\vec
   a^{\ell+1}_{I'})$ contains a formula expressing that $\dist(x_{j'},x_{k+\ell+1})> r_{\ell+1}$. 
   If
   $j'\in I\setminus I'$, then
   $\dist^\CB(b_{j'},b_{k+\ell+1})>r_{\ell+1}$ by the choice of
   $b_{k+\ell+1}$
   according to the formula $\lambda(\vec x_I)$.
   If $j'\not\in I$ then
   $j'$ and $j$ belong to different connected components of
   $G^{\CA}_{\vec a^\ell,r_\ell}$; in particular, this graph does not
   contain the edge $\{j,j'\}$, i.e.,
   $\dist^{\CA}(a_j,a_{j'})>r_\ell$. By the induction hypothesis we
   obtain that $\dist^{\CB}(b_j,b_{j'})>r_\ell$.
   By definition, $r_\ell = \myrho(p{-}\ell,q)\geq
   2\myrho(p{-}\ell{-}1,q)=2r_{\ell+1}$ (this holds because $\myrho$
   satisfies condition \ref{myrho:constraint:inc} formulated at the beginning of Section~\ref{sec:FOpluspq}).
  Since $j\in I'$, we already know that $\dist^{\CB}(b_j,b_{k+\ell+1})=\dist^{\CA}(a_j,a_{k+\ell+1})\leq r_{\ell+1}$.
   It follows that
   $\dist^\CB(b_{j'},b_{k+\ell+1})>r_{\ell+1}$, because otherwise we
   would have $\dist^{\CB}(b_j,b_{j'})\leq
   \dist^{\CB}(b_j,b_{k+\ell+1}) +
   \dist^{\CB}(b_{j'},b_{k+\ell+1})\leq 2r_{\ell+1} \leq r_{\ell}$.  

   \medskip\noindent
   \textit{Case 1b:} $\dist^\CA(a_j,a_{k+\ell+1})> r_{\ell+1}$ for all
   $j\in[k{+}\ell]$.\\
   Then 
   $k{+}\ell{+}1$ is an isolated node in the graph $G^{\CA}_{\vec a^{\ell+1},r_{\ell+1}}$ and thus 
   the $r_{\ell+1}$-component of $\vec a^{\ell+1}$ that
   contains $k{+}\ell{+}1$ is $\{k{+}\ell{+}1\}$.
   Let $I$ be the $r_\ell$-component of $\vec
   a^\ell$ that contains $i$.

   Let $\psi(x_{k+\ell+1})\coloneqq\bigwedge\ltp_{p-\ell-1,q}(\CA,a_{k+\ell+1})$ 
   and
   \[
     \lambda(\vec x_I) \ \ \coloneqq \ \ \exists
     x_{k+\ell+1}\left(\dist(x_i,x_{k+\ell+1})\le d\wedge\left(\bigwedge_{j\in I}\dist(x_j,x_{k+\ell+1})>r_{\ell+1}\wedge\psi(x_{k+\ell+1})\right)\right).
   \]
   Then $\lambda\in\LocFOplus[p{-}\ell,q]$.
   Since $\CA\models\lambda[\vec a^\ell_I]$, we have 
   $\CB\models\lambda[\vec b^\ell_I]$ by the induction hypothesis and
   equation \eqref{eq:localType}.
   Hence, there is a
   $b_{k+\ell+1}\in B$ such that $\dist^\CB(b_i,b_{k+\ell+1})\le d$ and
   $\dist^\CB(b_j,b_{k+\ell+1})> r_{\ell+1}$ for all 
   $j\in I$
   and
   $\CB\models\psi[b_{k+\ell+1}]$. This element $b_{k+\ell+1}$ is
   Duplicator's answer.

   We first show that invariant (II) holds for the new position
   $\vec a^{\ell+1},\vec b^{\ell+1}$.
   Recall that 
   $\dist^\CA(a_j,a_{k+\ell+1})>r_{\ell+1}$ for all $j\in[k{+}\ell]$ (since we are in Case~1b) 
   and $\dist^\CB(b_j,b_{k+\ell+1})>r_{\ell+1}$ for all $j\in I$ (due
   to our choice of $b_{k+\ell+1}$ according to $\lambda(\vec
   x_I)$). Consider an arbitrary
   $j\in[k{+}\ell]\setminus I$ and assume
   for contradiction that $\dist^{\CB}(b_j,b_{k+\ell+1})\leq
   r_{\ell+1}$. Then, 
   \[
     \dist^{\CB}(b_i,b_j)\leq 
     \dist^{\CB}(b_i,b_{k+\ell+1})+\dist^{\CB}(b_j,b_{k+\ell+1}) \leq
     d+r_{\ell+1}\leq
     r_{\ell}-r_{\ell+1}+r_{\ell+1} = r_\ell.
   \]
   By the induction hypothesis we obtain that $\dist^{\CA}(a_i,a_j)\leq r_\ell$, i.e., $G^{\CA}_{\vec a^\ell,r_\ell}$ contains the edge $\{i,j\}$, and hence $j\in I$ --- a contradiction!
   This (in combination with the induction hypothesis and the fact that $r_\ell\geq r_{\ell+1}$) proves that invariant (II) is satisfied.

   The argument that invariant (I) holds for the
   new position $\vec a^{\ell+1},\vec b^{\ell+1}$ follows
   the corresponding argument in Case~1a.
   Indeed, let $J'$
   be an $r_{\ell+1}$-component of $\vec a^{\ell+1}$.
   
   If $J'=\{k{+}\ell{+}1\}$ then $\vec a^{\ell+1}_{J'}=(a_{k+\ell+1})$ and 
   $\vec b^{\ell+1}_{J'}=(b_{k+\ell+1})$. Since
   $\CB\models \psi[b_{k+\ell+1}]$
   and
   $\psi=\bigwedge\ltp_{p-\ell-1,q}(\CA,a_{k+\ell+1})$, we have    
   $\ltp_{p-\ell-1,q}(\CA,a_{k+\ell+1})=\ltp_{p-\ell-1,q}(\CB,b_{k+\ell+1})$.
   
   If $J'\neq \{k{+}\ell{+}1\}$, then $J'\subseteq [k{+}\ell]$, and thus
   there is an $r_\ell$-component $J$ of $\vec a^\ell$ such that
   $J'\subseteq J$.
   Then the induction hypothesis $\ltp_{p-\ell,q}(\CA,\vec
   a^\ell_J)=\ltp_{p-\ell,q}(\CB,\vec b^\ell_J)$ implies
   $\ltp_{p-\ell-1,q}(\CA,\vec
   a^{\ell+1}_{J'})=\ltp_{p-\ell-1,q}(\CB,\vec b^{\ell+1}_{J'})$.

   \medskip\noindent
   \textit{Case 2:} In round $\ell{+}1$ of the game,
   Spoiler decides to make a far move.\\
   Without loss of generality,
   we assume that Spoiler picks
   $a_{k+\ell+1}\in A$ such that
   \[\dist^\CA(a_j,a_{k+\ell+1}) \ > \ r_\ell - r_{\ell+1}\] for all
   $j\in[k{+}\ell]$.

   Consider
   $\psi(x)\deff\bigwedge\ltp_{p-\ell-1,q}(\CA,a_{k+\ell+1})$
   and let $C$ be the set of all $a\in A$ such that
   $\CA\models\psi[a]$.
   Let $D$ be the set of all $b\in B$ such that $\CB\models\psi[b]$.

   If there is a $b_{k+\ell+1}\in D$ such that
   $\dist^{\CB}(b_j,b_{k+\ell+1})>r_{\ell+1}$ for all $j\in[k{+}\ell]$, this element $b_{k+\ell+1}$
   is Duplicator's answer. The argument that invariants  (I) and (II) hold for the
   new position $\vec a^{\ell+1},\vec b^{\ell+1}$ follows
   the corresponding arguments in Case~1b.

   \emph{Suppose for contradiction} that for all $d\in D$ there is a $j(d)\in
   [k{+}\ell]$ such that 
   $\dist^{\CB}(b_{j(d)},d)\le r_{\ell+1}$. Observe that
   for all $d,d'\in D$, if $j(d)=j(d')$ then
   \begin{equation}
     \label{eq:4}
     \dist^{\CB}(d,d')\ \le \
     \dist^{\CB}(d,b_{j(d)})+\dist^{\CB}(d',b_{j(d)})\ \le \
     2r_{\ell+1}. 
   \end{equation}
   
   \begin{claim}\label{claim:lem:main}
     There are $s,t,m\in\Nat$ that satisfy each of the following conditions:
     \begin{enumerate}
     \item[(a)]
     $r_{\ell+1}\le s= t-2r_{\ell+1}$
     \ and \ $2t\le r_\ell$ \ and \ $m\le k{+}\ell$;
     \item[(b)] there are $c_1,\ldots,c_m\in C$ such that
       $\dist^{\CA}(c_i,c_j)>2t$ for all 
       distinct 
       $i,j\in[m]$;
     \item[(c)] for all $c_1',\ldots,c_{m+1}'\in C$ there are distinct
       $i,j\in[m{+}1]$ such that $\dist^{\CA}(c'_i,c'_j)\le 2s$.
     \end{enumerate}

     \proof
     For $0\le i\le k{+}\ell$, we let $s^{(i)}\coloneqq
     r_{\ell+1}\cdot (1+2i)$.
     Moreover, we let
     $m^{(i)}$ be maximum such
     that there are $c_1,\ldots,c_{m^{(i)}}\in C$ with
     $\dist^{\CA}(c_\nu,c_{\nu'})>2s^{(i)}$ for all
     distinct
     $\nu,\nu'\in[m^{(i)}]$.
      If $C$ contains an infinite subset of elements of pairwise
      distance $>2s^{(i)}$, then we set $m^{(i)}\coloneqq\infty$. 
      Since $s^{(i)}\leq s^{(i+1)}$, we have $m^{(i)}\geq m^{(i+1)}$ for 
     $0\le i<k{+}\ell$.
     For $i=0$ and $n\deff \min\set{m^{(0)},k{+}p}$
     consider the sentence
     \[
       \xi \ \ \deff \ \
       \exists x_1 \cdots \exists x_{n} \Big(
         \bigwedge_{1\leq \nu<\nu'\leq n}
         \dist(x_\nu,x_{\nu'})>2s^{(0)} \ \wedge \
         \bigwedge_{1\leq \nu\leq n} \psi(x_\nu)
       \Big).
     \]
     Then, $\xi$ is a
     $(p{-}1,q,\rHalbe,k{+}p)$-independence sentence, because
     \[
     \myrho(p{-}\ell{-}1,q)\ = \ r_{\ell+1} \ = \ s^{(0)} \ \leq \ r_1
     \ \leq \ \frac{r_0}{2} 
     \ = \ \rHalbe,
     \]
     $n\leq k{+}p $,  and
     $\psi\in\LocFOplus[p{-}\ell{-}1,q]$.
     
     Since $n\le m^{(0)}$ we know that $\CA\models\xi$.
     By assumption \ref{item:ii:lem:main}, also
     $\CB\models\xi$.
     Hence, there
     are $d_1,\ldots,d_{n}\in D$ with
     $\dist^{\CB}(d_\nu,d_{\nu'})>2s^{(0)}$ for all distinct
     $\nu,\nu'\in[n]$. Then
     $j(d_\nu)\neq j(d_{\nu'})$, because by \eqref{eq:4}, $j(d_\nu)=j(d_{\nu'})$ implies
     $\dist^{\CB}(d_\nu,d_{\nu'})\le 2r_{\ell+1}= 2s^{(0)}$.
     Hence
     $n\leq k{+}\ell < k{+}p$, which implies that $n=m^{(0)}$.
     Thus
     \[
       k{+}\ell\ \ge\ m^{(0)}\ \ge\
       m^{(1)}\ \ge \ \cdots \ \ge\  m^{(k+\ell)}\ \ge\ 1.
     \]
     It follows that there is an $i\in[k{+}\ell]$ such that
     $m^{(i)}=m^{(i-1)}$. We let $m\coloneqq m^{(i)}$, $t\coloneqq s^{(i)}$, and
     $s\coloneqq s^{(i-1)}$. Then $m\le k{+}\ell$.
     From the definition of the $s^{(i)}$, we immediately obtain that
     $r_{\ell+1}\le s = t-2r_{\ell+1}$.
     Furthermore, recall that $r_\ell=\myrho(p{-}\ell,q)$ and $r_{\ell+1}=\myrho(p{-}\ell{-}1,q)$. Thus, according to condition \ref{myrho:constraint:inc} from Section~\ref{sec:FOpluspq} we have
     $r_\ell \geq r_{\ell+1}\cdot \big(2+4(q{-}(p{-}\ell))\big)$. By
     assumption we have $q\geq k{+}p$, and hence $q{-}(p{-}\ell)\geq k{+}\ell$. Therefore, 
     \[
2t \ \ = \ \  2s^{(i)} \ \ = \ \ r_{\ell+1}\cdot (2+4i)\ \ \leq \ \ r_{\ell+1}\cdot \big(2+4(k{+}\ell)\big)\ \ \leq \ \ r_{\ell}.    
     \]
     This proves assertion (a). Assertion (b) is immediate, and (c)
     follows from $s=s^{(i-1)}$, $m^{(i-1)}=m^{(i)}=m$ and the
     maximality of $m^{(i-1)}$.
     
     This completes 
     the proof of Claim~\ref{claim:lem:main}.
     \uend
   \end{claim}

 We choose $s,t,m$ according to Claim~\ref{claim:lem:main} and consider the sentence
 \[
   \xi \ \ \deff \ \
    \exists x_1 \cdots \exists x_{m} \Big(
         \bigwedge_{1\leq i<j\leq m}
         \dist(x_i,x_j)>2t \ \wedge \
         \bigwedge_{1\leq i\leq m} \psi(x_i)
       \Big).
     \]
     Then, $\xi$ is a
     $(p{-}1,q,\rHalbe,k{+}p)$-independence sentence, because
     \[
     \myrho(p{-}\ell{-}1,q) \ = \ r_{\ell+1} \ \leq \ t \
     \leq \ \frac{r_{\ell}}{2} \ \leq \ \frac{r_0}{2} \ = \ \rHalbe,
     \]
     $m\leq k{+}\ell \le k{+}p $,  and
     $\psi\in\LocFOplus[p{-}\ell{-}1,q]$.
     From Claim~\ref{claim:lem:main} we know that $\CA\models\xi$.
     By assumption \ref{item:ii:lem:main}, also $\CB\models\xi$.
    Hence, there
   are $d_1,\ldots,d_m\in D$ such that $\dist^{\CB}(d_i,d_{i'})>2t$ for
   all distinct $i,i'\in[m]$.
   Since $2t\ge 2 r_{\ell+1}$, by \eqref{eq:4} we have
   $j(d_i)\neq j(d_{i'})$ for all distinct $i,i'\in [m]$. Without loss of
   generality, we assume that $j(d_i)=i$ for all $i\in[m]$.

   Then for $i\in[m]$ we have $\dist^{\CB}(b_i,d_i)\le r_{\ell+1}$. Hence
   $\dist^\CB(b_i,b_{i'})>2t-2r_{\ell+1}$ for all distinct $i,i'\in[m]$. Since
   $2t-2r_{\ell+1}\le r_\ell$,
   it
   follows from the induction hypothesis (II) that
   \begin{equation}\label{eq:Distances}
     \dist^\CA(a_i,a_{i'}) \ \ > \ \ 2t-2r_{\ell+1}
   \end{equation}  
   for all distinct $i,i'\in[m]$.

   Consider the formula $\lambda(x)\coloneqq\exists y\big(\dist(x,y)\le
   r_{\ell+1}\wedge
   \psi(y)\big)$ and 
   note that $\lambda\in\LocFOplus[p-\ell,q]$
   (because $r_\ell\geq 2r_{\ell+1}$, and hence $r_{\ell+1}\leq
   r_\ell-r_{\ell+1} = \myrho(p{-}\ell,q)-\myrho(p{-}\ell{-}1,q)$).
   For all
   $i\in[m]$ we have $\CB\models \lambda[b_i]$; here the existential
   quantifier is witnessed by $d_i$. By 
   the induction hypothesis (I) and equation \eqref{eq:localType},
   we obtain that
   $\CA\models \lambda[a_i]$ for all $i\in[m]$.
   Thus there exist $c_1,\ldots,c_m\in C$ such that
   $\dist^\CA(a_i,c_i)\le r_{\ell+1}$ for all $i\in[m]$. Then for distinct $i,i'\in[m]$ we
   have
   \[
   \dist^{\CA}(a_i,a_{i'}) \ \leq \
   \dist^{\CA}(a_i,c_i)+\dist^{\CA}(c_i,c_{i'})+\dist^{\CA}(c_{i'},a_{i'})
   \ \leq \ \dist^{\CA}(c_i,c_{i'})+ 2r_{\ell+1},
   \]
   and hence
   \[
     \dist^{\CA}(c_i,c_{i'})\ \ge\ \dist^\CA(a_i,a_{i'})-2r_{\ell+1}
     \ \stackrel{\eqref{eq:Distances}}{>}\ 2t-4r_{\ell+1}\ = \ 2s.
   \]
   Furthermore, since we are in Case~2, we have 
   $\dist^{\CA}(a_{k+\ell+1},a_i)> r_\ell-r_{\ell+1}$ for all $i\in[m]$,
   and hence
   \[
     \dist^{\CA}(a_{k+\ell+1},c_i)\ > \ r_\ell-2r_{\ell+1}\ \ge \
     2t-2r_{\ell+1} \ \geq \ 2s.
   \]
   Thus, $c_1,\ldots,c_m,a_{k+\ell+1}$ are elements of $C$ of mutual
   distance $>2s$. This \emph{contradicts} Claim~\ref{claim:lem:main}(c).
   In summary, this completes the proof of Lemma~\ref{lem:main}.
\end{proof}

The following is an easy consequence of Lemma~\ref{lem:main}. It is interesting because $\CN_r^{\CA}(\vec a_J)\cong \CN_r^{\CB}(\vec b_J)$ for all connected components $J$ of $G^\CA_{\vec a,r}=G^{\CB}_{\vec b,r}$ does not imply $\CN_r^{\CA}(\vec a)\cong \CN_r^{\CB}(\vec b)$.

\begin{lemma}\label{lemma:ltpVsConnectedComponents}
Let $k,p,q\in\Nat$ with $k{+}p\leq q$ and $k\geq 1$. Let $r\deff
\myrho(p,q)$. Let $\CA,\CB$ be $\sigma$-structures such that
$\itp_{p-1,q,\rHalbe,k+p}(\CA)=\itp_{p-1,q,\rHalbe,k+p}(\CB)$.
For all $\vec
a=(a_1,\ldots,a_k)\in A^k$ and $\vec b=(b_1,\ldots,b_k)\in B^k$, the
following are equivalent.
\begin{eroman}
\item $G^{\CA}_{\vec a,r}=G^{\CB}_{\vec b,r}$ and for every
    $r$-component $J$ of $\vec a$ we have
    $\ltp_{p,q}(\CA,\vec a_J) = \ltp_{p,q}(\CB,\vec b_J)$.
\item $\ltp_{p,q}(\CA,\vec a) = \ltp_{p,q}(\CB,\vec b)$.
\end{eroman}
\end{lemma}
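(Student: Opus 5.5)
The two directions will be handled separately. (ii)$\Rightarrow$(i) is purely local and does not need the independence assumption. (i)$\Rightarrow$(ii) should follow from Lemma~\ref{lem:main} applied inside the $r$-neighbourhoods.

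\emph{(ii)$\Rightarrow$(i).} First I read off the graph. For all $i,j\in[k]$ the distance atom $\dist(x_i,x_j)\le r$ lies in $\LocFOplus[p,q]$, since $r=\myrho(p,q)$. Hence equal local types give $G^{\CA}_{\vec a,r}=G^{\CB}_{\vec b,r}$. Next, let $J$ be an $r$-component and $\lambda(\vec x_J)\in\LocFOplus[p,q]$. Renaming variables yields a formula $\lambda'(\vec x)$ with the same meaning on $\vec x_J$, and $\lambda'$ is still in $\LocFOplus[p,q]$, because renaming free variables preserves membership and $|\free|\le|J|\le k\le q-p$. By \eqref{eq:localType}, $\CA\models\lambda[\vec a_J]\iff\CB\models\lambda[\vec b_J]$. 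So $\ltp_{p,q}(\CA,\vec a_J)=\ltp_{p,q}(\CB,\vec b_J)$.

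\emph{(i)$\Rightarrow$(ii).} I would show that for every $\lambda(x_1,\ldots,x_k)\in\LocFOplus[p,q]$ we have $\CA\models\lambda[\vec a]\iff\CB\models\lambda[\vec b]$. Lemma~\ref{lem:main}, with its hypotheses (i) and (ii) here, gives $\tp_{p,q}(\CA,\vec a)=\tp_{p,q}(\CB,\vec b)$. Every $\lambda\in\LocFOplus[p,q]$ is also in $\FOplus[p,q]$, since Definition~\ref{def:LocFOpluspq} is a restriction of Definition~\ref{def:FOpluspq}; this is an easy induction on $p$. By \eqref{eq:tpVsFormulas}, $\lambda$ therefore has the same truth value on $\vec a$ and $\vec b$, which is (ii).

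The only thing to check is the containment $\LocFOplus[p,q]\subseteq\FOplus[p,q]$. The base cases coincide, and in the inductive step the generators of $\LocFOplus[p,q]$ (distance atoms and formulas \eqref{eq:locFO}) are among the generators of $\FOplus[p,q]$ (distance atoms and formulas \eqref{eq:1}), with identical bounds on $d$. The paper already asserts this inclusion when it introduces $\LocFOplus$, so the whole argument reduces to Lemma~\ref{lem:main} and I expect no real obstacle. The one point that needs some care is the variable renaming in (ii)$\Rightarrow$(i), where the bound on the number of free variables has to be checked.
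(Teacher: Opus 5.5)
Your proposal is correct and follows the paper's proof essentially verbatim. For (i)$\Rightarrow$(ii) you apply Lemma~\ref{lem:main}, then use $\LocFOplus[p,q]\subseteq\FOplus[p,q]$ together with \eqref{eq:tpVsFormulas} and \eqref{eq:localType}; for (ii)$\Rightarrow$(i) you read off the graph from the distance atoms in $\LocFOplus[p,q]$ and restrict the local type to each component, where your check of the variable renaming and the free-variable bound makes explicit a step the paper leaves implicit.
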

\begin{proof}
\emph{(i)$\Rightarrow$(ii)}: \ From Lemma~\ref{lem:main} we obtain
that $\tp_{p,q}(\CA,\vec a)=\tp_{p,q}(\CB,\vec b)$.
Using equations \eqref{eq:tpVsFormulas} and \eqref{eq:localType} and
$\LocFOplus[p,q]\subseteq\FOplus[p,q]$ yields
$\ltp_{p,q}(\CA,\vec a)=\ltp_{p,q}(\CB,\vec b)$.\medskip

\emph{(ii)$\Rightarrow$(i):} \
Since $\dist(x_i,x_j)\leq r$ and $\dist(x_i,x_j)>r$ can be
expressed by formulas in $\LocFOplus[p,q]$, from \emph{(ii)}
we obtain that $G^{\CA}_{\vec a,r}=G^{\CB}_{\vec b,r}$.
Furthermore, from \emph{(ii)} it follows that 
$\ltp_{p,q}(\CA,\vec a_J) = \ltp_{p,q}(\CB,\vec b_J)$ holds for every
$r$-component $J$ of $\vec a$.
\end{proof}

Combining the previous two lemmas with standard methods, we obtain the
following version of a \emph{Gaifman normal form theorem for $\FOplus[p,q]$}.

\begin{theorem}\label{thm:rpGaifman}
  Let $p,q\in\Nat$ with $p\leq q$. Let $r\deff \myrho(p,q)$.
  Every formula $\phi\in\FOplus[p,q]$ is equivalent to a
  formula $\phi'$ 
  that is a Boolean combination 
  of formulas in $\LocFOplus[p,q]$
  and of
  $(p{-}1,q,\rHalbe,k{+}p)$-independence sentences, where
  $k\coloneqq|\free(\phi)|$.
  Moreover, 
  $\free(\phi')=\free(\phi)$,
  and there is an algorithm that, for any $p,q\in\NN$ with $p\leq q$, upon input of $p,q,\phi$
  computes such a $\phi'$. 
\end{theorem}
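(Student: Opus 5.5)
The plan is the standard ``types determine formulas'' argument, driven by Lemma~\ref{lem:main}. Let $k\coloneqq|\free(\phi)|$. We may assume $\free(\phi)\subseteq\{x_1,\ldots,x_k\}$, renaming at the end. The case $k\ge 1$ is the main one; for $k=0$ condition~\ref{item:i:lem:main} of Lemma~\ref{lem:main} is vacuous and only independence types appear. Set $I\coloneqq\itp_{p-1,q,\rHalbe,k+p}$. For $p=0$ there is nothing to do, since $\FOplus[0,q]=\LocFOplus[0,q]$, so assume $p\ge1$.

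First we enumerate the finitely many candidate ``profiles''. A profile consists of:
\begin{itemize}
\item a $(p{-}1,q,\rHalbe,k{+}p)$-independence type $\Xi$, i.e.\ a subset of literals over the finite set $\normIS[p{-}1,q,\rHalbe,k{+}p]$;
\item a graph $G$ on $[k]$;
\item for each connected component $J$ of $G$, a local $(|J|,p,q)$-type $\Lambda_J$ in the variables $\vec x_J$ (finitely many, since $\normLocFOplus[p,q]$ is finite).
\end{itemize}
Its characteristic formula is
\[
\chi_{\Xi,G,(\Lambda_J)}\;\coloneqq\;\bigwedge\Xi\wedge\bigwedge_{\{i,j\}\in E(G)}\dist(x_i,x_j)\le r\wedge\bigwedge_{\{i,j\}\notin E(G)}\neg\dist(x_i,x_j)\le r\wedge\bigwedge_J\textstyle\bigwedge\Lambda_J(\vec x_J).
\]
The distance atoms with bound $r=\myrho(p,q)$ lie in $\LocFOplus[p,q]$, so the part after $\bigwedge\Xi$ is a formula in $\LocFOplus[p,q]$ and $\bigwedge\Xi$ is a Boolean combination of independence sentences.

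By Lemma~\ref{lem:main}, any two tuples $(\CA,\vec a)$, $(\CB,\vec b)$ satisfying the same characteristic formula have the same $(p,q)$-type, and hence agree on $\phi$ by \eqref{eq:tpVsFormulas}. Every pair $(\CA,\vec a)$ satisfies exactly one characteristic formula, namely the one read off from $\itp(\CA)$, $G^\CA_{\vec a,r}$ and the local types of the components. Therefore $\phi$ is equivalent to
\[
\phi'\coloneqq\bigvee\{\chi : \chi \text{ is satisfiable and } \chi\models\phi\}.
\]
By the previous observation, $\chi\models\phi$ holds as soon as a single model of $\chi$ satisfies $\phi$. Finally we adjust free variables: if some $x_i$ does not really occur, or to get $\free(\phi')=\free(\phi)$ exactly, conjoin a tautology such as $x_i=x_i$ for the missing variables.

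The main obstacle is computability. The existential description is not directly algorithmic, because deciding whether $\chi$ is satisfiable and whether $\chi\models\phi$ amounts to FO validity over arbitrary structures, which is undecidable in general. I would first try to avoid semantics entirely and mirror Gaifman's constructive proof: induct on the structure of $\phi$ following Definition~\ref{def:FOpluspq}. Boolean combinations and distance atoms are trivial. For $\exists y\,(\dist(x,y)\le d\wedge\psi)$ and $\exists y\,\psi$ with $\psi$ already in normal form (for $p{-}1$ and $k{+}1$ free variables), we split $\psi$ into profiles and evaluate the quantifier syntactically, as in the case analysis of the proof of Lemma~\ref{lem:main}. A near witness yields a local formula of the form~\eqref{eq:locFO}. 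A far witness in a component isolated at radius $r_1$ is expressed via independence sentences over the local type, in the manner of Claim~\ref{claim:lem:main}, counting up to $k{+}p$ distant witnesses.

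The alternative, cheaper route is semantic. The set of all formulas in the target class with free variables in $\{x_1,\ldots,x_k\}$ is finite and computable. The equivalence $\phi\leftrightarrow\phi'$ is a first-order sentence, and validity is recursively enumerable by completeness. Since we have just proved that some such $\phi'$ exists, we enumerate candidates $\phi'$ in parallel with proofs of $\forall\vec x(\phi\leftrightarrow\phi')$ until one is found. This terminates and gives the required algorithm, so I would present this as the computability argument. The delicate points are only the bookkeeping that each conjunct really lies in $\LocFOplus[p,q]$ with at most $q{-}p$ free variables, which holds because $k\le q{-}p$, and the independence parameters $(p{-}1,q,\rHalbe,k{+}p)$, which are exactly those in Lemma~\ref{lem:main}.
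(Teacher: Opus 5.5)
Your proposal is correct and follows the paper's argument: characteristic formulas of finitely many type profiles, Lemma~\ref{lem:main} to show that each profile decides $\phi$, and a completeness-theorem search for the algorithm (your profiles $\Xi$ must range over complete independence types, as the word ``type'' requires). The only difference is cosmetic. You record $G^{\CA}_{\vec a,r}$ and the local types of its components directly, as the paper does in the proof of Theorem~\ref{thm:forGroheS18}, whereas the paper's proof of this theorem uses the local type of the whole tuple and recovers condition~\ref{item:i:lem:main} via Lemma~\ref{lemma:ltpVsConnectedComponents}.
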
  
\begin{proof}
Without loss of generality we assume that
$\free(\phi)=\set{x_1,\ldots,x_k}$.
\medskip

We first consider the case $k\ge 1$.
By $\textup{Mod}(\phi)$ we denote the class of all pairs $(\CA,\vec
a)$ where $\CA$ is a $\sigma$-structure and $\vec a\in A^k$ such that
$\CA\models\phi[\vec a]$.

We let $T$ be the set of all pairs $(\xi,\lambda)$ such that
$\xi=\bigwedge\itp_{p-1,q,\rHalbe,k+p}(\CA)$ and
$\lambda=\bigwedge\ltp_{p,q}(\CA,\vec a)$ for some $(\CA,\vec a)$ in
$\textup{Mod}(\phi)$. Note that $T$ is finite, because there are only
finitely many $(p{-}1,q,\rHalbe,k{+}p)$-independence types and only finitely
many local $(k,p,q)$-types.

Let $S$ be the set of all $\xi$ that occur as the first component of
some pair in $T$ (i.e., there exists a $\lambda$ such that
$(\xi,\lambda)\in T$). For each $\xi\in S$ let $L_\xi$ be the set of
all $\lambda$ such that $(\xi,\lambda)\in T$. That is,
$T=\bigcup_{\xi\in S}(\set{\xi}\times L_\xi)$.
Let
\[
  \phi' \ \ \deff \ \
  \bigvee_{\xi\in S} \Big(\;
    \xi \, \wedge \,
    \bigvee_{\lambda\in L_\xi} \lambda
  \Big).
\]
Clearly, $\phi'$ is a formula of the desired shape. Next, we prove that
$\phi'$ is equivalent to $\phi$.
Consider an arbitrary $\sigma$-structure $\CB$ and a tuple $\vec b\in
B^k$. Our aim is to show that $\CB\models\phi[\vec b] \iff
\CB\models\phi'[\vec b]$.

If $\CB\models\phi[\vec b]$, then $(\CB,\vec b)$ belongs to
$\textup{Mod}(\phi)$, and hence $T$ contains the tuple $(\xi,\lambda)$
for
$\xi\deff \bigwedge\itp_{p-1,q,\rHalbe,k+p}(\CB)$ and
$\lambda\deff\bigwedge\ltp_{p,q}(\CB,\vec b)$.
Thus, $\xi\in S$ and $\lambda\in L_\xi$. Since $\CB\models\xi$ and
$\CB\models\lambda[\vec b]$ we obtain that $\CB\models\phi'[\vec b]$.

For the opposite direction assume that $\CB\models\phi'[\vec b]$. Then
there exists a $\xi\in S$ and a $\lambda\in L_\xi$ such that
$\CB\models\xi$ and $\CB\models\lambda[\vec b]$.
Since $(\xi,\lambda)\in T$, there exists a $\sigma$-structure $\CA$
and a tuple $\vec a\in A^k$ such that $(\CA,\vec a)$ belongs to
$\textup{Mod}(\phi)$ and
$\xi=\bigwedge\itp_{p-1,q,\rHalbe,k+p}(\CA)$ and
$\lambda=\bigwedge\ltp_{p,q}(\CA,\vec a)$.
Since $\CB\models\xi$ and $\CB\models\lambda[\vec b]$, we obtain from
\eqref{eq:itp} and \eqref{eq:localType} that
$\itp_{p-1,q,\rHalbe,k+p}(\CA)=\itp_{p-1,q,\rHalbe,k+p}(\CB)$ and
$\ltp_{p,q}(\CA,\vec a)=\ltp_{p,q}(\CB,\vec b)$. By applying
Lemma~\ref{lemma:ltpVsConnectedComponents}, we obtain that the
assumptions of Lemma~\ref{lem:main} are satisfied. Hence,
Lemma~\ref{lem:main} yields $\tp_{p,q}(\CA,\vec a)=\tp_{p,q}(\CB,\vec
b)$.
Using equation \eqref{eq:tpVsFormulas} and the fact that $\CA\models\phi[\vec a]$,
we obtain that $\CB\models\phi[\vec b]$.

This proves that an equivalent formula of the desired shape indeed exists.
Using standard techniques, one obtains an algorithm that computes upon
input of $\phi$ an equivalent formula $\phi'$ of the desired shape. For this, we regard $\FO^+$ formulas as $\FO$ formulas, replacing the distance atoms by distance formulas. Then we can use the completeness theorem to recursively
enumerate all valid formulas (recall that a
formula $\chi(\vec x)$ is valid if and only if for every $\sigma$-structure
$\CA$ and every tuple $\vec a\in A^k$ we have $\CA\models\chi[\vec a]$). For
each enumerated formula $\chi$ we check if it is of the form $(\phi
\leftrightarrow \phi')$ where $\phi'$ is a formula of the form
$\bigvee_{\xi\in S'}\big(\xi \wedge \bigvee_{\lambda\in L'_\xi} \lambda \big)$
for a set $S'$ of formulas $\xi=\bigwedge\Xi$, where $\Xi$ is a
$(p{-}1,q,\rHalbe,k{+}p)$-independence type,  and sets
$L'_\xi$ of formulas $\lambda\coloneqq\bigwedge\Lambda$ for a local $(k,p,q)$-type $\Lambda$.
If so, we terminate and output $\phi'$.
Since we know that there exists a formula of this shape which is
equivalent to $\phi$, the algorithm will terminate on every input
formula $\phi$. This proves the theorem for the case $k\geq 1$.
\medskip

The case $k=0$ can be proved analogously. Let
$\textup{Mod}(\phi)$ be the class of all $\sigma$-structures $\CA$
such that $\CA\models\phi$. Let $S$ be the set of all $\xi$ such
that
$\xi=\bigwedge\itp_{p-1,q,\rHalbe,k+p}(\CA)$ for some $\CA$ in
$\textup{Mod}(\phi)$. Choose $\phi'\deff \bigvee_{\xi\in
  S}\xi$. Analogously as in the previous case, we can prove that
$\phi'$ is equivalent to $\phi$. This proves that an equivalent formula
of the desired shape indeed exists. Using the same techniques as in
the previous case, we also obtain an algorithm that upon input of
$\phi$ outputs such a $\phi'$.

In summary, this completes the proof of Theorem~\ref{thm:rpGaifman}.
\end{proof}

\section{A Refined Version of Theorem~\ref{thm:rpGaifman}}
\label{sec:FixingThm71}

Using a slight modification of the proof of Theorem~\ref{thm:rpGaifman}, we can show the following
Theorem~\ref{thm:forGroheS18}, which is similar to \cite[Theorem~7.1]{GroheS18}, yet, its
statement is conceptually much simpler.

Theorem~\ref{thm:forGroheS18} goes beyond
Theorem~\ref{thm:rpGaifman} in two ways, both of which are important
for applications such as counting and enumeration (cf.\
\cite{GroheS18,SchweikardtSV22}): first, the 
normal form is a disjunction of mutually exclusive formulas (assertion
\ref{item:forGroheS18:two} of the theorem), and second, the free
variables in the local formulas are always interpreted by elements
that are close to each other in the Gaifman graph of the structure (they form an $r$-connected component). 

For $k\in\NNpos$ we write $\CG_k$ to denote the set of all undirected
graphs with vertex set $[k]$.
For $p,q,\hat{r},s\in\NN$ with $p\leq q$, a \emph{basic local $(p,q,\hat{r},s)$-sentence} is an $\FOplus$ sentence of the form
\eqref{eq:basic-local-Gaifman}, 
where $m\leq s$, $r\leq \hat{r}$, and $\psibl(x)\in\FOplus[p,q]$ is $r$-local.

\begin{theorem}\label{thm:forGroheS18}
There is an algorithm which, upon input of $p,q\in\NN$ with $p< q$ and
a formula $\phi(x_1,\ldots,x_k)\in\FOplus[p,q]$ with $k= |\free(\phi)|\geq 1$,
lets $r\deff\myrho(p,q)$ and computes
for each $G\in \CG_k$ a number $m_G\in\NN$ and for each $i\in[m_G]$
\begin{itemize}
\item a Boolean combination $\xi^i_G$ of 
basic local $(p{-}1,q,\rHalbe,k{+}p)$-sentences,
  and
\item for each connected component $I$ of $G$ an $r$-local formula
  $\psi^i_{G,I}(\vec x_I)\in\FOplus[p,q]$
\end{itemize}
such that the following holds.
\begin{enumerate}
\item\label{item:forGroheS18:one}
  For all $\sigma$-structures $\CA$ and all $\vec a\in A^k$ we have
  $\CA\models\phi[\vec a]$ if and only if for $G\deff G_{\vec
    a,r}^{\CA}$ there is an $i\in[m_G]$ such that
  $\CA\models\xi_G^i$ and $\CA\models
  \psi_{G,I}^i[\vec{a}_I]$ for every connected component $I$
  of $G$.
\item\label{item:forGroheS18:two} 
  For all $\sigma$-structures $\CA$ and all $\vec a\in A^k$, there is
  at most one $i\in [m_G]$ for $G\deff G^{\CA}_{\vec a,r}$ such that  
  $\CA\models\xi_G^i$ and $\CA\models
  \psi_{G,I}^i[\vec{a}_I]$ for every connected component $I$
  of $G$.
\end{enumerate}  
\end{theorem}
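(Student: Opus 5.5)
The plan is to refine the type-based construction from the proof of Theorem~\ref{thm:rpGaifman}, this time indexing by \emph{complete} types, so that the disjuncts are mutually exclusive by construction. Fix $G\in\CG_k$. Consider all pairs $(\Xi,(\Lambda_I)_I)$, where $\Xi$ is a $(p{-}1,q,\rHalbe,k{+}p)$-independence type and, for each connected component $I$ of $G$, $\Lambda_I$ is a local $(|I|,p,q)$-type in the variables $\vec x_I$. Call such a pair \emph{realised for $\phi$} if there are $(\CA,\vec a)$ with $G^\CA_{\vec a,r}=G$, $\itp_{p-1,q,\rHalbe,k+p}(\CA)=\Xi$, $\ltp_{p,q}(\CA,\vec a_I)=\Lambda_I$ for all components $I$, and $\CA\models\phi[\vec a]$. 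Enumerate the realised pairs as $i\in[m_G]$, and set
\[
\xi^i_G\deff\bigwedge\Xi \qquad\text{and}\qquad \psi^i_{G,I}\deff\bigwedge\Lambda_I\wedge\gamma_{G,I}(\vec x_I),
\]
where $\gamma_{G,I}$ is the conjunction of the distance atoms $\dist(x_j,x_{j'})\le r$ for the edges of $G[I]$ and $\neg\dist(x_j,x_{j'})\le r$ for the non-edges inside $I$. The literals in $\xi^i_G$ are independence sentences with $\lambda\in\LocFOplus[p',q']$, $p'\le p{-}1$. By Lemma~\ref{lemma:locality} these $\lambda$ are $\myrho(p',q')$-local, hence $r'$-local, and they belong to $\FOplus[p{-}1,q]$ by Remark~\ref{remark:DefOfFOpq}; so $\xi^i_G$ is a Boolean combination of basic local $(p{-}1,q,\rHalbe,k{+}p)$-sentences. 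The formula $\psi^i_{G,I}$ lies in $\LocFOplus[p,q]\subseteq\FOplus[p,q]$, since the distance atoms with $d\le r=\myrho(p,q)$ are allowed and $|I|\le k\le q-p$. By Lemma~\ref{lemma:locality} it is $r$-local.

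Assertion \ref{item:forGroheS18:two} is then immediate. Two distinct indices differ either in $\Xi$ or in some $\Lambda_I$. Distinct independence types are mutually contradictory, and so are distinct local types for the same $\vec x_I$, since each contains, for some formula, that formula in one and its negation in the other.

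For assertion \ref{item:forGroheS18:one}, the forward direction is by construction: $(\CB,\vec b)$ with $\CB\models\phi[\vec b]$ and $G=G^\CB_{\vec b,r}$ realises its own pair, and $\gamma_{G,I}$ holds for $\vec b_I$. For the backward direction, assume $\CB\models\xi^i_G$ and $\CB\models\psi^i_{G,I}[\vec b_I]$ for all components $I$, with $G=G^\CB_{\vec b,r}$. Let $(\CA,\vec a)$ be the witness of the realised pair $i$. Then $G^\CA_{\vec a,r}=G=G^\CB_{\vec b,r}$, the independence types of $\CA$ and $\CB$ agree by \eqref{eq:itp}, and $\ltp_{p,q}(\CA,\vec a_I)=\ltp_{p,q}(\CB,\vec b_I)$ holds for every component $I$ by \eqref{eq:localType}. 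Since the components of $G$ are exactly the $r$-components of $\vec a$ and of $\vec b$, both hypotheses of Lemma~\ref{lem:main} are met. That lemma yields $\tp_{p,q}(\CA,\vec a)=\tp_{p,q}(\CB,\vec b)$, and so $\CB\models\phi[\vec b]$ by \eqref{eq:tpVsFormulas}. Note that the hypothesis $G=G^\CB_{\vec b,r}$ is built into the statement, so the $\gamma_{G,I}$ are not actually needed for correctness. Including them is harmless.

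The main obstacle is computability: deciding which pairs are realised for $\phi$ is a satisfiability question, which is undecidable in general. I would handle this as in the proof of Theorem~\ref{thm:rpGaifman}. The first step is to encode the condition ``$G^\CA_{\vec a,r}=G$'' by the $\FOplus$ formula $\gamma_G\deff\bigwedge_I\gamma_{G,I}\wedge\bigwedge\neg\dist(x_j,x_{j'})\le r$, where the last conjunction ranges over pairs $j,j'$ lying in different components of $G$. The second step uses the fact that $\phi\wedge\gamma_G$ is equivalent to $\gamma_G\wedge\bigvee_{i}(\xi^i_G\wedge\bigwedge_I\psi^i_{G,I})$ over the realised index set, and that no strictly smaller or larger set of pairs gives this equivalence. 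The reason is that every included pair must be consistent with $\phi\wedge\gamma_G$ and every realised pair must be included, because the pairs are mutually exclusive. The algorithm therefore enumerates valid $\FO$ formulas via the completeness theorem, and in parallel looks for a candidate set $P$ of pairs such that $\gamma_G\wedge\phi\leftrightarrow\gamma_G\wedge\bigvee_{P}(\dots)$ is valid. Since the true index set is a valid witness, the search terminates. Any set it finds still satisfies \ref{item:forGroheS18:one}, because \ref{item:forGroheS18:one} only refers to $\vec a$ with $G^\CA_{\vec a,r}=G$. It also satisfies \ref{item:forGroheS18:two}, because mutual exclusivity holds for every set of distinct pairs. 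It therefore does not matter whether the set found is exactly the set of realised pairs.
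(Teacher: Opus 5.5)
Your proposal is correct and follows essentially the same route as the paper. You index the disjuncts by $(G,\Xi,(\Lambda_I)_{I})$, use Lemma~\ref{lem:main} for the backward direction, find a suitable set of pairs by a completeness-theorem search, and get mutual exclusivity from the completeness of the types. Running the search separately for each $G$ with $\gamma_G$, rather than once for a global formula with $\delta_{G,r}$, and adding the redundant $\gamma_{G,I}$, are only cosmetic differences.

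One aside is inaccurate but unused. You claim that ``no strictly larger set of pairs gives the equivalence'' and that ``every included pair must be consistent with $\phi\wedge\gamma_G$''; both are false. Any pair whose disjunct is unsatisfiable together with $\gamma_G$ can be added, for example one whose independence type is inconsistent. As you yourself observe, the correctness of the output does not depend on this.
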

\begin{proof}
By $\textup{Mod}(\phi)$ we denote the class of all pairs $(\CA,\vec
a)$ where $\CA$ is a $\sigma$-structure and $\vec a\in A^k$ such that
$\CA\models\phi[\vec a]$.
For every $G\in\CG_k$ we let $\conn(G)$ be the set consisting of the
connected components of $G$.
We let
\[
  T \ \deff \
  \{\, (G,\Xi,(\Lambda_I)_{I\in \conn(G)})\  :
  \begin{array}[t]{l}
    (\CA,\vec
    a)\in\textup{Mod}(\phi),\ \
    G=G^{\CA}_{\vec a,r}, \ \
    \Xi=\itp_{p-1,q,\rHalbe,k+p}(\CA),
\\
     \text{ and \ } \Lambda_I=\ltp_{p,q}(\CA,\vec a_I), \text{ for each } I\in \conn(G) \, \}.
   \end{array}
\]    
Note that $T$ is finite, because there are only
finitely many graphs with vertex set $[k]$,
only finitely many $(p{-}1,q,\rHalbe,k{+}p)$-independence types, 
and only finitely many local $(k',p,q)$-types with $1\leq k'\leq k$.

Let $S$ be the set of all $G\in\CG_k$ that occur as the first component of
some tuple in $T$. For each $G\in S$ let $L_G$ be the set of
all $(\Xi,(\Lambda_I)_{I\in \conn(G)})$ such that
$(G,\Xi,(\Lambda_I)_{I\in \conn(G)})\in T$. That is,
$T=\bigcup_{G\in S}(\set{G}\times L_G)$.

For $G\in S$ let
\begin{equation}\label{eq:deltaGr}
   \delta_{G,r}(x_1,\ldots,x_k) \ \deff \ \bigwedge_{\set{i,j}\in
     E(G)}\dist(x_i,x_j)\leq r \ \wedge \ \bigwedge_{i<j,\atop \set{i,j}\not\in
     E(G)}\dist(x_i,x_j)>r.
 \end{equation}
 Then $\delta_{G,r}\in\LocFOplus[p,q]$, and for every
 $\sigma$-structure
$\CA$ and every tuple $\vec a\in A^k$ we
have
$\CA\models\delta_{G,r}[\vec a] \iff G^{\CA}_{\vec a,r}=G$.
Let
\[
  \phi' \ \ \deff \quad
  \bigvee_{G\in S} \Big(\;
    \delta_{G,r} \ \wedge 
    \bigvee_{(\Xi,(\Lambda_I)_{I\in \conn(G)})\in L_G} \big(
    \bigwedge\Xi \,\wedge\,\bigwedge_{I\in \conn(G)} \bigwedge\Lambda_I
    \big) 
  \Big).
\]

Next, we prove that $\phi'$ is equivalent to $\phi$.
Consider an arbitrary $\sigma$-structure $\CB$ and a tuple $\vec b\in
B^k$. Our aim is to show that $\CB\models\phi[\vec b] \iff
\CB\models\phi'[\vec b]$.

If $\CB\models\phi[\vec b]$, then $(\CB,\vec b)$ belongs to
$\textup{Mod}(\phi)$, and hence $T$ contains the tuple
\allowbreak
$(G,\Xi,(\Lambda_I)_{I\in \conn(G)})$
for $G\deff G^{\CB}_{\vec b,r}$,
$\Xi\deff \itp_{p-1,q,\rHalbe,k+p}(\CB)$ and
$\Lambda_I\deff\ltp_{p,q}(\CB,\vec b_I)$, for every $I\in \conn(G)$.
Thus, $G\in S$ and $(\Xi,(\Lambda_I)_{I\in \conn(G)})\in L_G$.
Since $\CB\models\delta_{G,r}[\vec b]$ and $\CB\models\bigwedge\Xi$ and
$\CB\models\bigwedge\Lambda_I[\vec b_I]$ for every $I\in \conn(G)$, we obtain that $\CB\models\phi'[\vec b]$.

For the opposite direction assume that $\CB\models\phi'[\vec b]$. Then
there exists a $G\in S$ and a tuple $(\Xi,(\Lambda_I)_{I\in \conn(G)})\in L_G$ such that
$\CB\models\delta_{G,r}[\vec b]$ (i.e., $G=G^{\CB}_{\vec b,r}$) and $\CB\models \bigwedge\Xi$ and
$\CB\models\bigwedge\Lambda_I[\vec b_I]$ for every $I\in \conn(G)$.

Since $(G,\Xi,(\Lambda_I)_{I\in \conn(G)})\in T$, there exists a $\sigma$-structure $\CA$
and a tuple $\vec a\in A^k$ such that $(\CA,\vec a)$ belongs to
$\textup{Mod}(\phi)$ and
$G=G^{\CA}_{\vec a,r}$ and
$\Xi=\itp_{p-1,q,\rHalbe,k+p}(\CA)$ and
$\Lambda_I=\ltp_{p,q}(\CA,\vec a_I)$, for every $I\in \conn(G)$.
Since $\CB\models\bigwedge\Xi$ and $\CB\models\bigwedge\Lambda_I[\vec
b_I]$, for every $I\in \conn(G)$, we obtain from
\eqref{eq:itp} and \eqref{eq:localType} that
$\itp_{p-1,q,\rHalbe,k+p}(\CA)=\itp_{p-1,q,\rHalbe,k+p}(\CB)$ and
$\ltp_{p,q}(\CA,\vec a_I)=\ltp_{p,q}(\CB,\vec b_I)$ for every $I\in
\conn(G)$. Since also $G^{\CB}_{\vec b,r}=G=G^{\CA}_{\vec a,r}$, 
we obtain that the
assumptions of Lemma~\ref{lem:main} are satisfied. Hence,
Lemma~\ref{lem:main} yields $\tp_{p,q}(\CA,\vec a)=\tp_{p,q}(\CB,\vec
b)$.
Using equation \eqref{eq:tpVsFormulas} and the fact that $\CA\models\phi[\vec a]$,
we obtain that $\CB\models\phi[\vec b]$.

This proves that $\phi'$ is equivalent to $\phi$.
Using standard techniques, we can utilise this to obtain an algorithm that computes upon
input of $\phi$ an equivalent formula $\phi'$ of the form
\begin{equation}\label{eq:bugfix}
  \bigvee_{G\in S'} \Big(\;
    \delta_{G,r} \ \wedge 
    \bigvee_{(\Xi,(\Lambda_I)_{I\in \conn(G)})\in L'_G} \big(
    \bigwedge\Xi \,\wedge\,\bigwedge_{I\in \conn(G)} \bigwedge\Lambda_I
    \big) 
  \Big),
\end{equation}
for a set $S'\subseteq \CG_k$ and, for every $G\in S'$, a set $L'_G$
of tuples of the form $(\Xi,(\Lambda_I)_{I\in \conn(G)})$, where $\Xi$ is
a $(p{-}1,q,\rHalbe,k{+}p)$-independence type, and
$\Lambda_I$ is a local $(|I|,p,q)$-type, for each $I\in \conn(G)$.

For this, we regard $\FO^+$ formulas as $\FO$ formulas, replacing the
distance atoms by distance formulas. Then we can use the completeness
theorem to recursively enumerate all valid formulas (recall that a
formula $\chi(\vec x)$ is valid if and only if for every $\sigma$-structure
$\CA$ and every tuple $\vec a\in A^k$ we have $\CA\models\chi[\vec a]$). For
each enumerated formula $\chi$ we check if it is of the form $(\phi
\leftrightarrow \phi')$ where $\phi'$ is a formula of the form
\eqref{eq:bugfix}.
If so, we terminate and output $\phi'$.
Since we know that there exists a formula of this shape which is
equivalent to $\phi$, the algorithm will terminate on every input
formula $\phi$. 

Finally, to finish the proof of Theorem~\ref{thm:forGroheS18}, after having computed such a $\phi'$, we let
$m_G\deff |L'_G|$ for all $G\in S'$, and
$m_G\deff 0$ for all $G\in\CG_k\setminus S'$.
For all $G\in S'$ we let $t_1,\ldots,t_{m_G}$ be a list of all tuples in $L'_G$, and for each $i\in[m_G]$ we let $(\Xi_G^i, (\Lambda_{G,I}^i)_{I\in \conn(G)})=t_i$ and output
\[
  \xi_G^i \ \deff \ \bigwedge\Xi_G^i
\qquad\text{and}\qquad
  \psi_{G,I}^i(\vec x_I) \ \deff \ \bigwedge\Lambda_{G,I}^i\;(\vec x_I),\quad\text{for all } I\in \conn(G).
\]

Since $\phi'$ is equivalent to $\phi$, we obtain that assertion \ref{item:forGroheS18:one} of Theorem~\ref{thm:forGroheS18} is satisfied.
Concerning assertion \ref{item:forGroheS18:two}, \emph{assume for contradiction} that there is a $\sigma$-structure $\CA$, a tuple $\vec a\in A^k$, and two distinct $i,j\in[m_G]$, for $G=G^{\CA}_{\vec a,r}$, such that $(\CA,\vec a)$ satisfies
\[
  \xi^i_{G} \ \wedge  \bigwedge_{I\in \conn(G)} \psi^i_{G,I}(\vec x_I)
  \qquad\text{and}\qquad
  \xi^j_{G} \ \wedge  \bigwedge_{I\in \conn(G)} \psi^j_{G,I}(\vec x_I),
\]
i.e., $(\CA,\vec a)$ satisfies
\[
  \bigwedge \Xi_i \ \wedge  \bigwedge_{I\in \conn(G)}\bigwedge\Lambda_{G,I}^{i}(\vec x_I)
  \qquad\text{and}\qquad
  \bigwedge \Xi_j \ \wedge  \bigwedge_{I\in \conn(G)}\bigwedge\Lambda_{G,I}^{j}(\vec x_I).
\]
Since $[m_G]\neq\emptyset$, we know that $G\in S'$.
Since $\Xi_i$ and $\Xi_j$ are \emph{$(p{-}1,q,\rHalbe,k{+}p)$-independence types}, and any structure has exactly one such type, we obtain that $\Xi_i=\Xi_j$.
Furthermore, for every $I\in \conn(G)$, since $\Lambda^i_{G,I}$ and $\Lambda^j_{G,I}$ are
\emph{local $(|I|,p,q)$-types}, and any $|I|$-tuple of a structure has exactly one such type,
we obtain that $\Lambda^i_{G,I}=\Lambda^j_{G,I}$.
Thus, $t_i=t_j$, contradicting the assumption that $i\neq j$ (because for distinct $i,j\in[m_G]$,  $t_i$ and $t_j$ are tuples in $L'_G$ with $t_i\neq t_j$).

Finally, note that $\xi^i_G$ is a Boolean combination of
$(p{-}1,q,\rHalbe,k{+}p)$-independence sentences, and that every
$(p{-}1,q,\rHalbe,k{+}p)$-independence sentence is a
basic local $(p{-}1,q,\rHalbe,\allowbreak{}k{+}p)$-sentence.
Furthermore, for each $I\in \conn(G)$, the formula
$\psi^i_{G,I}(\vec x_I)$ is a
formula in $\LocFOplus[p,q]\subseteq\FOplus[p,q]$.
From Lemma~\ref{lemma:locality} we obtain that $\psi^i_{G,I}(\vec x_I)$ is $\myrho(p,q)$-local; and we know that
$\myrho(p,q)=r$.
This completes the proof of Theorem~\ref{thm:forGroheS18}.
\end{proof}

\section{A Rank-Preserving Gaifman Normal Form Theorem}
\label{sec:rank-new}

We define the \emph{rank} $\rk(\phi)$ of an $\FOplus$ formula $\phi$
as follows.

\begin{definition}\label{def:rank}
  The \emph{rank} $\rk(\phi)$ of an $\FOplus$ formula $\phi$ is the
  least $q\in\NN$ such that $\phi\in\FOplus[p,q]$ for some $p\leq q$.
\uend
\end{definition}

Using Remark~\ref{remark:DefOfFOpq}, one obtains that
$\rankletter\deff\rk(\phi)$ is the least $q\in\NN$ such that
$\phi\in\FOplus\big[q{-}|\free(\phi)|,q\big]$, and 
that $\phi\in\FOplus[\rankletter{-}|\free(\phi)|,q]$,  for
every $q\in\NN$ with $q\geq \rankletter$.

To formulate our ``rank-preserving normal form theorem'', we use the
notation that was introduced directly after Theorem~\ref{thm:gaifman},
and generalise it to $\FOplus$ as follows. 
A \emph{basic local sentence} in $\FOplus$ is an $\FOplus$ sentence of
the form \eqref{eq:basic-local-Gaifman},
where $\psibl(x)$ is an $r$-local $\FOplus$ formula with one free
variable.
An $\FOplus$ formula in Gaifman normal form is a Boolean combination
of local $\FOplus$ formulas and of basic local sentences in $\FOplus$.

The \emph{inner rank} of a basic local sentence $\xi$ of the form \eqref{eq:basic-local-Gaifman}
is the rank $\rk(\psibl)$.
Let $\phi'$ be an $\FOplus$ formula in Gaifman normal form, and let $S$ and $L$
be the sets of basic local sentences and of local formulas,
respectively, such that $\phi'$ is a Boolean combination of the
formulas in $S\cup L$.
The \emph{inner rank} of  $\phi'$
is the maximum of the inner ranks of all sentences in $S$.
The \emph{outer rank} of $\phi'$ is the maximum of the
ranks of all the formulas in~$L$.

As a corollary to Theorem~\ref{thm:rpGaifman}, we obtain the following
``rank-preserving Gaifman normal form theorem''.

\begin{theorem}\label{thm:rankGaifman}\label{cor:rpGaifman}
  Every
  $\FO^+$ formula $\phi$ of rank $\rankletter\coloneqq\rk(\phi)$ and with
  $k\deff|\free(\phi)|$ is equivalent to an
  $\FOplus$ formula $\phi'$ in Gaifman normal form
  of outer rank,  inner rank, and 
  width at most $\rankletter$, and of radius at most
  $\min\set{\myrho(\rankletter{-}k,\rankletter),\frac{\myrho(\rankletter,\rankletter)}{2}}$.
  Moreover,  $\free(\phi')=\free(\phi)$, and there is an algorithm that, upon
  input of $\phi$, computes such a $\phi'$.
\end{theorem}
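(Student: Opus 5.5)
We derive the statement directly from Theorem~\ref{thm:rpGaifman}. Let $\rankletter\deff\rk(\phi)$ and $k\deff|\free(\phi)|$, and put $p\deff\rankletter-k$ and $q\deff\rankletter$. By the remark after Definition~\ref{def:rank}, $\phi\in\FOplus[p,q]$. Theorem~\ref{thm:rpGaifman} then yields, computably, an equivalent $\phi'$ with $\free(\phi')=\free(\phi)$. This $\phi'$ is a Boolean combination of formulas in $\LocFOplus[p,q]$ and of $(p{-}1,q,\frac{\myrho(p,q)}{2},k{+}p)$-independence sentences. Note that $k+p=\rankletter$.

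Next we check that $\phi'$ is in Gaifman normal form with the required parameters.

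\emph{Local formulas.} Each $\lambda\in\LocFOplus[p,q]$ with at least one free variable is $\myrho(p,q)$-local by Lemma~\ref{lemma:locality}. Since $\lambda\in\FOplus[p,q]$ with $p\le q=\rankletter$, we get $\rk(\lambda)\le\rankletter$. Sentences of $\LocFOplus[p,q]$ are Boolean combinations of $0$-ary atoms, and such sentences can be absorbed as $0$-local formulas (or read as trivial local formulas). Hence the outer rank is at most $\rankletter$.

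\emph{Independence sentences.} Each sentence has the form \eqref{eq:indepSentence} with parameters $p'\le p-1$, $q'\le q$, $\myrho(p',q')\le r'\le\frac{\myrho(p,q)}{2}$, $s'\le k+p=\rankletter$ and $\lambda\in\LocFOplus[p',q']$. So its width is $s'\le\rankletter$. The formula $\lambda$ is $\myrho(p',q')$-local and hence $r'$-local, so the sentence is a basic local sentence of radius $r'\le\frac{\myrho(p,q)}{2}$. Its inner rank is $\rk(\lambda)\le q'\le\rankletter$.

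\emph{Radius.} The radius of $\phi'$ is therefore at most $\max\{\myrho(p,q),\frac{\myrho(p,q)}2\}=\myrho(\rankletter{-}k,\rankletter)$. This is at most $\myrho(\rankletter,\rankletter)$ by conditions \ref{myrho:constraint:zero}--\ref{myrho:constraint:inc}, which give monotonicity in $p$.

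The main obstacle is the stated bound $\min\{\myrho(\rankletter{-}k,\rankletter),\frac{\myrho(\rankletter,\rankletter)}2\}$. Local formulas only come with radius $\myrho(p,q)$, and this can exceed $\frac{\myrho(\rankletter,\rankletter)}2$ when $k=0$, that is, when $p=\rankletter$.

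We handle this by cases. If $k\ge1$, then $p\le\rankletter-1$, and condition~\ref{myrho:constraint:inc} gives $\myrho(\rankletter,\rankletter)\ge2\myrho(\rankletter-1,\rankletter)\ge2\myrho(p,\rankletter)$. So $\myrho(p,q)\le\frac{\myrho(\rankletter,\rankletter)}{2}$, and both terms of the minimum bound everything.

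If $k=0$, then $\phi$ is a sentence, and the $k=0$ case in the proof of Theorem~\ref{thm:rpGaifman} gives $\phi'=\bigvee\xi$, a Boolean combination of independence sentences only. There are no local formulas with free variables apart from $0$-ary atoms, which are $0$-local. The radius is then at most $\frac{\myrho(\rankletter,\rankletter)}{2}$. Moreover $\myrho(\rankletter-0,\rankletter)=\myrho(\rankletter,\rankletter)\ge\frac{\myrho(\rankletter,\rankletter)}{2}$, so the minimum equals $\frac{\myrho(\rankletter,\rankletter)}2$, as required.

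Computability is inherited from Theorem~\ref{thm:rpGaifman}, since $\rk(\phi)$ is computable by checking membership in the decidable syntactic classes $\FOplus[p,q]$.
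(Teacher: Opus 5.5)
Your proposal is correct and follows the paper's own proof. You set $q=\rankletter$ and $p=\rankletter-k$, apply Theorem~\ref{thm:rpGaifman}, read off ranks, width and radius from Lemma~\ref{lemma:locality} and the definition of independence sentences, and use condition~\ref{myrho:constraint:inc} to get $\myrho(p,q)\le\myrho(\rankletter,\rankletter)/2$ when $k\ge 1$. Your separate treatment of $k=0$ (where the only local components are Boolean combinations of $0$-ary atoms, hence $0$-local) spells out a point the paper's proof leaves implicit.
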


\begin{proof}
Upon input of $\phi$, we compute $\rankletter\deff\rk(\phi)$ and
$k\deff|\free(\phi)|$, and we let $q\deff\rankletter$ and $p\deff q{-}k$.
Then, according to Definition~\ref{def:rank}, $\phi\in\FOplus[p,q]$.

We use the algorithm provided by Theorem~\ref{thm:rpGaifman} to
compute a formula $\phi'$ that is a Boolean combination of formulas in
$\LocFOplus[p,q]$ and of
$(p{-}1,q,\rHalbe,k{+}p)$-independence sentences,
for $r\deff\myrho(p,q)=\myrho(\rankletter{-}k,\rankletter)$.
By Theorem~\ref{thm:rpGaifman}, $\phi'$ is equivalent to $\phi$ and $\free(\phi')=\free(\phi)$.

According to Definition~\ref{def:rank}, each of the formulas in
$\LocFOplus[p,q]$ has rank $\leq q=\rankletter$, and it is $r$-local by
Lemma~\ref{lemma:locality}. Furthermore, if $k\geq 1$ then
$r=\myrho(\rankletter{-}k,\rankletter)\leq
\frac{\myrho(\rankletter,\rankletter)}{2}$ (this follows because
$\myrho$ satisfies condition \ref{myrho:constraint:inc} formulated at
the beginning of Section~\ref{sec:FOpluspq}).

Furthermore, by definition (cf., Section~\ref{subsec:IndependenceSentences}), each $(p{-}1,q,\rHalbe,k{+}p)$-independence
sentence is a basic local sentence of width $\leq k{+}p =\rankletter$,
of inner rank $\leq q=\rankletter$, and of radius $\leq \rHalbe = \frac{\myrho(\rankletter-k,\rankletter)}{2}\leq\frac{\myrho(\rankletter,\rankletter)}{2}$.

Thus, $\phi'$ is an $\FOplus$ formula in Gaifman normal form. Moreover, $\phi'$ has
inner rank $\leq \rankletter$, width $\leq \rankletter$, outer rank
$\leq \rankletter$, and radius $\leq \min\set{\myrho(\rankletter{-}k,\rankletter),\frac{\myrho(\rankletter,\rankletter)}{2}}$.
This completes the proof of Theorem~\ref{thm:rankGaifman}.
\end{proof}

\section{Application: Deciding First-Order Properties of Nowhere Dense Structures}
\label{sec:app}

As an application of our new locality theorem, we give a simplified proof of the following theorem.

\begin{theorem}[Grohe, Kreutzer, Siebertz~\cite{GroheKS17}]\label{theo:GKS}
  Let $\CC$ be a nowhere dense class of $\sigma$-structures. Then for every $\FO$ sentence $\phi$ and every $\epsilon>0$ there is an algorithm that, given $\CA\in\CC$, decides if $\CA\models\phi$ in time $O(|\CA|^{1+\epsilon})$.
\end{theorem}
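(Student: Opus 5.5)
We follow the strategy of \cite{GroheKS17}, with Theorem~\ref{thm:rankGaifman} replacing their rank-preserving locality theorem. Let $\CC$ be nowhere dense and fix $\epsilon>0$. We design a recursive procedure $\textsc{Eval}$ that, for a formula $\phi(x)$ of rank $\rankletter$ with at most one free variable and a structure $\CA$ from a suitable closure of $\CC$, computes the set $\phi(\CA)$ (or its truth value for a sentence). To handle recursion we allow $\CA$ to be an expansion of a structure in $\CC$ by boundedly many unary and $0$-ary relations; nowhere denseness is preserved under such expansions. The 0-ary symbols are why the paper admits 0-ary relations.

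\textbf{Step 1 (normal form).} By Theorem~\ref{thm:rankGaifman}, replace $\phi$ by an equivalent Boolean combination of $r$-local formulas $\lambda(x)$ of rank $\le\rankletter$ and basic local sentences of width $\le\rankletter$, radius $\le r$ and inner rank $\le\rankletter$, where $r$ depends only on $\rankletter$. Because the rank is not increased, the parameters stay bounded throughout the recursion.

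\textbf{Step 2 (local formulas via sparse covers).} To evaluate an $r$-local $\lambda(x)$ for all $x$, compute a neighbourhood cover of radius $r$ and degree $n^{\epsilon'}$ for $\CA$. This is possible in time $O(n^{1+\epsilon'})$ on nowhere dense classes by the cover theorem cited in \cite{GroheKS17}. For each cluster $X$ and each $a$ whose $r$-neighbourhood lies in $X$, we have $\CA\models\lambda[a]$ iff $\CA[X]\models\lambda[a]$. Cluster structures have bounded radius, so we use the splitter game: in each cluster, Splitter's move is a vertex $v$. We remove $v$, encode its adjacencies by new unary predicates (and 0-ary ones for $v$'s own atoms), and rewrite $\lambda$ into a formula over the smaller structure. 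The key point is that this rewriting does not increase the rank. In the inductive definition of $\FOplus[p,q]$, a quantifier $\exists y$ splits into the case $y=v$, which needs no quantifier, and the case $y\neq v$; distance atoms $\dist(x,y)\le d$ become disjunctions over paths avoiding or passing through $v$, expressed using unary distance-to-$v$ predicates. We must check carefully that this translation preserves $\rk$. By nowhere denseness, Splitter wins in a number of rounds bounded by a function of $\rankletter$ and $r$, which bounds the recursion depth. The recursion terminates when the cluster becomes trivial or, after rank-decreasing steps, when we reach quantifier-free formulas.

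\textbf{Step 3 (basic local sentences).} For $\exists x_1\ldots x_m(\bigwedge\dist>2r\wedge\bigwedge\lambda(x_i))$, first compute $\lambda(\CA)$ recursively as in Step~2. Then decide whether there are $m\le\rankletter$ elements of this set at pairwise distance $>2r$. This is the standard scattered-set step: greedily pick points; if the greedy procedure gets stuck, the candidates lie within bounded distance of fewer than $m$ points, and a bounded local search decides the question. As in \cite{GroheKS17}, this runs in time $O(n^{1+\epsilon'})$.

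\textbf{Runtime.} Each level costs $n^{1+O(\epsilon')}$ in total, since cluster sizes sum to $n^{1+\epsilon'}$. The recursion depth is bounded in terms of $\phi$, so choosing $\epsilon'$ small gives $O(n^{1+\epsilon})$.

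\textbf{Main obstacle.} The crux is the bounded recursion depth, that is, showing that the vertex-removal rewriting in Step~2 keeps formulas within $\FOplus[p,q]$ for fixed $q$. The normal-form step is exactly what Theorem~\ref{thm:rankGaifman} guarantees, so this rewriting is where the careful work lies. I would first prove a lemma stating that for every $\phi\in\FOplus[p,q]$ the rewritten formula over the expanded vocabulary lies again in $\FOplus[p,q]$, with $\myrho$ unchanged.
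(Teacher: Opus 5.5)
Your proposal follows essentially the paper's proof. The paper's ingredients match yours: the rank-preserving normal form; low-degree $r$-neighbourhood covers, whose clusters have radius $2r$, so the splitter game is played with radius $2r$; a recursion whose depth is bounded by the splitter game precisely because the rank, and hence the locality radius, never grows; the rank-preserving removal lemma you plan to prove (Lemma~\ref{lem:fo-removal}), after which the normal form is re-applied, since restricted quantifiers through the removed vertex become unrestricted; and the distance-independent-set step imported from \cite{GroheKS17}. Small corrections: for relations of arity $\ge 3$ the removed vertex's tuples must be encoded by symbols $\tilde R_J$ of arity $\ar(R)-|J|$, not only by unary ones (harmless, since $G_{\CA\lbag_r c}=G_\CA[A\setminus\{c\}]$); the ``bounded local search'' in Step~3 is not bounded on its own, because the $2r$-balls around the greedy points can be huge, so Lemma~\ref{lem:dist-is} is genuinely needed; and no step of the recursion decreases rank---termination comes solely from the splitter game.
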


Grohe et al.~\cite{GroheKS17} also prove a uniform version of the theorem for effectively nowhere dense classes $\CC$. To keep things simple, here we focus on the non-uniform version (i.e., Theorem~\ref{theo:GKS}). But our simplified proof can easily be adapted to the uniform version as well.

While we review the main definitions necessary to understand the simplification of the main algorithm that we achieve, this section is not self-contained; we rely on several substantial results from \cite{GroheKS17} that we state here without proof.

\emph{In this section, we only consider finite graphs and finite
 $\sigma$-structures for a finite relational vocabulary $\sigma$.}
Without loss of generality we assume that $\sigma$ contains a relation
symbol $R_0$
of arity 0. We let $\logic{true}\deff R_0()\vee\neg R_0()$ and
$\logic{false}\deff\neg\,\logic{true}$. Clearly, $\logic{true}$ is valid,
$\logic{false}$ is unsatisfiable, and both are sentences that belong
to $\LocFOplus[p,q]$ for all $p,q\in\NN$ with $p\leq q$.

Remember that $|\CA|\coloneqq|A|$ denotes the order of a structure
$\CA$. The \emph{size} of a $\sigma$-structure $\CA$ is
\[
\|\CA\| \ \ \coloneqq \ \ |A|+\sum_{R\in\sigma}|R^{\CA}|.
\]
Accordingly, for a graph $G$ we let $|G|\deff
|V(G)|$ and $\|G\|\deff |V(G)|+|E(G)|$.

Note that the Gaifman graph $G_\CA$ of $\CA$ can be computed from $\CA$ in time $O(\|\CA\|)$ (assuming the vocabulary $\sigma$ is fixed). This will allow us to always compute the Gaifman graphs of the structures we are dealing with within the permitted time constraints.

\subsection{The Splitter Game and Nowhere Dense Classes}

To define nowhere dense graph classes, we directly use a game characterisation of such classes introduced in \cite[Section~4]{GroheKS17}. 
Let $G$ be a graph and $\ell,r\in\Nat$. 
The~$(\ell,r)$-\emph{splitter game} on~$G$ is played by two players called \emph{Connector} and
\emph{Splitter} as follows.
We let~$G^{(0)}\coloneqq G$. For $i\ge1$, in round $i$ of the game, Connector selects a vertex~$v^{(i)}\in V(G^{(i-1)})$. Then Splitter
selects a set $W^{(i)}\subseteq N_r^{G^{(i-1)}}(v^{(i)})$ such that $\sum_{j=1}^{i}|W^{(j)}|\le\ell$. If $N_r^{G^{(i-1)}}(v^{(i)})\setminus
W^{(i)}=\emptyset$, then Splitter wins. Otherwise, the
play continues with 
\[
  G^{(i)} \quad \deff\quad G^{(i-1)}\big[N_r^{G^{(i-1)}}(v^{(i)})\setminus
W^{(i)}\big].
\]
If the play never ends, which is possible because Splitter may always choose $W^{(i)}=\emptyset$, then Connector wins.

It is convenient to allow Splitter to choose $W^{(i)}=\emptyset$, even though this does not help her make progress: if $W^{(i)}=\emptyset$ then Connector can always choose $v^{(i+1)}=v^{(i)}$, and the position remains unchanged. Note that the sets $W^{(i)}$ selected by Splitter are mutually disjoint. Hence in every play there are at most $\ell$ rounds $i$ where Splitter chooses a non-empty set $W^{(i)}$.

A class $\CC$ of graphs is \emph{nowhere dense} if for every $r\in\Nat$ there is an $\ell(r)\in\Nat$ such that for all graphs $G\in\CC$, Splitter has a winning strategy for the $(\ell(r),r)$-splitter game on~$G$. Furthermore, a class $\CC$ of $\sigma$-structures is \emph{nowhere dense} if the class $\{G_{\CA}\mid\CA\in\CC\}$ of the Gaifman graphs of all structures in $\CC$ is nowhere dense. 

It is a well known fact that if $\CC$ is a nowhere dense class of
graphs then for every $\epsilon>0$ and all $G\in\CC$ it holds that
$\|G\|=O(|G|^{1+\epsilon})$ (cf.\ \cite{GroheKS17}). Hence if $\CC$ is
a nowhere dense class of $\sigma$-structures then
$\|\CA\|=O(|A|^{1+\epsilon})$ for all $\CA\in\CC$. Furthermore, the
closure of a nowhere dense class of graphs under taking subgraphs is
nowhere dense as well, because if Splitter has a winning strategy for
the $(\ell,r)$-splitter game on a graph $G$, then this winning
strategy induces a winning strategy for the game on any subgraph of $G$.

We actually need to be able to compute winning strategies for Splitter efficiently. And not only that, we must be able to do this while moving to subgraphs of the original graph. 
We will describe a generic strategy for Splitter that can be computed efficiently using just shortest path computations.
A \emph{partial play} of the $(\ell,r)$-splitter game on a graph $G$ is a sequence 
\[
\pi \ \ \coloneqq \ \ (v^{(1)},W^{(1)},v^{(2)},W^{(2)},\ldots,v^{(k)},W^{(k)}),
\]
where $v^{(i)}$, $W^{(i)}$ are the choices made by Connector and Splitter in round $i$ of the play. The \emph{length} of $\pi$ is $k$.
We let $G^{(0)}\coloneqq G$ and $G^{(i)}\coloneqq G^{(i-1)}\big[N_r^{G^{(i-1)}}(v^{(i)})\setminus
W^{(i)}\big]$ for $i\in[k]$. 
Observe that by the rules of the game, for all $i\in[k]$ we have
$v^{(i)}\in V(G^{(i-1)})$, which implies $V(G^{(j)})\neq\emptyset$ for $j\in[0,k{-}1]$, and $W^{(i)}\subseteq  N_r^{G^{(i-1)}}(v^{(i)})$. 

The following lemma is extracted from the proof of Theorem~4.2 and
Remark~4.3 of \cite{GroheKS17}.
For the reader's convenience, we give a proof of the lemma.

\begin{lemma}[\cite{GroheKS17}]\label{lem:splitter-strategy}
  Let $\CC$ be a nowhere dense class of graphs. Then for every
  $r\in\Nat$ there is a $t=t(\CC,r)\in\PNat$ such that the following
  holds. Let $G\in\CC$, and let
  $\pi=(v^{(1)},W^{(1)},v^{(2)},W^{(2)},\ldots,v^{(k)},W^{(k)})$,
  where
  $k\in\NN$ and
  $v^{(i)}\in V(G)$ and $W^{(i)}\subseteq V(G)$ for all
  $i\in[k]$. Let  
  \begin{equation}\label{eq:splitter-strategy}
    G^{(0)} \ \coloneqq \ G \ \supseteq \ H^{(1)} \ \supseteq \
    G^{(1)} \ \supseteq \ \cdots \ \supseteq \ H^{(k)} \ \supseteq \ G^{(k)}
  \end{equation}
   such that the following conditions are satisfied.
  \begin{eroman}
  \item\label{item:splitter-strategy-i} For all $i\in[k]$ it holds that $v^{(i)}\in V(H^{(i)})$.
  \item\label{item:splitter-strategy-ii} $W^{(1)}=\emptyset$
    and for all $i\ge 2$ it holds that $W^{(i)}= N_r^{H^{(i)}}(v^{(i)})\cap\bigcup_{j=1}^{i-1}V(P^{(i)}_j)$,
    where $P^{(i)}_j$ is a path of length at most $r$ from $v^{(j)}$ to $v^{(i)}$ in the graph $H^{(j)}$.
  \item\label{item:splitter-strategy-iii} For all $i\ge 1$ it holds
    that
    $G^{(i)}= H^{(i)}\big[N_r^{H^{(i)}}(v^{(i)})\setminus W^{(i)}\big]$.
  \end{eroman}
  Then $k\le t$.
\end{lemma}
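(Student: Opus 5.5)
The plan is to show that a sufficiently long sequence as in the lemma yields a play of the $(\ell,r')$-splitter game on some fixed graph in which Connector wins. The bound $t$ then comes from the definition of nowhere denseness. Here $r'$ is a larger radius depending only on $r$; I would try $r'\coloneqq 2r$ first. Let $\ell\coloneqq\ell(r')$ be such that Splitter wins the $(\ell,r')$-splitter game on every subgraph of a graph in $\CC$. Subgraph closure was noted above to preserve nowhere denseness, so we may assume $\CC$ is closed under subgraphs.

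The central observation concerns the vertex $v^{(i)}$. By \ref{item:splitter-strategy-i} it survives in every $H^{(j)}$ with $j\le i$. Since $G^{(j)}\subseteq N_r^{H^{(j)}}(v^{(j)})$, for all $j<i$ there is a path of length at most $r$ from $v^{(j)}$ to $v^{(i)}$ in $H^{(j)}$. This is the path $P^{(i)}_j$ of \ref{item:splitter-strategy-ii}. The set $W^{(i)}$ removes from the ball around $v^{(i)}$ all vertices on these connecting paths.

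The intended argument is to compare with Connector's moves in a genuine splitter game. We view the positions $v^{(1)},v^{(2)},\dots$ as Connector's choices. In the genuine game Splitter would respond arbitrarily, whereas in our sequence she deletes the union of the path vertices. The key step is a counting argument. Each new $v^{(i)}$ is connected by $P^{(i)}_j$ to every earlier $v^{(j)}$, but the path vertices used at earlier stages have already been deleted. From this we would like to conclude that the vertices $v^{(1)},\dots,v^{(k)}$ together with the paths form a structure of bounded depth: roughly, a $1$-subdivided clique or a large $r$-shallow topological minor. The $r$-shallow clique minors of a nowhere dense class have bounded size, so $k$ is bounded.

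Concretely, I would show by induction on $i$ that the paths $P^{(i)}_j$ and $P^{(i')}_{j'}$ meet only at shared endpoints. Any vertex of $P^{(i)}_j$ other than $v^{(j)}$ lies inside $H^{(j)}$, within distance $r$ of $v^{(i)}$. If it also lay on an earlier path, it would already have been placed in a $W$ and removed, contradicting membership in $H^{(i')}$. This disjointness yields a $K_k$ as an $r$-subdivision, that is, a topological minor at depth $\le r/2$ or so. Nowhere dense classes exclude $K_{N(r)}$ as an $r$-shallow topological minor, so we may take $t\coloneqq N(r)$.

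The main obstacle is establishing this internal disjointness precisely. The $W^{(i)}$ only intersect the paths with $N_r^{H^{(i)}}(v^{(i)})$, while a path vertex might leave the ball relative to later subgraphs. One must check that every vertex of a later path lies in that ball at the relevant stage. Since each path has length at most $r$, all of its vertices lie within distance $r$ of its endpoint, and I expect this to settle the issue. The alternative is to fall back on the splitter-game argument of \cite{GroheKS17}, Theorem~4.2, using radius $2r$.
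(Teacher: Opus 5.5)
\textbf{There is a genuine gap.} Your central claim is false: you assert that all paths $P^{(i)}_j$ are internally vertex-disjoint, so that $v^{(1)},\ldots,v^{(k)}$ are branch vertices of a short subdivision of $K_k$. Here is a counterexample. Take $r=2$ and let $G$ be the tree with edges $ab,bc,bd$.
\begin{itemize}
\item Put $H^{(1)}=G$, $v^{(1)}=a$, $W^{(1)}=\emptyset$, so $G^{(1)}=G$.
\item Then put $H^{(2)}=G$, $v^{(2)}=c$, $P^{(2)}_1=(a,b,c)$, so $W^{(2)}=\{a,b,c\}$ and $G^{(2)}=G[\{d\}]$.
\item Then put $H^{(3)}=G^{(2)}$, $v^{(3)}=d$, $P^{(3)}_1=(a,b,d)$ in $H^{(1)}$ and $P^{(3)}_2=(c,b,d)$ in $H^{(2)}$.
\end{itemize}
All conditions of the lemma hold with $k=3$, yet all three paths pass through $b$. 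In fact a tree contains no subdivision of $K_3$ at all.

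The error in your justification is that $P^{(i')}_{j'}$ is only required to lie in $H^{(j')}$, not in $H^{(i')}$. A vertex deleted at a stage $i\ge j'$ is still present in $H^{(j')}$ and may be reused. What the deletions do give is $V(P^{(i)}_j)\cap V(G^{(i)})=\emptyset$. Path vertices outside $H^{(i)}$ or outside the $r$-ball around $v^{(i)}$ are not in $G^{(i)}$ anyway, so the obstacle you singled out is harmless. Hence $P^{(i)}_j$ is disjoint from every path $P^{(i')}_{j'}$ with $j'>i$. Only paths with non-interleaving index intervals are guaranteed disjoint, which gives a family of disjoint short paths and never a clique subdivision. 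Your sketched splitter-game simulation at radius $2r$, and the fallback to \cite{GroheKS17}, do not supply the missing argument either.

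The missing idea is a tool that needs only such a family of disjoint paths. The paper uses uniform quasi-wideness, with functions $s$ and $N$, and sets $t\coloneqq N(r,2s(r)+2)$. If $k>t$, there is a set $S$ with $|S|\le s(r)$ and a set $\{v^{(i_1)},\ldots,v^{(i_{2s(r)+2})}\}$ with $i_1<\cdots<i_{2s(r)+2}$ whose elements are pairwise at distance $>r$ in $G\setminus S$. Pair consecutive indices and take the paths $P^{(i_{2j})}_{i_{2j-1}}$ for $j\in[s(r)+1]$. Their index intervals do not interleave, so they are pairwise disjoint. Each has length at most $r$ and joins two vertices at distance $>r$ in $G\setminus S$, so each meets $S$. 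This forces $|S|\ge s(r)+1$, a contradiction.
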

\begin{proof}
We use the following characterisation,  due to \cite{NesetrilO11}, of
nowhere dense graph classes in terms of \emph{uniform quasiwideness}
(cf.~\cite{Dawar10}): \emph{A class $\CC$ of graphs is nowhere dense
  if and only if there are functions $s\colon\Nat\to\Nat$ and
  $N\colon\Nat\times\Nat\to\Nat$ such that for all $r,m\in\Nat$, all
  $G\in\CC$, and all $X\subseteq V(G)$ of size $|X|>N(r,m)$
  there is a
  set  $S\subseteq V(G)$ of size $|S|\le s(r)$ and a set $Y\subseteq
  X$ of size $|Y|\ge m$ with $\dist^{G\setminus S}(y,y')>r$ for all
  distinct $y,y'\in Y$.} 

We choose the functions $s,N$ for our nowhere dense class $\CC$
according to this characterisation and let $t\coloneqq
N(r,2s(r){+}2)$. Suppose for contradiction that
$k> t$.

Let $X\coloneqq\{v^{(1)},\ldots,v^{(k)}\}$. Choose $S\subseteq V(G)$ of size $|S|\le s(r)$ and a set $Y\subseteq X$ of size $|Y|=2s(r){+}2$ with $\dist^{G\setminus S}(y,y')>r$ for all distinct $y,y'\in Y$. Say, 
\[
Y\ = \ \{\,v^{(i_1)},\ldots,v^{(i_{2s(r)+2})}\,\}\quad\text{with}\quad1\le i_1<i_2<\cdots<i_{2r(s)+2}\le k.
\]
For all $j\in[s(r){+}1]$, let $P_j\coloneqq P^{(i_{2j})}_{i_{2j-1}}$
be the path of length at most $r$ from $v^{(i_{2j-1})}$ to
$v^{(i_{2j})}$ in $H^{(i_{2j-1})}$ from condition
\ref{item:splitter-strategy-ii}. Then $V(P_j)\cap
V(G^{(i_{2j})})=\emptyset$ and therefore $V(P_j)\cap
V(H^{(i_{2j+1})})=\emptyset$ (if $j<s(r){+}1$). This means that the
paths $P_1,\ldots,P_{2s(r)+1}$ are mutually vertex disjoint. Since for
every $j$ the path $P_j$ has length at most $r$ and $\dist^{G\setminus
  S}(v^{(i_{2j-1})},v^{(i_{2j})})>r$, it holds that $V(P_j)\cap
S\neq\emptyset$. Thus $|S|\ge s(r){+}1$, which is a contradiction.
This completes the proof of Lemma~\ref{lem:splitter-strategy}.
\end{proof}

Observe that \eqref{eq:splitter-strategy} and
\ref{item:splitter-strategy-iii} imply that for $i>j\ge 1$
we have
$H^{(i)}\subseteq H^{(j)}[N_r^{H^{(j)}}(v^{(j)})]$,
and as $v^{(i)}\in V(H^{(i)})$ by \ref{item:splitter-strategy-i}, there is a path
$P^{(i)}_j$ of length
at most
$r$ from $v^{(j)}$ to $v^{(i)}$ in
$H^{(j)}$. Thus the desired paths in \ref{item:splitter-strategy-ii}
always exist.  

Furthermore, note that by \ref{item:splitter-strategy-ii} it holds that
\begin{equation}\label{eq:splitter2}
\sum_{i=1}^{k}|W^{(i)}| \ \ \le \ \  \sum_{i=2}^k (1{+}r)(i{-}1) \ \ \le \ \
(r{+}1)\sum_{i=1}^{k-1}i \ \ \le \ \ (r{+}1)k^2.
\end{equation}
and let
\begin{equation}
\label{eq:ell(C,r)}
\ell \ \ \coloneqq \ \ (r{+}1)t^2.
\end{equation}
Note that if $H^{(i)}=G^{(i-1)}$ for all $i\in[k]$, then  $\pi$ is a
partial play of the $(\ell,r)$-splitter game. Moreover, condition
\ref{item:splitter-strategy-ii} describes a winning strategy for
Splitter, because if she chooses her sets $W^{(i)}$ according to
condition \ref{item:splitter-strategy-ii} then she wins no matter
which vertices Connector chooses.  
But the lemma is stronger; it even gives Splitter a winning strategy for a generalised game where in each round Connector is allowed to restrict the game to a subgraph $H^{(i)}$ of $G^{(i-1)}$.

\begin{remark}\label{rem:compute-splitter-strategy}
  Following \cite[Remark~4.7]{GroheKS17}, we can compute this winning strategy as follows: for each $i\in[k]$, we compute a breadth-first search tree $T^{(i)}$ for the graph $H^{(i)}$ rooted in $v^{(i)}$ in time $O(\|H^{(i)}\|)$.
  Then for
 each $j\in[1,i{-}1]$
  we use the tree $T^{(j)}$ to compute the path
$P^{(i)}_j$ in time $O(|P^{(i)}_j|)$, which allows us to compute the
set $W^{(i)}$ in time $O(ri)$.
\end{remark}

\subsection{Neighbourhood Covers}

Let $G$ be a graph and $r\in\Nat$. An \emph{$r$-neighbourhood cover} of $G$ is a mapping $\CX\colon V(G)\to 2^{V(G)}$ such that for all $v\in V(G)$, the set $\CX(v)\subseteq V(G)$ is connected in $G$ and $N_r^G(v)\subseteq\CX(v)$. The sets $\CX(v)$ are called the \emph{clusters} of $\CX$. For $X\subseteq V(G)$, we write $X\in\CX$ to denote that $X$ is a cluster of $\CX$, i.e., $X=\CX(v)$ for some $v\in V(G)$.

The \emph{radius} of a non-empty set $X\subseteq V(G)$ is the least $s\in\Nat$ such that there is a $c\in X$, called a \emph{center} of $X$, with $X\subseteq N_s^G(c)$. The \emph{maximum radius} of an $r$-neighbourhood cover $\CX$ is the maximum of the radii of the clusters $X\in\CX$.

The \emph{degree} of a vertex
$v\in V(G)$ in a neighbourhood cover $\CX$ 
is the number of clusters $X\in\CX$ such that $v\in
  X$. The
  \emph{maximum degree} of $\CX$ is the maximum of the degrees of all $v\in V(G)$. 
  Note that
\begin{equation}\label{eq:Delta}
 \sum_{X\in \CX}|X| 
 \ \ \le\ \
 |V(G)|\cdot\Delta,
\end{equation}
where $\Delta$ denotes the maximum degree of $\CX$.

All these notions naturally generalise
from graphs to
$\sigma$-structures by considering the Gaifman graph $G_\CA$ of a
$\sigma$-structure $\CA$.

\begin{lemma}[\cite{GroheKS17}]\label{lem:nb-covers}
  Let~$\CC$ be a nowhere dense class of graphs. Then for
  every $\epsilon>0$ and $r\in\Nat$ there is an algorithm that, given
  a graph $G\in\CC$, computes
  in time~$O(n^{1+\epsilon})$, for $n=|V(G)|$, 
  an $r$-neighbourhood cover $\CX$ of $G$ 
  of radius at most $2r$ and maximum degree
  $O(n^\epsilon)$ and a centre function $c\colon\CX\to V(G)$ such that $X\subseteq N_{2r}^G(c(X))$ for each $X\in\CX$.
\end{lemma}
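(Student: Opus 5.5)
The plan follows the sparse-cover construction of \cite{GroheKS17}, which combines a ball-carving procedure with the weak colouring numbers of nowhere dense classes. I cite this lemma from \cite{GroheKS17}, and the proof only sketches the construction.

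The key input is a bound on weak colouring numbers. For a nowhere dense class $\CC$, every $s\in\Nat$ and every $\epsilon>0$, each $G\in\CC$ with $n$ vertices admits a linear order $L$ of $V(G)$ whose weak $2r$-colouring number is at most $n^{\epsilon}$. Such an order can be computed in time $O(n^{1+\epsilon})$. Recall that $u$ is \emph{weakly $2r$-reachable} from $v$ if $u\le_L v$ and there is a path of length at most $2r$ from $v$ to $u$ all of whose inner vertices are larger than $u$. The weak colouring number bounds, for every $v$, the number of vertices weakly $2r$-reachable from $v$. I assume this result, together with its algorithmic version, from \cite{GroheKS17}. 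This is also the step I expect to be the main obstacle. Computing such an order in almost linear time is the deep ingredient, and everything else is bookkeeping.

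Given the order $L$, I define clusters as follows. For each vertex $v$, let $m(v)$ be the $L$-minimum vertex of $N_r^G(v)$. Note that $m(v)$ is weakly $r$-reachable from $v$, because a shortest path from $v$ to $m(v)$ stays inside $N_r^G(v)$, so all its inner vertices are larger than $m(v)$. For each vertex $u$ occurring as some $m(v)$, let $X_u$ be the set of all $w$ such that $u$ is weakly $2r$-reachable from $w$. Set $\CX(v)\coloneqq X_{m(v)}$ and $c(X_u)\coloneqq u$. I then verify the required properties.

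\emph{Covering.} Let $w\in N_r^G(v)$. A path of length at most $2r$ from $w$ through $v$ to $m(v)$ exists inside $N_r^G(v)$. Cut it at the first vertex of minimal order. That vertex lies in $N_r^G(v)$, so it must be $m(v)$ itself. Hence $m(v)$ is weakly $2r$-reachable from $w$, and therefore $N_r^G(v)\subseteq\CX(v)$.

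\emph{Radius and connectivity.} Every $w\in X_u$ is joined to $u$ by a path of length at most $2r$. Every vertex on that path also weakly $2r$-reaches $u$ (take the subpath, with the same minimality condition), so the whole path lies in $X_u$. Hence $X_u$ is connected and satisfies $X_u\subseteq N_{2r}^G(u)$, as required.

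\emph{Degree.} A vertex $w$ lies in $X_u$ only if $u$ is weakly $2r$-reachable from $w$. By the choice of $L$ there are at most $n^{\epsilon}$ such $u$.

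\emph{Running time.} For each $u$, run a BFS to depth $2r$ from $u$ that only enters vertices larger than $u$. This computes $X_u$, or more precisely the set of $w$ that reach $u$ in this way; these sets are exactly the $X_u$. The total work is $\sum_u\|G[X_u]\|$. Using \eqref{eq:Delta} and $\|G\|=O(n^{1+\epsilon})$, this is bounded by $O(n^{1+2\epsilon})$ after rescaling $\epsilon$. The maps $m(v)$ are computed in the same pass, since $m(v)$ is the $L$-least $u$ with $v\in X_u$ and $\dist(u,v)\le r$.
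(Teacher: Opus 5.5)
The paper does not prove this lemma itself; it imports it from \cite{GroheKS17}, and your argument is the construction used there. You order $V(G)$ so that the weak $2r$-colouring number is at most $n^{\epsilon}$, take as clusters the sets $X_u$ of vertices from which $u$ is weakly $2r$-reachable, and cover $N_r^G(v)$ by the cluster of the $L$-minimum of $N_r^G(v)$. Your verification of covering, connectivity, radius, degree and running time is correct, apart from cosmetic points: the weak-colouring bound $n^{\epsilon}$ holds only for $n$ above a threshold depending on $r,\epsilon$ (harmless for an $O(n^{\epsilon})$ degree bound); the BFS cost is really $\sum_w(1+\deg(w))\cdot|\{u : w\in X_u\}|=O(n^{\epsilon}\|G\|)$ rather than $\sum_u\|G[X_u]\|$; and $c$ is well defined because $u$ is the $L$-minimum of $X_u$, so $u\mapsto X_u$ is injective.
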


\subsection{Distance Independent Sets}

Let $d\in\PNat$. Furthermore, let $\CA$ be a $\sigma$-structure. A \emph{$d$-independent set} in $\CA$ is a set $Y\subseteq A$ such that $\dist^\CA(y,y')>d$ for all distinct $y,y'\in Y$.

\begin{lemma}[\cite{GroheKS17}]\label{lem:dist-is}
  Let $\CC$ be a nowhere dense class of $\sigma$-structures.
Then for all $d,m\in\Nat$ and $\epsilon>0$ there is an algorithm that,
given a structure $\CA\in\CC$ and a set $X\subseteq A$, decides
in time $O(n^{1+\epsilon})$, for $n=|\CA|$,
if $\CA$ has a $d$-independent set $Y\subseteq X$ of order $|Y|\ge m$.
\end{lemma}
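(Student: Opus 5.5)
The statement is Lemma~\ref{lem:dist-is}: decide in almost linear time whether $X$ contains a $d$-independent set of size at least $m$. The plan is a greedy-plus-splitting approach.

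\textbf{Greedy phase.} Greedily pick elements of $X$ one by one, each time discarding $N_d^\CA$ of the chosen element. Maintain this with a breadth-first search of depth $d$ from every chosen element, so each step costs $O(\|\CA\|)=O(n^{1+\epsilon})$. The loop stops after at most $m$ steps, or when $X$ is exhausted, so the total time is $O(m\,n^{1+\epsilon})$.
\begin{itemize}
\item If the greedy procedure reaches $m$ elements, we answer yes.
\item Otherwise we have found a set $Z$ of fewer than $m$ elements such that $X\subseteq N_d^\CA(Z)$.
\end{itemize}

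\textbf{Reduction to bounded cores.} In the second case I would use uniform quasi-wideness, exactly as in the proof of Lemma~\ref{lem:splitter-strategy}, with parameter $r=d$ and a large target size. If $X$ is large, i.e.\ $|X|>N(2d,M)$ for a suitable $M$ depending on $m$ and $s(2d)$, then after deleting at most $s$ vertices $S$ there are many far-apart elements in $G\setminus S$. This alone does not give independence in $\CA$ itself, since the deleted vertices may create short paths.

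So I would instead proceed recursively with the splitter game. The elements of $X$ lie within distance $d$ of fewer than $m$ centres. Any $d$-independent set is therefore spread over clusters $N_{2d}(z)$ for $z\in Z$, and elements near different far-apart centres interact only through these neighbourhoods.

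\textbf{Recursion via covers.} Use Lemma~\ref{lem:nb-covers} with radius $O(d)$ to localise the problem, then recurse on clusters with Splitter's deletions. Deleted vertices are remembered as colours: for each kept element we record its distance, at most $d$, to each removed vertex. The depth of the recursion is bounded by $t(\CC,O(d))$ from Lemma~\ref{lem:splitter-strategy}. The overall running time is then $n^{1+\epsilon}$ times $n^{\epsilon}$ from the cover degree, with a constant depending on $d$, $m$ and $\CC$; rescaling $\epsilon$ gives the claimed bound.

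\textbf{Main obstacle.} The hard step is combining the solutions from different clusters. Independent sets found in separate local pieces must be merged into a global one while respecting distances that pass through removed vertices. This is a bounded combinatorial search over at most $m$ chosen elements and their distance profiles to the removed sets; I expect it to be the technically delicate part.
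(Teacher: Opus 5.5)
The paper does not prove this lemma; it imports it from \cite{GroheKS17}. Judged on its own, your outline has a genuine gap, located exactly where you place the ``main obstacle''. After the greedy phase, which is fine, the outline never specifies what a recursion node computes or how the children's answers determine the parent's. Two ingredients are missing.

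\emph{First, the question is not closed under Splitter's deletions.} Once a set $W$ is deleted and distances to $W$ are kept as colours, the requirement $\dist^{\CA}(y,y')>d$ becomes two conditions: $\dist(y,y')>d$ in the graph without $W$, and $\dist(y,w)+\dist(w,y')>d$ for all $w\in W$. So the child instance is no longer of the form ``does $X$ contain a $d$-independent set of size $m$''. You need a recursion-stable strengthening, for example the multicoloured problem: given $X_1,\ldots,X_k$ with $k\le m$, are there $x_i\in X_i$ with pairwise distance $>d$? Guess, for each $i$, whether $x_i$ coincides with an element of $W$ and what its truncated distance profile to $W$ is. This turns all constraints through $W$ into constants and yields multicoloured instances again.

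Your greedy step then no longer works, because greedy choices across different sets can fail even when a solution exists. The standard replacement is to greedily compute a subset of each $X_i$ with pairwise distance $>2d$. If it reaches $k$ elements, index $i$ can be dropped: each of the at most $k-1$ other chosen elements lies within distance $d$ of at most one of them. Otherwise $X_i\subseteq N_{2d}(Z_i)$ with $|Z_i|<k$.

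\emph{Second, neighbourhood covers are the wrong localisation tool for this global question.} A solution is not contained in a single cluster, and answers computed on heavily overlapping clusters cannot be merged; this is why you end up facing a merging problem. Instead, group the fewer than $m^2$ centres into the connected components of the graph joining centres at distance at most $6d$. This has several consequences:
\begin{itemize}
\item The regions $N_{3d}(Z_C)$ of different groups $C$ are disjoint.
\item For $C\neq C'$, points of $N_{2d}(Z_C)$ and $N_{2d}(Z_{C'})$ are automatically at distance $>2d$.
\item Distances $\le d$ between points of $N_{2d}(Z_C)$ are realised inside that group's region.
\end{itemize}
Hence the instance is solvable iff some assignment of the indices to the groups makes each group's sub-instance solvable. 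That is a bounded search with nothing to merge.

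Moreover, each region induces a connected subgraph of radius $O(dm^2)$ around any of its centres. So Connector can play that centre with radius $O(dm^2)$, and Splitter answers according to Lemma~\ref{lem:splitter-strategy}, which explicitly allows passing to the subgraph $H^{(i)}$. This bounds the recursion depth by a constant. Since the regions at one level are disjoint, each level costs only a constant times $\|\CA\|=O(n^{1+\epsilon})$, and no covers are needed.

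Without these two ingredients, a recursion-stable generalisation of the problem and a decomposition into non-interacting bounded-radius pieces, your recursion is not yet an algorithm.
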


\subsection{The Removal Lemma}
The main algorithm proceeds by following a strategy for Splitter in
the splitter game. This means that we repeatedly remove elements of a
structure. We have to translate formulas from the original structure
to the structure obtained by removing an element, and this is the
content of the \emph{removal lemma}. The lemma is closely related to
\cite[Lemma~8.2]{GroheKS17} and
\cite[Lemma~7.5]{GroheS18}
but needs to be adapted to our new rank measure.

Recall that by $\vec z_I$ we denote the projection of a tuple $\vec z=(z_1,\ldots,z_k)$
to the coordinates in $I\subseteq[k]$. We extend the notation by
letting $\vec z_{\setminus I}\deff\vec z_{[k]\setminus I}$.

Let $r\in\Nat$.
For every relation symbol $R\in\sigma$  
we let $\tilde{R}_\emptyset\deff R$, 
and for $s\deff\ar(R)$ and for every
set $J\subseteq[s]$ with $J\neq \emptyset$ we introduce
a fresh $(s{-}|J|)$-ary relation symbol $\tilde R_J$.
We let $\tilde\sigma$ be the union of $\sigma$ and the set of all these
new relation symbols,\footnote{In particular, $R_{[s]}$ is a 0-ary relation
  symbol in $\tilde\sigma$.}
and we let $\tilde\sigma_r$ be the extension of
$\tilde\sigma$ by fresh unary relation symbols $S_i$ for all $i\in[r]$.
For every $\sigma$-structure
$\CA$ of order $|A|\ge 2$ and every $c\in A$, we let $\CA\lbag c$ be the
$\tilde\sigma$-structure with universe $A\setminus\{c\}$ and relations
\[
\tilde R_J^{\CA\lbag c}\ \ \coloneqq\ \ 
\big\{\,
  \vec{a}_{\setminus J} \ : \
\vec{a}\in R^{\CA}\text{ and }J=\{j\in[s]\mid a_j=c\}
\,\big\}
\]
for every
$R\in\sigma$ and every
$J\subseteq[\ar(R)]$.
Furthermore, we let $\CA\lbag_r c$ be the
$\tilde\sigma_r$-expansion of $\CA\lbag c$ in which each $S_i$ is
interpreted by the set of all $b\in A\setminus\{c\}$ such that
$\dist^{\CA}(c,b)\le i$.
 Note that (for fixed $\sigma$ and $r$), we can compute
$\CA\lbag_r c$ from $\CA$ and $c$ in time $O(\|\CA\|)$. Also note that
the removal operation ``respects'' the Gaifman graphs in the sense
that
\begin{equation}\label{eq:removalGaifmanGraph}
  G_{\CA\lbag_r c} \  = \ 
  G_\CA[A\setminus\set{c}].
\end{equation}

\begin{lemma}[Removal Lemma]\label{lem:fo-removal}
  Let $p,q\in\Nat$,
  and let $r\deff \delta(p,q)$.
Then for every $\phi(\vec x)\in\FOplus[p,q]$ of vocabulary
$\sigma$ and with $\vec x=(x_1,\ldots,x_k)$ for $k\leq q{-}p$,  and for every set $I\subseteq[k]$ there is a
  formula $\tilde \phi_I(\vec x_{\setminus I})\in\FOplus[p,q]$ of vocabulary $\tilde\sigma_r$ such that for all
  $\sigma$-structures $\CA$ of order $|A|\geq 2$, all $c\in A$, and all $\vec
    a=(a_1,\ldots,a_k)\in A^k$ such that 
    $I=\{i\in[k]\mid a_i=c\}$, we
    have
    \begin{equation}\label{eq:removal}
    \CA\ \models\ \phi[\vec a]
    \quad\iff\quad
    \CA\lbag_r c\ \models\ \tilde\phi_I[\vec a_{\setminus I}].
    \end{equation}
    Furthermore, there is an algorithm that computes $\tilde\phi_I$
    from $\phi(\vec x)$
    and $I$.
\end{lemma}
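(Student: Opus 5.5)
We prove the Removal Lemma by induction on $p$, following the inductive definition of $\FOplus[p,q]$ (Definition~\ref{def:FOpluspq}). The translation is defined simultaneously for all $q$, all tuples $\vec x$ with $k\le q-p$, and all $I\subseteq[k]$. It is purely syntactic, so computability will follow directly from the construction. The key point is that the translation never increases the quantifier depth and never adds free variables: eliminating $c$ can only remove free variables. It must also keep all distance bounds within the thresholds allowed at level $(p,q)$.

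\textbf{Atomic and base cases.} An atom $R(x_{i_1},\ldots,x_{i_s})$ becomes $\tilde R_J(\vec x_{\setminus J})$, where $J$ is the set of positions carrying a variable indexed in $I$. By the definition of $\CA\lbag c$ this is correct. Equalities $x_i=x_j$ become $\logic{true}$ if $i,j\in I$ and $\logic{false}$ if exactly one of $i,j$ lies in $I$; otherwise they stay unchanged. Distance atoms need more care.

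\begin{itemize}
\item For $\dist(x_i,x_j)\le d$ with $i\in I$ and $j\notin I$, we use $S_d(x_j)$, with $S_0$ read as $\logic{false}$. This needs $d\le r=\myrho(p,q)$, which holds for atoms in $\FOplus[p,q]$.
\item If both $i,j\in I$, the atom becomes $\logic{true}$.
\item If $i,j\notin I$, a shortest path in $\CA$ may pass through $c$. So we translate it as
\[
\dist(x_i,x_j)\le d\ \vee\ \bigvee_{d_1+d_2=d}\big(S_{d_1}(x_i)\wedge S_{d_2}(x_j)\big),
\]
which is quantifier-free and uses only distances $\le d$. By \eqref{eq:removalGaifmanGraph}, distances in $\CA\lbag_r c$ are exactly distances in $G_\CA$ avoiding $c$, so this translation is correct.
\end{itemize}

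Boolean combinations are translated componentwise.

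\textbf{Quantifier cases.} Consider $\exists y\,\psi$ with $\psi\in\FOplus[p-1,q]$ having free variables $\vec x,y$. Its translation is
\[
\exists y\,\tilde\psi_I(\vec x_{\setminus I},y)\ \vee\ \tilde\psi_{I\cup\{y\}}(\vec x_{\setminus I}).
\]
The first disjunct covers $y\ne c$ and the second covers $y=c$. The second disjunct lies in $\FOplus[p-1,q]\subseteq\FOplus[p,q]$ once the free-variable count drops, using Remark~\ref{remark:DefOfFOpq}.

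For the restricted case $\exists y(\dist(x_i,y)\le d\wedge\psi)$ with $d\le\myrho(p,q)-\myrho(p-1,q)$, we distinguish two cases.

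\begin{itemize}
\item If $i\notin I$, the witness $y\ne c$ may lie within distance $d$ of $x_i$ only via $c$. We therefore use
\[
\exists y\big(\dist(x_i,y)\le d\wedge\tilde\psi_I\big)\ \vee\ \bigvee_{d_1+d_2=d}\Big(S_{d_1}(x_i)\wedge\exists y\big(S_{d_2}(y)\wedge\tilde\psi_I\big)\Big)\ \vee\ \big(S_d(x_i)\wedge\tilde\psi_{I\cup\{y\}}\big).
\]
The middle quantifier is an unrestricted $\exists y$, which is allowed in $\FOplus[p,q]$.
\item If $i\in I$, the formula becomes $\exists y(S_d(y)\wedge\tilde\psi_I)\vee\tilde\psi_{I\cup\{y\}}$.
\end{itemize}

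All the unary $S_j$ used satisfy $j\le d\le r$.

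\textbf{Main obstacle.} The delicate step is verifying membership of every translated piece in the correct level. In the first restricted case, the distance atom $\dist(x_i,y)\le d$ must be read correctly in the new structure: it was satisfied in $\CA$, possibly through $c$, and the disjunction handles this. We must also check that the inductive translations $\tilde\psi_I$ lie in $\FOplus[p-1,q]$ with their distance atoms bounded by $\myrho(p-1,q)$, which in turn requires $S_j$ for $j\le\myrho(p-1,q)\le r$. Condition~\ref{myrho:constraint:q} and monotonicity of $\myrho$ in $p$ ensure that all indices stay within $[r]$. Correctness of each clause then follows by a straightforward semantic check against \eqref{eq:removal} and \eqref{eq:removalGaifmanGraph}.
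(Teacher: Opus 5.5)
Your proposal is correct and is essentially the paper's proof. It uses the same induction showing that $\tilde\phi_I$ stays at the same level $\FOplus[p',q]$, with the same translations throughout. Relational atoms become $\tilde R_J$, equalities become $\logic{true}$ or $\logic{false}$ when $c$ is involved, and distance atoms split paths through $c$ via $S_{d_1}(x_i)\wedge S_{d_2}(x_j)$. Both existential cases split on whether the witness is $c$, with the extra $S_{d_1}(x_i)\wedge\exists y\,(S_{d_2}(y)\wedge\tilde\psi_I)$ disjuncts in the restricted case. Your convention of reading $S_0$ as $\logic{false}$ also handles the corner case $d=0$, which the paper's use of $S_{d'}$ leaves implicit.
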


\begin{proof}
Note that for all distance atoms $\dist(y,z)\leq d'$ that occur in formulas
in $\FOplus[p,q]$ it holds that $d'\leq r$.

We proceed by induction on the construction of $\phi$ and show that if
$\phi\in\FOplus[p',q]$ for some $p'\leq p$, then also $\tilde\phi_I\in\FOplus[p',q]$.

For the base step, we consider atomic $\FO$ formulas of vocabulary $\sigma$ and distance atoms. 
 
If $\phi(x_1,\ldots,x_k)$ is of the form $R(x_{j_1},\ldots,x_{j_s})$
with $s=\ar(R)$ and $j_1,\ldots,j_s\in[k]$, then we let 
  $\tilde\phi_I(\vec x_{\setminus I})\coloneqq \tilde R_J(\vec y)$, where 
$J=\{\nu\in[s] : j_\nu\in I\}$ and
$\vec y$ is the tuple obtained from $(x_{j_1},\ldots,x_{j_s})$ by deleting the entries $x_{j_\nu}$ for all $\nu\in J$.

If $\phi(x_1,\ldots,x_k)$ is of the form $x_{j_1}{=}x_{j_2}$ with $j_1,j_2\in[k]$, then we proceed as follows.
If $j_1,j_2\in I$ then $\tilde{\phi}_I(\vec x_{\setminus I})\coloneqq
\logic{true}$
 (note that in this case $\vec x_{\setminus I}$ does not contain any of the variables $x_{j_1},x_{j_2}$).
If $j_1,j_2\not\in I$ then 
$\tilde{\phi}_I(\vec x_{\setminus I})\coloneqq x_{j_1}{=}x_{j_2}$.
Finally, if $j_1\in I\iff j_2\not\in I$ then $\tilde{\phi}_I(\vec x_{\setminus I})\coloneqq \logic{false}$. 

If $\phi(x_1,\ldots,x_k)$ is of the form $\dist(x_{j_1},x_{j_2})\le d'$ with $j_1,j_2\in[k]$, then we proceed as follows.
If $j_1,j_2\in I$ we let $\tilde\phi_I(\vec x_{\setminus I})\coloneqq\logic{true}$.  If $j_1\in I, j_2\not\in I$, we let 
$\tilde\phi_I(\vec x_{\setminus I})\coloneqq S_{d'}(x_{j_2})$;
and analogously if $j_2\in I, j_1\not\in I$, we let
$\tilde\phi_I(\vec x_{\setminus I})\coloneqq S_{d'}(x_{j_1})$. 
Finally, if $j_1,j_2\not\in I$, we let
\[
  \tilde\phi_I(\vec x_{\setminus I}) \ \ \coloneqq\ \ \dist(x_{j_1},x_{j_2})\le {d'}\ \vee\bigvee_{\substack{1\le d_1,d_2\le d'-1,\\d_1+d_2=d'}}\big(S_{d_1}(x_{j_1})\wedge S_{d_2}(x_{j_2})\big).
\]

This completes the construction of $\tilde\phi_I$ for the induction
base. Note that in each case, equation \eqref{eq:removal} holds;
and if $\phi(\vec x)\in\FOplus[p',q]$ for some $p'\leq p$, then also $\tilde\phi_I\in\FOplus[p',q]$.\medskip

For the induction step, Boolean combinations are handled in the
obvious way: if $\phi=\neg\psi$ then $\tilde\phi_I\deff
\neg\,\tilde\psi_I$, and if $\phi=(\psi\vee\chi)$ then
$\tilde\phi_I\deff (\tilde{\psi}_I\vee\tilde{\chi}_I)$.
Obviously, equation \eqref{eq:removal} holds;
and if $\phi(\vec x)\in\FOplus[p',q]$ for some $p'\leq p$, then also $\tilde\phi_I\in\FOplus[p',q]$.

Now, all that remains to be done to finish the induction step is to
consider formulas $\phi$ that start with an existential quantifier. We
distinguish between two cases.

\emph{Case 1:} 
$\phi(\vec x)$ is of the form $\exists
x_{k+1}\,\big(\dist(x_i,x_{k+1})\le d'\,\wedge\,\psi(\vec x')\big)$, where
$i\in[k]$ and $\vec x'=(x_1,\ldots,x_k,x_{k+1})$.
If $i\in I$, we let
   \[ 
    \tilde\phi_I(\vec x_{\setminus I})\ \ \coloneqq \ \ 
    \tilde\psi_{I\cup\{k+1\}}(\vec x'_{\setminus (I\cup\{k+1\})})
    \ \vee \ 
   \exists x_{k+1}\big(S_{d'}(x_{k+1})\wedge\tilde\psi_{I}(\vec x'_{\setminus I})\big).
 \]
 If $i\not\in I$, we let
    \begin{align*}
    \tilde\phi_I(\vec x_{\setminus I})\ \ \coloneqq \ \ &
    \hphantom{\vee \ \ }\big(\, S_{d'}(x_i)\ \wedge \ \tilde\psi_{I\cup\{k+1\}}(\vec x'_{\setminus (I\cup\{k+1\})})\,\big)\\
   & \vee\ \exists x_{k+1}\,\big(\dist(x_i,x_{k+1})\le d'\ \wedge \ \tilde\psi_{I}(\vec x'_{\setminus I})\big)\\
    &\vee\bigvee_{\substack{1\le d_1,d_2\le
      d'-1\\d_1+d_2=d'}}\Big(S_{d_1}(x_i)\ \wedge\ \exists
      x_{k+1}\,\big(S_{d_2}(x_{k+1})\ \wedge\ \tilde\psi_{I}(\vec
      x'_{\setminus I})\big)\Big). 
  \end{align*}
In both cases it can easily be verified that equation
\eqref{eq:removal} holds; and if $\phi(\vec x)\in\FOplus[p',q]$
for some $p'\leq p$, then also $\tilde\phi_I\in
\FOplus[p',q]$.

\emph{Case 2:}
Case~1 does not apply, and $\phi(\vec x)$ is of the form $\exists
x_{k+1}\,\psi(\vec x')$, with $\vec x'=(x_1,\ldots,x_k,x_{k+1})$.
We let
\[
    \tilde\phi_I(\vec x_{\setminus I})\ \ \coloneqq \ \ 
    \tilde\psi_{I\cup\{k+1\}}(\vec x'_{\setminus(I\cup\{k+1\})})\ \vee \ \exists x_{k+1} \, \tilde\psi_{I}(\vec x'_{\setminus I}).
\]
It is easy to see that equation \eqref{eq:removal} holds;  and if
$\phi(\vec x)\in\FOplus[p',q]$ for some $p'\leq p$,
then also $\tilde \phi_I\in\FOplus[p',q]$.
This completes the proof of Lemma~\ref{lem:fo-removal}.
\end{proof}

We will only apply the Removal Lemma 
to formulas with one free
variable. We state this special case explicitly. 

\begin{corollary}
  Let $p,q\in\Nat$ such that $p<q$
  and let $r\deff \delta(p,q)$.
Then for every $\phi(x)\in\FOplus[p,q]$ of vocabulary
$\sigma$ there are a sentence $\tilde\phi_1\in\FOplus[p,q]$ and a
formula $\tilde\phi_0(x)\in\FOplus[p,q]$ of vocabulary
$\tilde\sigma_r$ such that for all 
  $\sigma$-structures $\CA$ of order $|A|\geq 2$ and all $a,c\in A$ we
  have
  \begin{equation}\label{eq:removal1}
    \CA\ \models\ \phi[a] \ \ \iff \ \
  \begin{cases}
    \ \CA\lbag_r c\ \models\ \tilde\phi_1 & \text{if }a=c,\\
    \ \CA\lbag_r c\ \models\ \tilde\phi_0[a] & \text{if }a\neq c.
  \end{cases} 
\end{equation}
    Furthermore, there is an algorithm that computes $\tilde\phi_1$ and $\tilde\phi_0(x)$
    from $\phi(x)$.
\end{corollary}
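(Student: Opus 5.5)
This corollary is the special case $k=1$ of the Removal Lemma (Lemma~\ref{lem:fo-removal}). Apply that lemma to $\phi(x)$ with the single free variable $x=x_1$. The hypothesis $k\le q-p$ holds because $\phi\in\FOplus[p,q]$ has $|\free(\phi)|\le q-p$; moreover $p<q$, so $q-p\ge 1$ and a formula in one free variable is admissible. Only two index sets $I\subseteq[1]$ occur. Set
\[
\tilde\phi_1 \coloneqq \tilde\phi_{\{1\}}, \qquad \tilde\phi_0(x)\coloneqq\tilde\phi_\emptyset(x).
\]
Since $\vec x_{\setminus\{1\}}$ is the empty tuple, $\tilde\phi_{\{1\}}$ is a sentence of vocabulary $\tilde\sigma_r$ and lies in $\FOplus[p,q]$. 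The formula $\tilde\phi_\emptyset$ has free variables among $\{x\}$ and also lies in $\FOplus[p,q]$.

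To verify \eqref{eq:removal1}, let $\CA$ have order $|A|\ge 2$ and let $a,c\in A$. The set $I=\{i\in[1] : a_i=c\}$ equals $\{1\}$ if $a=c$ and $\emptyset$ if $a\ne c$. In the first case $\vec a_{\setminus I}$ is empty, and \eqref{eq:removal} gives $\CA\models\phi[a]\iff\CA\lbag_r c\models\tilde\phi_1$. In the second case $\vec a_{\setminus I}=(a)$, and \eqref{eq:removal} gives $\CA\models\phi[a]\iff \CA\lbag_r c\models\tilde\phi_0[a]$.

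Computability follows from the algorithm in Lemma~\ref{lem:fo-removal}: run it on $\phi$ with $I=\{1\}$ and with $I=\emptyset$. The one point to check is that the Removal Lemma's $r=\delta(p,q)$ is the same $r$ used here, which it is by definition. There is no real obstacle.
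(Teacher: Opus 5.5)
Your proposal is correct and matches the paper, which presents this corollary as the $k=1$ special case of the Removal Lemma (Lemma~\ref{lem:fo-removal}), instantiated with $I=\{1\}$ to get the sentence $\tilde\phi_1$ and with $I=\emptyset$ to get $\tilde\phi_0(x)$. The one hypothesis that needs checking, $k=1\le q-p$, is exactly what $p<q$ guarantees, as you note.
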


We also need an iterated version of this corollary. Let $\sigma$ be a vocabulary and $r\in\Nat$. We let $\tilde\sigma^0_r\coloneqq\sigma$, and for $s\in\Nat$ we let $\tilde\sigma^{s+1}_r\coloneqq\tilde\tau_r$ for $\tau\coloneqq\tilde\sigma^s_r$. For a $\sigma$-structure $\CA$ of order $|A|>s$ and distinct elements $c_1,\ldots,c_s\in A$, we let
\[
\CA\lbag_r c_1\cdots c_s \ \ \coloneqq \ \ (\cdots((\CA\lbag_r c_1)\lbag_r c_2)\cdots)\lbag_r c_s.
\]
Then $\CA\lbag_r c_1\cdots c_s$ is a $\tilde\sigma^s_r$-structure with
universe $A\setminus\{c_1,\ldots,c_s\}$ and Gaifman graph
$G_{\CA}[A\setminus\{c_1,\ldots,c_s\}]$
(this follows inductively by using
equation~\eqref{eq:removalGaifmanGraph}). Furthermore, we have
\[
  \sigma \ = \ \tilde\sigma^0_r
  \ \subseteq\ \tilde\sigma^1_r
  \ \subseteq\ \tilde\sigma^2_r
  \ \subseteq\ \cdots\ .
\]

\begin{corollary}\label{cor:iterated-removal}
  Let $p,q\in\Nat$ such that $p<q$
  and let $r\deff \delta(p,q)$. Moreover, let $s\in\PNat$.
Then for every $\phi(x)\in\FOplus[p,q]$ of vocabulary
$\sigma$ there are sentences $\tilde\phi_1,\ldots,\tilde\phi_s\in\FOplus[p,q]$ and a formula $\tilde\phi_0(x)\in\FOplus[p,q]$ of vocabulary $\tilde\sigma^s_r$ such that for all
  $\sigma$-structures $\CA$ of order $|A|\geq s{+}1$, all distinct $c_1,\ldots,c_s\in A$, and all $a\in A$ we
  have
  \begin{equation}\label{eq:removal2}
    \CA\ \models\ \phi[a]\iff
  \begin{cases}
    \CA\lbag_r c_1\cdots c_s\ \models\ \tilde\phi_1 & \text{if }a=c_1,\\
    \hspace{1cm}\vdots & \hspace{5mm}\vdots\\
    \CA\lbag_r c_1\cdots c_s\ \models\ \tilde\phi_s & \text{if }a=c_s,\\
    \CA\lbag_r c_1\cdots c_s\ \models\ \tilde\phi_0[a] & \text{if }a\in A\setminus\{c_1,\ldots,c_s\}.
  \end{cases} 
\end{equation}
    Furthermore, there is an algorithm that computes $\tilde\phi_1,\ldots,\tilde\phi_s$ and $\tilde\phi_0(x)$
    from $\phi(x)$.
\end{corollary}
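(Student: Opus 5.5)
We prove Corollary~\ref{cor:iterated-removal} by induction on $s$, iterating the previous (single-removal) corollary. The key observation is that the single-removal corollary is stated for an arbitrary finite vocabulary, so it may be applied at each stage to a vocabulary $\tau=\tilde\sigma^{j}_r$ in place of $\sigma$. Since $\FOplus[p,q]$ and the function $\myrho$ do not depend on the vocabulary beyond the atomic formulas allowed, the radius $r=\myrho(p,q)$ stays the same at every stage. Consequently the vocabularies grow exactly as $\tilde\sigma^{j+1}_r=\tilde\tau_r$.

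For $s=1$ the statement is precisely the single-removal corollary. For the inductive step, assume we have sentences $\tilde\phi_1,\ldots,\tilde\phi_s\in\FOplus[p,q]$ and a formula $\tilde\phi_0(x)\in\FOplus[p,q]$ of vocabulary $\tau\coloneqq\tilde\sigma^s_r$ satisfying \eqref{eq:removal2}. We apply the single-removal corollary (with vocabulary $\tau$) to $\tilde\phi_0(x)$. This is legitimate because $\tilde\phi_0$ has one free variable and lies in $\FOplus[p,q]$ with $p<q$. We obtain a sentence $\psi_1$ and a formula $\psi_0(x)$ in $\FOplus[p,q]$ of vocabulary $\tilde\tau_r=\tilde\sigma^{s+1}_r$.

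We then define the new tuple as follows. We set $\tilde\phi'_{s+1}\coloneqq\psi_1$ and $\tilde\phi'_0\coloneqq\psi_0$. The sentences $\tilde\phi_1,\ldots,\tilde\phi_s$ are to be transported to $\CA\lbag_r c_1\cdots c_{s+1}$ by applying the Removal Lemma~\ref{lem:fo-removal} with $k=0$ and $I=\emptyset$. This requires $0\le q-p$, which holds. It yields sentences $\tilde\phi'_j\coloneqq(\tilde\phi_j)_\emptyset\in\FOplus[p,q]$ of vocabulary $\tilde\sigma^{s+1}_r$ such that $\CB\models\tilde\phi_j\iff\CB\lbag_r c\models\tilde\phi'_j$ for every $\tau$-structure $\CB$ with $|B|\ge2$.

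Correctness is then a case check. Let $\CB\coloneqq\CA\lbag_r c_1\cdots c_s$, which has order $\ge2$ because $|A|\ge s+2$. For $a=c_j$ with $j\le s$, combine the induction hypothesis with the sentence version of the Removal Lemma. For $a=c_{s+1}$ or $a\notin\{c_1,\ldots,c_{s+1}\}$, we have $a\in B$, so $\CA\models\phi[a]\iff\CB\models\tilde\phi_0[a]$, and then the single-removal corollary applied with $c\coloneqq c_{s+1}$ gives the claim. Computability follows because each step is an application of an algorithm from the earlier results.

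The only step needing care is keeping the formulas inside $\FOplus[p,q]$ with the same radius $r$ over the enlarged vocabularies. In particular, the distance atoms and the relations $S_i$ must refer to distances in the current structure. This works because, by \eqref{eq:removalGaifmanGraph}, the Gaifman graph after each removal is the induced subgraph, so the Removal Lemma applies verbatim at each stage.
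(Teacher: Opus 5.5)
Your proof is correct and is essentially the argument the paper intends. The paper states Corollary~\ref{cor:iterated-removal} without proof, as the iterated version of the single-removal corollary, and your induction on $s$ spells out exactly that iteration: you apply the single-removal corollary over the vocabulary $\tilde\sigma^s_r$ to $\tilde\phi_0$, and you transport the sentences already obtained via the Removal Lemma with $k=0$, $I=\emptyset$. (The appeal to \eqref{eq:removalGaifmanGraph} in your last paragraph is not actually needed, since the Removal Lemma holds verbatim over any vocabulary.)
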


\subsection{The Main Algorithm}

Theorem~\ref{theo:GKS} immediately follows from the following lemma. For a formula $\phi(x)$ and a structure $\CA$, it will be convenient to let
\[
\CS(\CA,\phi) \ \ \coloneqq \ \ \big\{ a\in A\bigmid \CA\models\phi[a]\big\}.
\]

\begin{lemma}\label{lem:GKS}
  Let $\CC$ be a nowhere dense class of $\sigma$-structures.
  Furthermore, let $\phi(x)$ be an $\FOplus$ formula. Then for every $\epsilon>0$ there is an algorithm that, given $\CA\in\CC$, computes the set $\CS(\CA,\phi)$ in time $O(|\CA|^{1+\epsilon})$.
\end{lemma}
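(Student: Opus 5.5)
The plan is to prove Lemma~\ref{lem:GKS} by induction on the rank $\rankletter\deff\rk(\phi)$, following the Splitter strategy of Lemma~\ref{lem:splitter-strategy}. Concretely, I would prove a stronger, generalised statement. Fix the nowhere dense class $\CC$, the formula $\phi(x)$ with $q\deff\rk(\phi)$, and $\epsilon>0$; set $R\deff\myrho(q,q)$ and $t\deff t(\CC,2R)$ (with slack in the radius). The statement is then by induction on a \emph{remaining game length} $j\le t$. It says: for every formula $\psi(x)$ of rank at most $q$ over a vocabulary $\tilde\sigma^{s}_{R}$ with $s\le (2R{+}1)t^2$, there is an algorithm that, given a structure $\CA'$ arising from some $\CA\in\CC$ as an induced substructure of a neighbourhood and then by removing at most $s$ elements (so that its Gaifman graph is a subgraph $H$ of $G_\CA$ and $j$ further Splitter rounds suffice to win by Lemma~\ref{lem:splitter-strategy}), computes $\CS(\CA',\psi)$ in time $O(|\CA'|^{1+\epsilon})$. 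Since rank and all parameters stay bounded by $q$ throughout, only finitely many formulas and vocabularies occur, so constants are uniform.

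Within one level, I first apply Theorem~\ref{thm:rankGaifman} to $\psi$. This yields a Boolean combination of $r$-local formulas $\lambda(x)$ of rank $\le q$ and basic local sentences of width and inner rank $\le q$, with $r\le R$. I treat the two kinds of constituents separately.

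For the local formulas, I compute an $r$-neighbourhood cover $\CX$ via Lemma~\ref{lem:nb-covers}, with clusters of radius $\le 2r$, centres $c(X)$, and degree $O(n^{\epsilon})$. For each cluster $X$ I would take $\CA'[X]$, remove the centre $c(X)$ (this is the Splitter move $W$ chosen according to Lemma~\ref{lem:splitter-strategy}, namely the centre together with the path vertices, which is at most $(2r{+}1)i$ vertices), and translate $\lambda$ by Corollary~\ref{cor:iterated-removal}. This keeps the rank at most $q$, since the removal lemma preserves $\FOplus[p,q]$. I then recurse with $j{-}1$ on the resulting structure. Evaluating at $a$ in the cluster $\CX(a)$ is correct by locality, because $N_r(a)\subseteq\CX(a)$. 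By \eqref{eq:Delta} the total size is $O(n^{1+\epsilon})$. At depth $t$ the structure is empty or trivial, so the recursion terminates, and the overall time is $O(n^{1+\epsilon'})$ for a suitably small $\epsilon'$ per level.

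For a basic local sentence with inner formula $\lambda$, I first compute $\CS(\CA',\lambda)$ by the same recursive call. I then decide whether there is a $2r$-independent set of size $m\le q$ inside that set by Lemma~\ref{lem:dist-is}; this applies because subgraphs of $\CC$-graphs remain nowhere dense. The Boolean combination is finally evaluated pointwise.

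The main obstacle is the bookkeeping of the induction. The rank must not grow under the removal lemma and the restriction to clusters; the removal lemma shows it stays $\le q$, which is exactly why Theorem~\ref{thm:rankGaifman} rather than Gaifman's theorem is needed. In addition, the nested sequence of clusters and removed vertex sets has to match the hypotheses \eqref{eq:splitter-strategy} of Lemma~\ref{lem:splitter-strategy}, with $H^{(i)}$ the current cluster, so that the recursion depth is bounded by $t$. I would check this matching carefully, choosing the splitter radius $2R$ so that clusters of radius $2r$ qualify as $N_{2R}$-neighbourhoods.
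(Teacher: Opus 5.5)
Your proposal is essentially the paper's proof, and the approach is sound. It uses the same ingredients: Theorem~\ref{thm:rankGaifman} at every stage, $r$-neighbourhood covers combined with Splitter's path-based removal sets from Lemma~\ref{lem:splitter-strategy} (with radius $2r$), and Corollary~\ref{cor:iterated-removal}, which keeps the rank bounded so that only finitely many normalised formulas over $\tilde\sigma^\ell_r$ occur. It also uses Lemma~\ref{lem:dist-is} on the subgraph-closed class for the basic local sentences, with recursion depth at most $t$.

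What needs tidying is the inductive invariant. It must carry the play history (the previous centres and graphs $H^{(i)}$ needed to compute $W$), and it must fall back to brute force on clusters of bounded size. Most importantly, it must state a level-dependent time bound such as $O(n^{1+2j\epsilon'})$ with $\epsilon'=\epsilon/(2t)$. A fixed bound $O(n^{1+\epsilon})$ at every level does not survive summing over clusters whose sizes add up to $n^{1+\epsilon'}$; the level-dependent bound is exactly what the paper's recurrence for $T(j,m)$ encodes.
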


\begin{proof}
  Let $\CC_G$ be the class of all Gaifman graphs of structures in $\CC$ and all their subgraphs. Then $\CC_G$ is nowhere dense.

  We assume
  without loss of generality that $\free(\phi)=\set{x}$ (in
  case that $\free(\phi)=\emptyset$, we replace $\phi$ with the
  formula $(\phi \wedge x{=}x)$).
  
  Let $\phi(x)$ be an $\FOplus$ formula
  with $|\free(\phi)|=1$, and let
  $\rankletter\deff\rk(\phi)$. Then, $\phi\in\FOplus[\rankletter{-}1,\rankletter]$.  Let
  \begin{equation*}
     r \ \ \coloneqq \ \ \myrho(\rankletter{-}1,\rankletter).
  \end{equation*} 
  We choose  $t\coloneqq t(\CC_G,2r)$ according to
  Lemma~\ref{lem:splitter-strategy} and let
  $\ell\coloneqq (2r{+}1)t^2$
  as in \eqref{eq:ell(C,r)}
  (for $2r$ instead of $r$). 

  Let $\epsilon>0$. Without loss of generality we assume
  $\epsilon\le\frac{1}{2}$ (which
  implies that $\epsilon^2\le \epsilon/2$).
  We let
  \begin{equation}\label{eq:eps}
    \epsilon' \ \coloneqq \ \frac{\epsilon}{2t}.
  \end{equation}
  Furthermore, we choose a
  $c_1\in\PNat$ such that for all
  graphs
  $G\in\CC_G$  
  of order $n\coloneqq|G|>c_1$ the $r$-neighbourhood cover of
  Lemma~\ref{lem:nb-covers} has maximum degree at most
  $n^{\epsilon'}$ (we can
  apply the lemma with $\epsilon=\epsilon'/2$ to achieve this). Moreover, we let $c_2\coloneqq c_1+\ell$.

  Suppose the input structure is $\CA\in\CC$ and let
  $n\coloneqq|A|$. If $n\le c_1$,
  we compute the set $\CS(\CA,\phi)$
  directly by brute force. Suppose that $n>c_1$, and let $\CX$ be an
  $r$-neighbourhood cover of $G_{\CA}$ of radius at most $2r$ and with
  maximum degree at most $n^{\epsilon'}$. 

  By Theorem~\ref{thm:rankGaifman}, $\phi$ is equivalent to a Boolean combination $\phi'(x)$ of 
  $r$-local $\FOplus$ formulas $\psi(x)$ with $\rk(\psi)\le\rankletter$ and
  basic local sentences 
  \begin{equation}\label{eq:blGKS}
  \xi \ \ \coloneqq \ \ \exists x_1\cdots\exists x_s\,\Bigl(\bigwedge_{1\le i<j\le s}\dist(x_i,x_j)>2r'\;\wedge\;\bigwedge_{1\le i\le s}\psi(x_i)\Bigr),
  \end{equation}
  where $r'\le r$, $s\le\rankletter$, and $\psi(x)$ is an $r'$-local $\FOplus$ formula with $\rk(\psi)\le\rankletter$. Since $r'\le r$, the formula $\psi(x)$ is also $r$-local.

  For each $r$-local formula $\psi(x)$ that appears in $\phi'(x)$,
  either directly in the Boolean combination or as part of a basic
  local sentence $\xi$, we shall compute the set $\CS(\CA,\psi)$. 

  Then we can evaluate the basic local sentences $\xi$ of the form
  \eqref{eq:blGKS} as follows: we note that $\CA\models\xi$ if and
  only if $\CA$
  has a $2r'$-independent set $Y\subseteq\CS(\CA,\psi)$ of order $|Y|\ge s$. We can use Lemma~\ref{lem:dist-is} to decide if this is the case in time $O(n^{1+\epsilon'})$. Once we have computed the truth values of all basic local sentences $\xi$ in $\phi'$ and the sets $\CS(\CA,\psi)$ for all local formulas $\psi(x)$ in $\phi'$, 
  we can easily evaluate the Boolean combination $\phi'$ and compute the set $\CS(\CA,\phi')=\CS(\CA,\phi)$.
  Thus it remains to compute $\CS(\CA,{\psi})$ for all formulas $\psi(x)$ in a finite set $\Psi$ of $r$-local formulas of rank at most $\rankletter$. 

  We shall recursively compute the set $\CS\big(\CA[X],\psi\big)$ for every $X\in\CX$ and every $\psi(x)\in\Psi$. Observe that if $N_r^\CA(a)\subseteq X$ then $a\in \CS(\CA,\psi)\iff a\in \CS(\CA[X],\psi)$, because $\psi(x)$ is $r$-local. Thus
  \begin{equation}\label{eq:evaluate-root}
  \CS(\CA,\psi) \ \ = \ \ \big\{\;a\in A\bigmid a\in \CS\big(\CA[\CX(a)],\psi\big)\;\big\}.
  \end{equation}
 
  Let us move on to the recursive step.
  It will be convenient to think of our recursive algorithm in terms of the recursion tree it builds. 
  Each node of the recursion tree will be indexed by a tuple $\vec X=(X_1,\ldots,X_k)$, where $k\in[0,t]$ and the $X_i$ are subsets of $A$. The root is indexed by the empty tuple $()$. In the following, we will no longer distinguish between the nodes and their indices and refer to a tuple $\vec X$ as a node of the tree. 

  For each node $\vec X$ that is not a leaf of the tree we define a
  vocabulary $\sigma_{\vec X}$, which will always be of the form
  $\tilde\sigma^i_r$ for some $i\ge 0$, a $\sigma_{\vec X}$-structure
  $\CA_{\vec X}$, its Gaifman graph $G_{\vec X}$, an $r$-neighbourhood
  cover $\CX_{\vec X}$ of $G_{\vec X}$ of radius at most $2r$ and
  maximum degree at most $|G_{\vec X}|^{\epsilon'}$, a centre map $c_{\vec X}$ for
  $\CX_{\vec X}$, and a finite set $\Psi_{\vec X}$ of $r$-local
  formulas with one free variable,
  of vocabulary $\sigma_{\vec X}$ and of rank at most
  $\rankletter$. For the root $()$, we let
  $\sigma_{()}\coloneqq\sigma=\sigma^0_r$, $\CA_{()}\coloneqq\CA$,
  $G_{()}\coloneqq G_{\CA}$, $\CX_{()}\coloneqq\CX$, and
  $\Psi_{()}\coloneqq\Psi$.   
  
  For nodes $\vec X\neq()$ that are not leaves, in addition to the
  structure $\CA_{\vec X}$, the graph $G_{\vec X}$, the neighbourhood
  cover $\CX_{\vec X}$ and the set $\Psi_{\vec X}$ of formulas,  we
  will define a graph $H_{\vec X}$, a vertex $v_{\vec X}$, and a set
  $W_{\vec X}$. They will be defined in such a way that for each node
  $\vec X=(X_1,\ldots,X_k)$ the vertices $v^{(i)}\coloneqq
  v_{(X_1,\ldots,X_i)}$, sets $W^{(i)}\coloneqq W_{(X_1,\ldots,X_i)}$,
  graphs $G^{(0)}\coloneqq G_{()}$
  and $G^{(i)}\coloneqq
  G_{(X_1,\ldots,X_i)}$, $H^{(i)}\coloneqq H_{(X_1,\ldots,X_i)}$, for
  $i\in[k]$, satisfy conditions \eqref{eq:splitter-strategy} and
  \ref{item:splitter-strategy-i}--\ref{item:splitter-strategy-iii} of
  Lemma~\ref{lem:splitter-strategy} with $2r$ instead of
  $r$. Furthermore, we define a breadth-first search tree $T_{\vec X}$
  of the graph $H_{\vec X}$ rooted in $v_{\vec X}$. 

  Consider node $\vec X=(X_1,\ldots,X_k)$. If $k\ge 1$ and $|X_k|\le c_2$, then $\vec X$ is a leaf of the tree.
  If $k=0$ or $|X_k|>c_2$, the node $\vec X$ has a child $\vec XY\coloneqq(X_1,\ldots,X_k,Y)$ for every $Y\in\CX_{\vec X}$. 
  
  Let $\vec X=(X_1,\ldots,X_{k})$ and $\vec X'=(X_1,\ldots,X_k,Y)$ for some $Y\in\CX_{\vec X}$. Our goal at the node $\vec X'$ is to compute $\CS(\CA_{\vec X}[Y],\psi)$ for every $\psi(x)\in\Psi_{\vec X}$. From these sets $\CS(\CA_{\vec X}[Y],\psi)$ we can compute 
  \begin{equation}\label{eq:eval-from-children}
  \CS(\CA_{\vec X},\psi) \ \ = \ \ \Big\{\;a\in A_{\vec X}\Bigmid a\in \CS\big(\CA_{\vec X}[\CX_{\vec X}(a)],\psi\big)\;\Big\},
  \end{equation}
  exploiting that $\psi(x)$ is $r$-local and that $\CX_{\vec X}$ is an
  $r$-neighbourhood cover of the Gaifman graph $G_{\vec X}$ of
  $\CA_{\vec X}$.
  For the root $\vec X=()$ of the recursion tree,
  this enables us to
  compute $\CS(\CA,\phi)$, as argued above.

So consider a node $\vec Y=\vec XY=(X_1,\ldots,X_k,Y)$ for some $k\in\Nat$. If $|Y|\le c_2$, then $\vec Y$ is a leaf, and we compute $\CS(\CA_{\vec X}[Y],\psi)$ for every $\psi(x)\in\Psi_{\vec X}$ by brute force.
Suppose that $|Y|>c_2$. For $i\in[0,k]$, let $\vec X^{(i)}\coloneqq(X_1,\ldots,X_{i})$ and $G^{(i)}\coloneqq G_{\vec X^{(i)}}$. Furthermore, if $k\ge 1$, let $v^{(i)}\coloneqq v_{\vec X^{(i)}}$, $W^{(i)}\coloneqq W_{\vec X^{(i)}}$, and $H^{(i)}\coloneqq H_{\vec X^{(i)}}$. 
  Observe that $Y$ is an element of the neighbourhood cover $\CX_{\vec
    X^{(k)}}$ of $G^{(k)}$ and hence $Y\subseteq V(G^{(k)})$. We let
  $v_{\vec Y}\coloneqq v^{(k+1)}\coloneqq c_{\vec X^{(k)}}(Y)$ and
  $H_{\vec Y}\coloneqq H^{(k+1)}\coloneqq G^{(k)}[Y]$. Then we choose
  $W_{\vec Y}\coloneqq W^{(k+1)}\subseteq Y$ according to
  Lemma~\ref{lem:splitter-strategy}\ref{item:splitter-strategy-ii}
  with $2r$ instead of $r$, and we let $G_{\vec Y}\coloneqq
  G^{(k+1)}\coloneqq H^{(k+1)}[N_{2r}^{H^{(k+1)}}(v^{(k+1)})\setminus
  W^{(k+1)}]$ (according to
  Lemma~\ref{lem:splitter-strategy}\ref{item:splitter-strategy-iii}).
  Observe
  that actually we have
  $G^{(k+1)}= H^{(k+1)}[Y\setminus
  W^{(k+1)}]$, because the radius of the neighbourhood cover
  $\CX_{\vec X^{(k)}}$ is at most $2r$ and
  $v^{(k+1)}=c_{\vec X^{(k)}}(Y)$, and hence
  $Y\subseteq N_{2r}^{G^{(k)}}(v^{(k+1)})$.
  \medskip
  
  If $W^{(k+1)}=\emptyset$, we
  proceed as follows.
  We let $\sigma_{\vec Y}\coloneqq\sigma_{\vec X}$ and $\CA_{\vec
    Y}\coloneqq\CA_{\vec X}[Y]$. We let $G_{\vec Y}\coloneqq
  G_{\CA_{\vec Y}}$ and compute an $r$-neighbourhood cover $\CX_{\vec
    Y}$ of $G_{\vec Y}$
  of radius at most $2r$ and maximum degree at most $|Y|^{\epsilon'}$,
  and a centre map $c_{\vec Y}$ using
  Lemma~\ref{lem:nb-covers}. Moreover, we let $\Psi_{\vec
    Y}\coloneqq\Psi_{\vec X}$. 
  The children of $\vec Y$ are $\vec YZ$ for $Z\in\CX_{\vec Y}$, and
  for all $\psi(x)\in\Psi_{\vec Y}=\Psi_{\vec X}$ we can compute
  $\CS(\CA_{\vec Y},\psi)=\CS(\CA_{\vec X}[Y],\psi)$ from the
  recursively computed sets $\CS(\CA_{\vec Y}[Z],\psi)$ for
  $Z\in\CX_{\vec Y}$ as in
  \eqref{eq:eval-from-children}.
  \medskip
  
  Now suppose that $W^{(k+1)}=\{w_1,\ldots,w_s\}\neq\emptyset$. To
  define $\sigma_{\vec Y}$, suppose that $\sigma_{\vec
    X}=\tilde\sigma^i_r$. Then we let $\sigma_{\vec
    Y}\coloneqq\tilde\sigma^{i+s}_r$ and $\CA_{\vec
    Y}\coloneqq\CA_{\vec X}[Y]\lbag_r w_1\cdots w_s$. We let $G_{\vec
    Y}\coloneqq G_{\CA_{\vec Y}}$. Note that $V(G_{\vec
    Y})=Y\setminus\{w_1,\ldots,w_s\}$, and as $|Y|>c_2=c_1+\ell$ and
  $s=|W^{(k+1)}|\le \ell$ we have $|G_{\vec Y}|>c_1$. We compute an
  $r$-neighbourhood cover $\CX_{\vec Y}$ of $G_{\vec Y}$
  of radius at most $2r$ and maximum degree at most $|Y\setminus\set{w_1,\ldots,w_s}|^{\epsilon'}$
  and a centre
  map $c_{\vec Y}$ using Lemma~\ref{lem:nb-covers}. The children of
  $\vec Y$ are $\vec YZ$ for $Z\in\CX_{\vec Y}$.  

  For each $\psi(x)\in\Psi_{\vec
    X}\subseteq\FOplus[\rankletter{-}1,\rankletter]$
  we compute
  sentences $\tilde\psi_1,\ldots,\tilde\psi_s\in\FOplus[\rankletter{-}1,\rankletter]$ and a
  formula $\tilde\psi_0(x)\in\FOplus[\rankletter{-}1,\rankletter]$ by
  Corollary~\ref{cor:iterated-removal}. The corollary shows that we
  can compute $\CS(\CA_{\vec X}[Y],\psi)$ from the truth values of the
  sentences $\tilde\psi_i$ in $\CA_{\vec Y}$ and the set
  $\CS(\CA_{\vec Y},\tilde\psi_0)$. We apply
  Theorem~\ref{thm:rankGaifman} to each of the $\tilde\psi_i$ and
  obtain a formula $\tilde\psi_i'$ in Gaifman normal form
(for each $i\in[0,s]$).
  As explained above for $\phi(x)$, from this formula in Gaifman normal
  form we obtain a set $\Psi_i$ of $r$-local formulas
  with one free variable and
  of rank at most
  $\rankletter$, such that we can reduce the evaluation of
  $\tilde\psi_i$ in $\CA_{\vec Y}$ to the evaluation of all formulas
  in $\Psi_i$ in $\CA_{\vec Y}$. We let
  $\Psi_\psi\coloneqq\bigcup_{i=0}^s\Psi_i$. Then we can reduce the
  evaluation of $\psi$ in $\CA_{\vec X}[Y]$ to the evaluation of all
  formulas in $\Psi_\psi$ in $\CA_{\vec Y}$. Finally, we let
  $\Psi_{\vec Y}\coloneqq\bigcup_{\psi\in\Psi_{\vec
      X}}\Psi_\psi$. Then we can compute $\CS(\CA_{\vec X}[Y],\psi)$
  for all $\psi\in\Psi_{\vec X}$ from the recursively computed
  $\CS(\CA_{\vec Y}[Z],\chi)$ for $Z\in\CX_{\vec Y}$ and
  $\chi(x)\in\Psi_{\vec Y}$. 
  \medskip
  
  This completes the description of the recursive algorithm. It remains to analyse the algorithm.
  We first observe that the depth of the recursion is at most $t$; this follows from Lemma~\ref{lem:splitter-strategy}. 
  Moreover, for each node $\vec X=(X_1,\ldots,X_k)$ the sizes of the
  sets $W_{\vec X_i}$ for $\vec X_i\coloneqq(X_1,\ldots,X_i)$ add up
  to at most $\ell$ by \eqref{eq:splitter2}
  (for $2r$ instead of $r$)
  and our choice of $\ell$. 
  Whenever we remove the elements of a non-empty set $W_{\vec X_i}$ of
  size $s$ we increase the vocabulary from $\sigma_{\vec
    X_{i-1}}=\tilde\sigma^j_r$ to $\sigma_{\vec
    X_i}=\tilde\sigma^{j+s}_r$. Thus $\sigma_{\vec
    X}=\tilde\sigma^{j'}_r$ for $j'=\sum_{i=1}^k|W_{\vec X_i}|$. As
  $j'\le\ell$, it follows that $\sigma_{\vec
    X}\subseteq\tilde\sigma^\ell_r\eqqcolon\sigma^*$. Note that the
  size of $\sigma^*$ only depends the input formula $\phi(x)$ (via
  $\sigma$ and $r$) and the class $\CC$ (via
  $t=t(\CC_G,2r)$), but not on
  the size of the input structure.  
  
  Thus,
  all formulas $\psi(x)\in\Psi_{\vec X}$ are of a vocabulary that is
  contained in $\sigma^*$.
  Furthermore, we may assume that all formulas in
  $\Psi_{\vec X}$ are normalised (that is, we normalise each formula
  immediately after computing it). As there is only a finite number of
  normalised formulas $\psi(x)$ of vocabulary $\sigma^*$ and of rank
  at most $\rankletter$, there is an a constant upper bound (only
  depending on $\phi$ and $\CC$) on the size of the sets $\Psi_{\vec
    X}$ and the length of the formulas in these sets. Thus the number
  and size of formulas we need to evaluate at each node only
  contributes a constant factor to the running time. 
  
  All structures $\CA_{\vec X}$ have a vocabulary contained in $\sigma^*$, which means that we can bound their size $\|\CA_{\vec X}\|$ and the size of their Gaifman graph $G_{\vec X}$ by $O(|\CA_{\vec X}|^{1+\epsilon'})$. Also note that for $\vec X=(X_1,\ldots,X_k)$ we have 
  \[
  |\CA_{\vec X}| \ \ \le \ \ |X_k| \ \ =: \ \ n_{\vec X}.
  \]
  At each non-leaf node, we have to compute the neighbourhood cover $\CX_{\vec X}$ and center map $c_{\vec X}$, which can be done in time $O(n_{\vec X}^{1+\epsilon'})$ by Lemma~\ref{lem:nb-covers}, and the breadth-first search tree $T_{\vec X}$ of $H_{\vec X}$, which can also be done in time $O(\|H_{\vec X}\|)=O(n_{\vec X}^{1+\epsilon'})$. By using the trees $T_{\vec X_i}$ of the ancestors $\vec X_i$ of $\vec X$ in the tree, we can compute $W_{\vec X}$ in constant time (cf.~Remark~\ref{rem:compute-splitter-strategy}). 
  
  Then we can compute $\CA_{\vec X}$ in time $O(\|\CA_{\vec X}\|)=O(n_{\vec X}^{1+\epsilon'})$. Finally, assuming $\vec X=\vec X'X_k$, we have to compute $\CS(\CA_{\vec X'}[X_k],\psi)$ for each $\psi(x)\in\Psi_{\vec X'}$ from the recursively computed $\CS(\CA_{\vec X}[Y],\chi)$ for $Y\in\CX_{\vec X}$ and $\chi(x)\in\Psi_{\vec X}$. 
  Combining the sets $\CS(\CA_{\vec X}[Y],\chi)$ to $\CS(\CA_{\vec
    X},\chi)$ requires time $O(\sum_{Y\in\CX_{\vec X}}|Y|)=O(n_{\vec
    X}^{1+\epsilon'})$ by \eqref{eq:Delta} since the maximum degree of
  $\CX_{\vec X}$ is at most
  $n_{\vec X}^{\epsilon'}$. 
  Computing $\CS(\CA_{\vec X'}[X_k],\psi)$ for all
  $\psi(x)\in\Psi_{\vec X'}$ from the
  sets $\CS(\CA_{\vec X},\chi)$ for
  $\chi(x)\in\Psi_{\vec X}$ involves a large constant factor to handle
  all the formulas, and may involve computing $2r'$-independent sets
  using Lemma~\ref{lem:dist-is} to evaluate basic local sentences, but
  is possible in time $O(n_{\vec X}^{1+\epsilon'})$. 

  Overall, the running time at each non-leaf node $\vec X$ is $O(n_{\vec X}^{1+\epsilon'})$ (not including the time for the recursive calls). The running time at a leaf node is constant.
  
  We obtain the following recurrence for the running time $T(j,m)$ the algorithm spends in the subtree rooted at a node $\vec X=(X_1,\ldots,X_{t-j})$ with $n_{\vec X}=m$:
  \begin{align*}
    T(0,m)\ &=\ O(1) &\text{ for all }m,\\
    T(j,m) \ &= \ O(1) &\text{for all $j\in[t]$ and all $m\le c_2$},\\
    T(j,m) \ &= \ \sum_{Y} T(j{-}1,m_Y)+O(m^{1+\epsilon'})&\text{for all $j\in[t]$ and all $m>c_2$}.
  \end{align*}
  The sum ranges over all $Y$ in the neighbourhood cover $\CX_{\vec
    X}$, and
  $m_Y\coloneqq n_{\vec X Y}$.
  Recall that
  $\sum_{Y}m_Y\le m^{1+\epsilon'}$, by \eqref{eq:Delta} using that the
  maximum degree of $\CX_{\vec X}$ is at most
  $m^{\epsilon'}$.
  Note that the clause $T(0,m)=O(1)$
  is justified by the fact that the height of the tree is at most
  $t$. Therefore, all nodes
  of the form $(X_1,\ldots,X_t)$ must be leaves.
  
  The same recurrence was obtained in \cite{GroheKS17}. The
  straightforward analysis,
  carried out in \cite{GroheKS17} (and using, among other things, the
  inequality $\sum_i y_i^p \leq (\sum_i y_i)^p$ for $y_i\geq 0$ and
  $p\geq 1$, based on the fact that the $\ell_p$ norm of a vector
  $(y_i)_i$ is bounded from above by its $\ell_1$ norm), yields
  $T(j,n)=O(n^{1+2j\epsilon'})=O(n^{1+\epsilon})$
  for all $j\in[0,t]$. In particular, this
  bounds the running time at the root 
  to $O(n^{1+\epsilon})$
  and completes the proof of
  Lemma~\ref{lem:GKS}. 
\end{proof}

\section{Conclusions}\label{sec:conclusion}
We introduce a new rank measure for formulas of first-order logic and prove a rank-preserving version of Gaifman's Theorem.

We also prove that Gaifman's Theorem does not preserve the quantifier
rank, or the \emph{outer quantifier rank}, to be precise. However, it
remains unclear how much the (inner, outer, total) quantifier rank
needs to increase when transforming a formula into Gaifman normal
form. Keisler and Lotfallah's proof of Gaifman's Theorem
\cite{KeislerLotfallah2004} yields a linear upper bound on the total
quantifier rank, which implies linear upper bounds for the inner
quantifier rank and the outer quantifier rank as well. It is an
interesting open problem to determine tight upper and lower bounds for
the different quantifier ranks and on their interplay with the width
and the radius.

\paragraph{Acknowledgements}
We thank Charlotte Lenz for bringing to our attention that Section~7
of \cite{GroheS18} contains a serious flaw in the proof of
\cite[Theorem~7.1]{GroheS18}.  Our effort in fixing this flaw led to
the results presented in this paper.
We thank Steffen van Bergerem for his feedback on an earlier version
of this paper.

\printbibliography

\end{document}